\documentclass[11pt,journal,onecolumn]{IEEEtran}
\usepackage[utf8]{inputenc} 
\usepackage[T1]{fontenc}    
\usepackage[hidelinks]{hyperref}     

\usepackage{geometry} 
\usepackage{url}            
\usepackage{booktabs}       
\usepackage{amsfonts}       
\usepackage{nicefrac}       
\usepackage{microtype}      
\usepackage{yhmath} 
\usepackage{color}
\usepackage{epsfig}
\usepackage{graphicx}
\usepackage{amsmath}
\usepackage{amssymb}
\usepackage{amsthm}
\usepackage{bm}
\usepackage{algorithm}
\usepackage{algorithmicx}
\usepackage{algpseudocode}
\usepackage[numbers,sort&compress]{natbib}
\usepackage{epstopdf}
\usepackage{subfig}
\usepackage{graphicx}
\usepackage{physics}
\usepackage{setspace}
\usepackage{multirow}
\usepackage{authblk}
\setcitestyle{numbers,square,comma}
\graphicspath{{figures/}}
\allowdisplaybreaks

\newcommand{\bx}{\bm{x}}
\newcommand{\bs}{\bm{s}}

\newcommand{\bz}{\bm{z}}

\newcommand{\bX}{\bm{X}}
\newcommand{\bZ}{\bm{Z}}

\newcommand{\bL}{\bm{L}}
\newcommand{\bS}{\bm{S}}

\newcommand{\bM}{\bm{M}}
\newcommand{\bU}{\bm{U}}
\newcommand{\bV}{\bm{V}}

\newcommand{\bLa}{\bm{\Lambda}}
\newcommand{\bO}{\mathbf{0}}
\newcommand{\bP}{\bm{P}}
\newcommand{\bQ}{\bm{Q}}

\newcommand{\oM}{\widehat{\bm{M}}}
\newcommand{\oL}{\widehat{\bm{L}}}

\newcommand{\oZ}{\widehat{\bm{Z}}}

\newcommand{\ocH}{\widehat{\mathcal{H}}}

\newcommand{\hatL}{\widehat{\bL}}

\newcommand{\hatZ}{\widehat{\bZ}}
\newcommand{\hatcH}{\widehat{\cH}}
\newcommand{\cH}{\mathcal{H}}

\newcommand\numberthis{\addtocounter{equation}{1}\tag{\theequation}}

\DeclareMathOperator*{\minimize}{\mathrm{minimize}}
\DeclareMathOperator*{\subject}{\mathrm{subject~to~}}

\def\rank{\mathrm{rank}}
\def\supp{\mathrm{supp}}
\def\C{\mathbb{C}}
\def\la{\left\langle}
\def\ra{\right\rangle}
\def\lb{\left(}
\def\rb{\right)}

\def\ln{\left\|}
\def\rn{\right\|}

\def\aug{\widetilde}

\newtheorem{theorem}{Theorem}
\newtheorem{lemma}[theorem]{Lemma}

\newtheorem{definition}[theorem]{Definition}

\begin{document}

\title{Accelerated Structured Alternating Projections for Robust Spectrally Sparse Signal Recovery}

\author{HanQin Cai, Jian-Feng Cai, Tianming Wang, and Guojian Yin%
\thanks{H.Q. Cai is with the Department of Mathematics, University of California, Los Angeles, Los Angeles, California, USA (hqcai@math.ucla.edu).}
\thanks{J.-F. Cai is with the Department of Mathematics, Hong Kong University of Science and Technology, Clear Water Bay, Kowloon, Hong Kong SAR, China (jfcai@ust.hk).}
\thanks{T. Wang is with the School of Economic Mathematics, Southwestern University of Finance and Economics, Chengdu, Sichuan, China (corresponding author, wangtm@swufe.edu.cn).}
\thanks{G. Yin is with the Institute for Advanced Study, Shenzhen University, Shenzhen, Guangdong, China (yin@szu.edu.cn).}
}

\markboth{}{}

\maketitle

\begin{abstract}
Consider a spectrally sparse signal $\bx$ that consists of $r$ complex sinusoids with or without damping. We study the robust recovery problem for the spectrally sparse signal under the fully observed setting, which is about recovering $\bx$ and a sparse corruption vector $\bs$ from their sum $\bz=\bx+\bs$. In this paper, we exploit the low-rank property of the Hankel matrix formed by $\bx$, and formulate the problem as the robust recovery of a corrupted low-rank Hankel matrix. We develop a highly efficient non-convex algorithm, coined Accelerated Structured Alternating Projections (ASAP). The high computational efficiency and low space complexity of ASAP are achieved by fast computations involving structured matrices, and a subspace projection method for accelerated low-rank approximation. Theoretical recovery guarantee with a linear convergence rate has been established for ASAP, under some mild assumptions on $\bx$ and $\bs$. Empirical performance comparisons on both synthetic and real-world data confirm the advantages of ASAP, in terms of computational efficiency and robustness aspects.
\end{abstract}

\begin{IEEEkeywords}
Spectrally sparse signal, sparse outliers, low-rank Hankel matrix recovery, alternating projections
\end{IEEEkeywords}

\section{Introduction} \label{sec:intro}

In this paper, we study the corrupted spectrally sparse signal recovery problem under the fully observed setting.
Denote the imaginary unit by $\imath$. Let $x(t)$ be a continuous one-dimensional spectrally $r$-sparse signal; that is, $x(t)$  is a weighted superposition of $r$ complex sinusoids,
\begin{equation} \label{eq:cont_sss_definition}
    x(t)=\sum^r_{j=1} a_j e^{2\pi\imath f_j t - d_j t},
\end{equation}
where $a_j$, $f_j$ and $d_j$ represent non-zero complex amplitude, normalized frequency and damping factor of the $j$-th sinusoid, respectively. Let the column vector
\begin{equation*}
    \bx=[x(0),x(1),\cdots,x(n-1)]^T\in\C^n
\end{equation*}
denote the discrete samples of $x(t)$.

The spectrally sparse signal, as defined in \eqref{eq:cont_sss_definition}, appears in a wide range of applications including seismic imaging \citep{borcea2002imaging}, analog-to-digital conversion \citep{tropp2009beyond,cohen2018analog}, nuclear magnetic resonance (NMR) spectroscopy \citep{li1998high,holland2011fast,qu2015accelerated,nguyen2013denoising}, and fluorescence microscopy \citep{schermelleh2010guide}. Due to the malfunction of data acquisition sensors, the recorded signals are often corrupted by impulse noise, e.g., the baseline distortions in NMR \citep{xi2008baseline}. Traditional harmonic retrieval methods, such as Prony's method, ESPRIT \citep{roy1989esprit}, the matrix pencil method \citep{hua1990matrix}, and the finite rate of innovation approach \citep{vetterli2002sampling}, are often sensitive against such corruptions \citep{dragotti2007sampling}, and cannot be applied directly. Since the amount of impulse noise corruptions is often relatively small compared to the signal size, we can consider them as sparse corruptions. It is thus important to remove such sparse corruptions and recover the original signal accurately. 

For undamped signal with frequencies lying on the grids, the $\ell_1$ minimization based robust compressed sensing approach can successfully recover the original signal from corruptions \citep{li2013compressed}. However, the performance of this approach degrades if there is mismatch between the assumed on-the-grid frequencies and the true frequencies
\citep{chi2011sensitivity}. To handle the off-the-grid situation,
\citep{fernandez2017demixing} proposes an approach based on
total-variation norm minimization that can remove corruptions from the signal when the frequencies are sufficiently separated. Its theoretical results rely on the assumption that phases of the amplitude of the signal and the sparse components are uniformly distributed. The total-variation based minimization requires solving a SDP, which is computationally expensive in general.

Hankel formulation is another way to handle the off-the-grid frequencies, with an additional advantage to be able to model damping in the signals appeared in, e.g., NMR. In \citep{chen2014robust}, a convex method called Robust-EMaC is introduced with guaranteed recovery. It penalizes the nuclear norm to enforce the low-rank property of the Hankel matrix formed by the signal and the elementwise $\ell_1$ norm to promote the sparsity of Hankel matrix formed by the outliers. Its original formulation also requires solving a SDP. More recently, a non-convex algorithm named SAP has been proposed in \citep{zhang2018correction}. SAP alternatively projects the estimate of the Hankel matrix formed by the signal onto the set of low rank matrices, and the estimate of the outliers onto the set of sparse vectors. To avoid the potential negative effects from ill-conditioned matrices, SAP gradually increases the rank estimation from $1$ to $r$. Upon proper initialization, SAP has guaranteed linear convergence.
 
In this paper, we use the tangent space projection technique \citep{cai2019fast,cai2017accelerated} to further reduce the computational cost of SAP. The proposed method is coined Accelerated Structured Alternating Projections (ASAP), which costs significantly less flops per iteration and also has guaranteed linear convergence. Table~\ref{tab:comparison} compares Robust-EMaC, SAP, and ASAP in terms of tolerable corruption ratio, memory cost, and flops per iteration. Under the assumption that the locations of the outliers follow the Bernoulli distribution, Robust-EMaC is able to tolerate a (small) constant fraction of outliers. When employing a first-order solver, Robust-EMaC generally takes $\mathcal{O}(n^2)$ memory, and costs $\mathcal{O}(n^3)$ flops per iteration. SAP and ASAP do not impose restrictions on the outlier distribution, and their tolerable corruption ratios depend on some properties of the Hankel matrix formed by $\bx$. The definitions of $\mu$, $c_s$, and $\kappa$ are deferred to the next subsection. SAP takes $\mathcal{O}(nr)$ memory, and costs $\mathcal{O}(r^2n\log(n)$ flops per iteration, where the hidden constant in the complexity is large and depends on the relative gaps between the singular values of the corresponding Hankel matrix. Due to the tangent space projection technique, ASAP only costs $\mathcal{O}(r^2n+rn\log(n))$ flops per iteration, where the hidden constant is a fixed small number. In fact, ASAP can achieve $10\times$ speedup over SAP for large $n$ and small $r$.

\begin{table}[t]
\caption{Comparisons of Robust-EMaC  \citep{chen2014robust}, SAP \citep{zhang2018correction}, and ASAP [this paper]. Tolerance is the tolerable corruption ratio.
}
\label{tab:comparison}
\begin{center}
\begin{tabular}{| c | c | c |c|}
\hline 
& Tolerance & Memory & Flops Per Iteration \\
\hline
Robust-EMaC & $\mathcal{O}(1)$ & $\mathcal{O}(n^2)$ & $\mathcal{O}(n^3)$ \\
\hline
SAP & $\mathcal{O}(\frac{1}{\mu c_s r})$ & $\mathcal{O}(nr)$ & $\mathcal{O}(r^2n\log(n))$  \\
\hline
ASAP & $\mathcal{O}(\frac{1}{(\mu c_s r\kappa)^2})$ & $\mathcal{O}(nr)$ & $\mathcal{O}(r^2n+rn\log(n))$ \\
\hline
\end{tabular}
\end{center}
\end{table}

Our main contributions are two-fold. Firstly, we proposed a nonconvex method, ASAP, which has significantly improved computational efficiency over the state-of-art methods. Experiments also suggest that ASAP has similar tolerance for corruptions as SAP, despite the more stringent bound in theory. Secondly, we establish the local linear convergence of ASAP, and provide the initialization scheme. Although our analysis builds upon \citep{cai2017accelerated}, the extension to the Hankel case is by no means trivial. For one thing, the analysis in our case needs to deal with both the low-rank and the Hankel structure. We also improve the analysis in several aspects, such as removing the unnecessary trimming steps appeared in \citep{cai2017accelerated}.

\subsection{Formulation and Assumptions}

In this subsection, we describe the problem formulation in details, and present our assumptions for the theoretical analysis. Suppose we receive a corrupted signal $\bz\in\C^n$, which is the sum of an underlying spectrally sparse signal and some sparse corruptions, namely $\bz=\bx+\bs$. Our goal here is to recover $\bx$ and $\bs$ from $\bz$ simultaneously. This vector separating problem can be expressed as:
\begin{equation} \label{eq:vector model}
\begin{split}
    &\minimize_{\bx',\bs'} \|\bz-\bx'-\bs'\|_2^2  \\
    &\subject \bx' \textnormal{ is spectrally sparse and } \bs' \textnormal{ is sparse}.
\end{split}
\end{equation} 
That is, we are seeking a spectrally sparse vector $\bx'$ and sparse vector $\bs'$ 
such that their sum fits best to the given vector $\bz$.

One line of research works \citep{chen2014robust,cai2016robust,cai2019fast,cai2018spectral,xu2018sep} exploit the spectral sparsity of $\bx$ by the low-rankness of the Hankel matrix $\cH(\bx)$. Here, $\cH:\mathbb{C}^n\rightarrow \mathbb{C}^{n_1\times n_2}$ is a mapping from a complex vector to a complex Hankel matrix, where $n=n_1+n_2-1$. For $\bm{x}=[x_0;x_1;\cdots;x_{n-1}]\in\mathbb{C}^{n}$, we have%
\begin{align} \label{eq:Hankel definition}
\cH(\bm{x}) = 
\begin{bmatrix}
x_0 & x_1  & \cdots & x_{n_2-1}\\
x_1 & x_2  & \cdots & x_{n_2}  \\
\vdots &  \vdots & \cdots  &\vdots\\
x_{n_1-1} & x_{n_1}  & \cdots   &x_{n-1}
\end{bmatrix}\in\C^{n_1\times n_2}.
\end{align}
A good choice of the matrix size is $n_1\approx n_2$, i.e., we want to construct a nearly squared Hankel matrix \citep{chen2014robust}. Without loss of generality, throughout this paper, we use $n_1=n_2=(n+1)/2$ if $n$ is odd, and $n_1=n_2-1=n/2$ if otherwise.

For the spectrally $r$-sparse signal $\bx\in\mathbb{C}^n$, $\cH(\bx)$ has a Vandermonde decomposition in the form of
$$
\cH(\bx)=\bm{E}_L \bm{D} \bm{E}_R^T,
$$
where $\bm{D}=\textnormal{diag}{([a_1,\cdots,a_r])}$, $[\bm{E}_L]_{i_1,j}=\omega^{i_1-1}_j$, $[\bm{E}_R]_{i_2,j}=\omega^{i_2-1}_j$, and $\omega_j=e^{2\pi\imath f_j - d_j}$ for $i_1\in\{1,\cdots,n_1\}$, $i_2=\{1,\cdots,n_2\}$ and $j\in\{1,\cdots,r\}$. If the pairs $(f_j,d_j)$ are different, the left and right matrices in this Vandermonde decomposition are both of full rank. Thus, $\cH(\bx)$ is rank $r$ provided all the complex amplitudes $\{a_j\}$ are non-zero. It is worth mentioning that the low-rank Hankel matrix also appears in many other applications, such as impulse noise removal \citep{Jin2018}, magnetic resonance imaging (MRI) \citep{haldar2014low,jin2016general}, dynamical system identification \citep{shah2012linear,fazel2013hankel}, and autoregressive moving average \citep{kung1983state,akaike1998markovian}. Sparse corruptions also appear in these applications, e.g., the Herringbone artifact in MRI \citep{jin2017mri}.

With the properties of Hankel operator $\cH$, we can reformulate the corrupted spectrally sparse signal recovery problem \eqref{eq:vector model} into the robust recovery of a corrupted low-rank Hankel matrix: 
\begin{equation} \label{eq:matrix model 1}
\begin{split}
        &\minimize_{\bx', \bs'} \|\cH(\bz)-\cH(\bx'+\bs')\|_F^2         \\
        &\subject \rank(\cH(\bx')) = r \textnormal{ and } \|\bs'\|_0 \leq \alpha n.
\end{split}
\end{equation}
Here, $\|\cdot\|_F$ denotes the Frobenius norm of matrices, 
$\|\cdot\|_0$ counts the number of non-zero entries, and $\alpha$ is the expected sparsity level of the underlying corruptions. Note that $\cH$ is an injective mapping. Hence, the reconstruction of $\bx$ and $\cH(\bx)$ are equivalent.

Without any further assumptions, the optimization problem \eqref{eq:matrix model 1} is clearly ill-posed. Inspired by robust principal component analysis (RPCA) \citep{wright2009robust,netrapalli2014non,NIPS2016_b5f1e8fb,cai2017accelerated}, we assume that $\cH(\bx)$ is not too sparse, and $\bs$ is not too dense. These assumptions are formalized in \nameref{assume:Inco} and \nameref{assume:Sparse}, respectively.

\noindent\paragraph*{A1}\label{assume:Inco} 
\textit{The Hankel matrix $\cH(\bx)\in \mathbb{C}^{n_1\times n_2}$ corresponding to the underlying spectrally $r$-sparse signal $\bx\in\C^{n}$ is ${\mu}$-incoherent, that is
\begin{equation*}
\|\bm{U}\|_{2,\infty}\leq \sqrt{\frac{\mu c_s r}{n}} \quad\textnormal{and}\quad \| \bm{V}\|_{2,\infty}\leq \sqrt{\frac{\mu c_s r}{n}},
\end{equation*}
where $\|\cdot\|_{2,\infty}$ is the maximum of the $l_2$ norms of the rows, $c_s:=\max\{n/n_1,n/n_2\}$, and $\bm{U}\bm{\Sigma} \bm{V}^*$ is the singular value decomposition (SVD) of $\cH(\bx)$.}

Theoretically, in the undamped case, \nameref{assume:Inco} holds if the minimum wrap-around distances between the frequencies $\{f_j\}$ are greater than $2/n$ \citep[Theorem 2]{liao2016music}. Empirically, we find that \nameref{assume:Inco} is commonly satisfied for randomly generated spectrally sparse signals and real-world examples, with or without damping factors. 

\paragraph*{A2}\label{assume:Sparse}
\textit{The corruption vector $\bs\in\C^n$ is $\alpha $-sparse, i.e., $\bs$ has at most $\alpha n$ non-zero entries. In this paper, we assume $\alpha\lesssim\mathcal{O}\left(\frac{1}{(\mu c_s r\kappa)^2}\right),$
where $\kappa$ is the condition number of $\cH(\bx)$. 
}

The notation `$\lesssim$' means that there exists an absolute constant $c>0$ such that $\alpha$ is bounded by $c$ times the right hand side. Let $\sigma_i^x$ denote the $i$-th singular value of $\cH(\bx)$, $\kappa=\sigma_{1}^x/\sigma_{r}^x$. If the minimum warp-around distances are greater than $2/n$, then the condition number $\kappa$ of the corresponding Hankel matrix depends on the largest and smallest magnitudes of the complex amplitudes \citep[Remark~1]{cai2019fast}.  
Essentially, \nameref{assume:Sparse} states that there cannot be too many corruptions in the signal. 
Note that we make no assumption on the distribution of the corruptions since $\cH(\bs)$ preserves the sparsity of $\bs$. Indeed, if $\bs\in\C^n$ is $\alpha$-sparse, there are no more than $\alpha n$ non-zero entries in each row and column of $\cH(\bs)$.

\subsection{Notations and Paper Organization}

In this paper, we denote column vectors by bold lowercase letters (e.g., $\bm{v}$), matrices by bold capital letters (e.g., $\bm{M}$), and operators by calligraphic letters (e.g., $\mathcal{P}$). For any vector $\bm{v}$, $\|\bm{v}\|_2$ and $\|\bm{v}\|_\infty$ denotes the $\ell_2$ norm and $\ell_\infty$ norm of $\bm{v}$, respectively. For any matrix $\bm{M}$, $\|\bm{M}\|_{2,\infty}$ denotes the maximum of the $l_2$ norms of the rows, $[\bm{M}]_{i,j}$ denotes its $(i,j)$-th entry, $\|\bm{M}\|_\infty=\max_{i,j} |[\bm{M}]_{i,j} |$ denotes the maximum magnitude among its entries, $\sigma_i(\bm{M})$ denotes its $i$-th singular value, $\|\bm{M}\|_2=\sigma_1(\bm{M})$ denotes its spectral norm, $\|\bM\|_F=\sqrt{\sum_i \sigma_i^2(\bm{M})}$ denotes its Frobenius norm, and $\|\bM\|_\ast=\sum_i \sigma_i(\bM)$ denotes its nuclear norm. Furthermore, $\la\cdot,\cdot\ra$, $\overline{(\cdot)}$, $(\cdot)^T$, and $(\cdot)^\ast$ denote the inner product, conjugate, transpose, and conjugate transpose, respectively.

In particular, we use $\bm{e}_i$ to denote the $i$-th canonical basis vector, $\bm{I}$ to denote the identity matrix, and $\mathcal{I}$ to denote the identity operator. Throughout this paper, $\bm{L}=\cH(\bx)$ denotes the underlying rank $r$ Hankel matrix. At the $k$-th iteration, the estimates of $\bL$, $\bx$ and $\bs$ are denoted by $\bL_k$, $\bx_k$ and $\bs_k$, respectively. 

We organize the rest of the paper as follows. Sections~\ref{subsec:main_algo} and \ref{subsec:init} present the proposed main algorithm and the corresponding initialization scheme, respectively. The theoretical results of the proposed algorithm are presented in Section~\ref{subsec:recovery}, followed by the discussion on extending the algorithm and theoretical results to the multi-dimensional cases. Section~\ref{sec:numberical} contains extensive numerical experiments of our algorithm, on both synthetic and real-world datasets. All the mathematical proofs of our theoretical results are presented in Section~\ref{sec:proof}. The paper is concluded with some future directions in Section~\ref{sec:conclusion}.
\section{Algorithms} \label{sec:Algorithm}

It is clear that robust low-rank Hankel matrix recovery problem \eqref{eq:matrix model 1} can be viewed as a RPCA problem, and we can solve it with any off-the-shelf RPCA algorithm. However, without taking advantage of the Hankel structure, we cannot achieve the optimal computational efficiency and robustness. Inspired by an accelerated alternating projections (AccAltProj) algorithm for RPCA introduced in \citep{cai2017accelerated}, we present an algorithm for problem \eqref{eq:matrix model 1}, dubbed Accelerated Structured Alternating Projections (ASAP). While enjoying theoretical guaranteed recovery, the proposed algorithm has improved computational efficiency compared to state-of-art methods.

The proposed algorithm proceeds in two phases. In the first phase, we initialize the algorithm by one step alternating projections. In the second phase, we project $\bz-\bx_k$ onto the space of sparse vectors to get the update $\bs_{k+1}$, and then compute the update $\bx_{k+1}$ from the accelerated rank $r$ approximation of $\cH(\bz-\bs_{k+1})$.

\subsection{Main Algorithm}  \label{subsec:main_algo}

Firstly, we will discuss the second phase - the main algorithm, which is summarized in Algorithm~\ref{Algo:Algo1}. 

\begin{algorithm}[t]
\caption{\textbf{A}ccelerated \textbf{S}tructured \textbf{A}lternating \textbf{P}rojections} \label{Algo:Algo1}
\begin{algorithmic}[1]
\State \textbf{Input:} $\bz=\bx+\bs$: observed corrupted signal; $r$: model order; $\varepsilon$: target precision level; $\beta$: thresholding parameter; $\gamma$: thresholding decay parameter.
\State \textbf{Initialization} and set $k=0$
\While{ \texttt{$\|\bz-\bx_{k}-\bs_{k}\|_2/\|\bz\|_2 \geq \varepsilon$} }
    \State $\zeta_{k+1}= \beta~\gamma^{k}\sigma_1(\bL_{k}) $
    \State $\bs_{k+1}=\mathcal{T}_{\zeta_{k+1}}(\bz-\bx_{k})$
    \State $\bL_{k+1}=\mathcal{D}_r\mathcal{P}_{T_{k}}\cH(\bz-\bs_{k+1})$
    \State $ \bx_{k+1} = \cH^{\dagger}(\bm{L}_{k+1}) $
    \State $k=k+1$
\EndWhile
\State \textbf{Output:}  $\bx_{k}$
\end{algorithmic}
\end{algorithm}

Define the hard thresholding operator $\mathcal{T}_\zeta$ as
\begin{equation*}
    [\mathcal{T}_{\zeta}\bm{v}]_{t} =
    \begin{cases}
    [\bm{v}]_{t} & \quad|[\bm{v}]_{t}| >\zeta,\\
    0  & \quad\mbox{otherwise.}
    \end{cases}
\end{equation*} ASAP starts with updating the estimate of $\bs$ by projecting $\bz-\bx_{k}$ onto the space of sparse vectors:
\begin{align*}
    \bm{s}_{k+1}=\mathcal{T}_{\zeta_{k+1}}(\bz-\bx_{k}).
\end{align*}
The key to successful isolation of corruptions is the choice of proper thresholding value. At the $(k+1)$-th iteration, ASAP selects the hard thresholding value as 
\begin{equation*}
    \zeta_{k+1}= \beta~\gamma^{k}\sigma_1(\bL_{k}),
\end{equation*}
where $\beta$ is a positive tuning parameter, $\gamma\in(0,1)$ is a decay parameter, and $\sigma_1(\bL_{k})$ has been computed in the previous iteration (see \eqref{eq:step:get new SVD} later). Thus, the computational cost of $\zeta_{k+1}$ is negligible, and the total cost of updating $\bs$ is $\mathcal{O}(n)$ flops.

Next, we will update the estimate of $\bL$, i.e., $\cH(\bx)$. We consider a low-dimensional subspace $T_k$ formed by the direct sum of the column and row spaces of $\bL_k$, i.e.,
\begin{equation}  \label{eq:tangent space tilde k}
T_k=\left\{\bU_k\bm{A}^*+\bm{B}\bV_k^* ~|~\bm{A}\in\mathbb{C}^{n_2\times r}, \bm{B}\in\mathbb{C}^{n_1\times r} \right\},
\end{equation}
where $\bL_k=\bU_k{\bm{\Sigma}}_k\bV_k^*$ is its SVD. The subspace $T_k$ can be viewed as the tangent space of the rank $r$ matrix manifold at $\bL_k$ \citep{vandereycken2013low}, and it has been widely studied in the low-rank matrices related recovery problems \citep{absil2009optimization,recht2011simpler,ngo2012scaled,mishra2014fixed,wei2016recovery}. Moreover, for any $\bM\in\C^{n_1\times n_2}$, the projection of $\bM$ onto the low-dimensional subspace $T_k$ can be computed by
\begin{equation}  \label{eq:projection onto tangent space tilde k}
\mathcal{P}_{T_k}\bM=\bU_k\bU_k^*\bM+\bM\bV_k\bV_k^*-\bU_k\bU_k^*\bM\bV_k\bV_k^*.
\end{equation}

To get new estimate $\bL_{k+1}$, we first project Hankel matrix $\cH(\bz-\bs_k)$ onto the low-dimensional subspace $T_k$, and then project onto the set of rank $r$ matrices. That is
\begin{equation}  \label{eq:step: update L}
    \bm{L}_{k+1}=\mathcal{D}_r\mathcal{P}_{T_{k}}\cH(\bz-\bs_{k}),
\end{equation}
where $\mathcal{D}_r$ computes the nearest rank $r$ approximation via truncated SVD. 
Although there is a SVD in this step, we can compute it efficiently by using the properties of the low-dimensional subspace $T_k$ \citep{vandereycken2013low,wei2016recovery,cai2017accelerated,cai2019fast}. Denote $\bm{H}_k:=\cH(\bz-\bs_k)$. Let $(\bm{I}-\bV_k\bV_k^*)\bm{H}_k^*\bU_k=\bm{Q}_1\bm{R}_1$ and $(\bm{I}-\bU_k\bU_k^*)\bm{H}_k\bV_k=\bm{Q}_2\bm{R}_2$ be the QR-decompositions. Since $\bV_k \perp \bm{Q}_1$ and $\bU_k \perp \bm{Q}_2$,
\begin{equation*}
\mathcal{P}_{T_{k}} \bm{H}_k 
		= [\bU_k~\bm{Q}_2] \bm{M}_k [\bV_k~\bm{Q}_1]^*,
\end{equation*}
where 
$\bM_k:=\begin{bmatrix}  
\bU_k^*\bm{H}_k\bV_k & \bm{R}_1^*  \\ 
\bm{R}_2 & \bO   
\end{bmatrix}$ 
is a $2r\times 2r$ matrix. Let $\bm{U}_{M}\bm{\Sigma}_{M}\bm{V}_{M}^*$ be the rank $r$ truncated SVD of $\bM_k$. Since both $[\bU_k~\bm{Q}_2]$ and $[\bV_k~\bm{Q}_1]$ are orthonormal matrices, we can obtain the SVD of $\bL_{k+1}$ as
\begin{align}  \label{eq:step:get new SVD}
&~\left([\bU_k~\bm{Q}_2]\bm{U}_{M}\right) \bm{\Sigma}_{M} \left([\bV_k~\bm{Q}_1]\bm{V}_{M}\right)^*.
\end{align}
Altogether, the computation of \eqref{eq:step: update L} consists of the multiplication between a $n_1 \times n_2$ Hankel matrix and a $n_2 \times r$ matrix, the multiplication between a $n_2 \times n_1$ Hankel matrix and a $n_1 \times r$ matrix, two QR-decompositions of sizes $n_1 \times r$ and $n_2\times r$, and a truncated SVD of a $2r\times 2r$ matrix. The Hankel matrix-vector multiplication can be computed efficiently without forming the Hankel matrix explicitly via FFT, which costs only $\mathcal{O}(n\log(n))$ flops \citep{lu2015fast}. Hence, updating $\bL$ estimate requires $\mathcal{O}(r^2n+rn\log(n)+r^3)$ flops where the hidden constant is a fixed small number.

Finally, we update the estimate of $\bx$ from $\bL_{k+1}$:
\begin{equation} \label{eq:update x}
    \bx_{k+1}= \cH^\dagger(\bL_{k+1}),
\end{equation}
where $\cH^{\dagger}:\C^{n_1 \times n_2}\rightarrow \C^n$ denote the left inverse of $\cH$, i.e., $\cH^\dagger\cH=\mathcal{I}$. For any matrix $\bM\in \C^{n_1 \times n_2}$,
\begin{equation} \label{eq:Hankel inverse mapping definition}
    [\cH^{\dagger}(\bM)]_t=\frac{1}{\rho_t}\sum_{a+b=t}[\bM]_{a,b}
\end{equation}
for $0\leq t \leq n-1$, where $\rho_t$ is the number of entries on the $t$-th anti-diagonal of $\bM$. Furthermore,
\begin{equation*}  
    \cH^\dagger(\bL_{k+1})= \sum_{j=1}^r [\bm{\Sigma}_{k+1}]_{j,j} \cH^\dagger\left([\bU_{k+1}]_{:,j} \left([\bV_{k+1}]_{:,j}\right)^*\right),
\end{equation*}
where 
$$[\cH^\dagger([\bU_{k+1}]_{:,j} ([\bV_{k+1}]_{:,j})^*)]_t=\frac{1}{\rho_t}\sum_{a+b=t} [\bU_{k+1}]_{a,j} [\overline{\bV}_{k+1}]_{b,j},$$ 
and it can be computed via fast convolution. Thus, the computational costs of \eqref{eq:update x} is $\mathcal{O}(rn\log(n))$.

Although the last three steps of ASAP involve $n_1\times n_2$ matrices, the entire process does not require forming these matrices explicitly. We only need to store the corresponding vector for the Hankel matrix and the SVD components for the rank $r$ matrix. Therefore, the space complexity of ASAP is $\mathcal{O}(rn)$ instead of $\mathcal{O}(n^2)$.

\subsection{Initialization}  \label{subsec:init}

To apply the proposed acceleration method for rank $r$ matrix projection, i.e., \eqref{eq:step: update L}, we need to form the low-dimensional subspace $T_0$ by the singular vectors of a reasonably estimated $\bL_0$. We propose an initialization method based on one-step alternating projections, which is summarized in Algorithm~\ref{Algo:Init1}. The primary difference between the initialization scheme and the main algorithm is a truncated SVD (without acceleration) needs to be computed instead. However, the matrix-vector multiplications involved in the truncated SVD of a Hankel matrix can be computed via FFT without forming the Hankel matrix explicitly; that is, the computational complexity for updating $\bL_0$ is $\mathcal{O}(rn\log(n))$ with a large hidden constant depending on the gaps between the singular values, and the space complexity remains $\mathcal{O}(rn)$ as we only need to store the singular vectors of $\bL_0$. It is worth mentioning that when we make the initial guess of $\bs$, a thresholding parameter $\beta_{init}$ is used to offset the spectral perturbation caused by corruptions, which may be turned differently than $\beta$ in Algorithm~\ref{Algo:Algo1}.

\begin{algorithm}[t]
\caption{Initialization}\label{Algo:Init1}
\begin{algorithmic}[1]
\State \textbf{Input:} $\bz=\bx+\bs$: observed corrupted signal; $r$: model order; $\beta_{init}$: thresholding parameter for initialization.
\State $\zeta_{0} = \beta_{init} ~ \sigma_1(\cH(\bz))$
\State $\bs_{0}=\mathcal{T}_{\zeta_{0}}(\bz)$
\State $\bL_0=\mathcal{D}_r\cH(\bz-\bs_{0})$
\State $\bx_{0}=\cH^{\dagger}(\bL_{0})$
\State \textbf{Output:} $\bL_0$, $\bx_0$.
\end{algorithmic}
\end{algorithm}

In summary, the overall space complexity of the new algorithm is $\mathcal{O}(rn)$, the computational cost per iteration is $\mathcal{O}(r^2n+rn\log(n))$ flops with fixed small constant in the front, and additional $\mathcal{O}(rn\log(n))$ flops with relatively large hidden constant at initialization. The computational and space efficiency of ASAP is then established when $r$ is small and $n$ is large, which is later confirmed again by the empirical experiments in Section~\ref{sec:numberical}. For reader's convenience, a sample MATLAB implementation of ASAP is provided at
\begin{center}
    \textit{\url{https://github.com/caesarcai/ASAP-Hankel}}.
\end{center}
\subsection{Recovery Guarantee}  \label{subsec:recovery}

The theoretical results of the proposed algorithm are presented in this section while the proofs are presented later in Section~\ref{sec:proof}. We begin with the local convergence guarantee of Algorithm~\ref{Algo:Algo1}.

\begin{theorem}[Local Convergence]  \label{thm:local convergence}
Let $\bx, \bs\in\C^{n}$ satisfy Assumptions \nameref{assume:Inco} and \nameref{assume:Sparse}, respectively. Suppose $\beta=\frac{\mu c_s r}{2\kappa n }$ and $\gamma\in\left(\frac{1}{\sqrt{12}},1\right)$. Denote $\tau=4\alpha\mu c_s r\kappa$. If the initial guess obeys the following conditions:
\[
\|\bL-\bL_0\|_2 \leq 2\tau \sigma_r^x,\quad
\|\bx-\bx_0\|_\infty \leq \frac{\tau-2\tau^2}{8\alpha \kappa n } \sigma_r^x,
\]
and $\bL_0$ is $4\mu \kappa^2$-incoherent,
then iterates of Algorithm~\ref{Algo:Algo1} satisfy 
\[
\|\bL-\bL_k\|_2 \leq 2\tau \gamma^k\sigma_r^x,\quad
\|\bx-\bx_k\|_\infty \leq \frac{\tau-2\tau^2}{8\alpha \kappa n } \gamma^k\sigma_r^x,
\]
and $\bL_k$ is also $4\mu \kappa^2$-incoherent.
\end{theorem}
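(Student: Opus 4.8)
The plan is to prove the theorem by induction on $k$: the stated initial conditions serve as the base case, and the three invariants---the spectral bound $\|\bL-\bL_k\|_2\le 2\tau\gamma^k\sigma_r^x$, the $\ell_\infty$ bound $\|\bx-\bx_k\|_\infty\le \frac{\tau-2\tau^2}{8\alpha\kappa n}\gamma^k\sigma_r^x$, and the $4\mu\kappa^2$-incoherence of $\bL_k$---are propagated from step $k$ to step $k+1$, each gaining one factor of $\gamma$. Throughout I would write $\bE_{k+1}:=\cH(\bs-\bs_{k+1})$ for the residual corruption in Hankel form and $T$ for the true tangent space at $\bL=\cH(\bx)$.

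First I would analyze the hard-thresholding step. Writing $\bz-\bx_k=\bs+(\bx-\bx_k)$, off $\supp(\bs)$ the entries equal those of $\bx-\bx_k$; the inductive $\ell_\infty$ bound keeps these strictly below the threshold $\zeta_{k+1}=\beta\gamma^k\sigma_1(\bL_k)$ (a short computation using $\beta=\frac{\mu c_s r}{2\kappa n}$ and $\tau=4\alpha\mu c_s r\kappa$ verifies exactly this), which yields the essential no-false-positive property $\supp(\bs_{k+1})\subseteq\supp(\bs)$. On $\supp(\bs)$ the per-entry residual is at most $\zeta_{k+1}+\|\bx-\bx_k\|_\infty$, so inserting $\sigma_1(\bL_k)\approx\sigma_1^x=\kappa\sigma_r^x$ (via Weyl's inequality and the inductive spectral bound) gives $\|\bs-\bs_{k+1}\|_\infty\lesssim \gamma^k\sigma_r^x/n$. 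Since $\bs-\bs_{k+1}$ inherits the $\alpha$-sparsity pattern, $\bE_{k+1}$ has at most $\alpha n$ nonzeros per row and column, hence $\|\bE_{k+1}\|_2\le \alpha n\,\|\bs-\bs_{k+1}\|_\infty$.

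Next I would bound $\|\bL-\bL_{k+1}\|_2$. Because $\bL_{k+1}=\mathcal{D}_r\mathcal{P}_{T_k}(\bL+\bE_{k+1})$ is a best rank-$r$ approximation of $X:=\mathcal{P}_{T_k}\cH(\bz-\bs_{k+1})$ and $\bL$ is rank $r$, the truncation obeys $\|X-\mathcal{D}_r X\|_2=\sigma_{r+1}(X)\le\|X-\bL\|_2$, so $\|\bL-\bL_{k+1}\|_2\le 2\|\bL-\mathcal{P}_{T_k}(\bL+\bE_{k+1})\|_2$. I then split this into a curvature term $\|(\mathcal{I}-\mathcal{P}_{T_k})\bL\|_2$ and a projected-corruption term $\|\mathcal{P}_{T_k}\bE_{k+1}\|_2$. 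The first is second order: the standard tangent-space perturbation estimate gives $\|(\mathcal{I}-\mathcal{P}_{T_k})\bL\|_2\le \|\bL-\bL_k\|_2^2/\sigma_r^x\le 4\tau^2\gamma^{2k}\sigma_r^x$, which is absorbed for small $\tau$. The second requires the incoherence of $T_k$ together with the row/column sparsity of $\bE_{k+1}$ to establish $\|\mathcal{P}_{T_k}\bE_{k+1}\|_2\lesssim \alpha\mu c_s r\, n\,\|\bs-\bs_{k+1}\|_\infty$; combined with the sparse bound this is $\lesssim \tau\gamma^k\sigma_r^x/\kappa$. Tracking constants, and using $\gamma>1/\sqrt{12}$ to control the $k=0$ case, yields $\|\bL-\bL_{k+1}\|_2\le 2\tau\gamma^{k+1}\sigma_r^x$.

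The hard part will be re-establishing the incoherence of $\bL_{k+1}$ and the $\ell_\infty$ bound on $\bx_{k+1}$, since the Hankel structure couples them. For incoherence I would derive a row-wise ($\|\cdot\|_{2,\infty}$) perturbation bound for the singular vectors produced by $\mathcal{D}_r\mathcal{P}_{T_k}$, propagating the incoherence of $T_k$ through the projection $\mathcal{P}_{T_k}$ and the truncated SVD while controlling the perturbation from $\mathcal{P}_{T_k}\bE_{k+1}$; the point is that projecting onto $T_k$ preserves incoherence up to the $\kappa^2$ slack, which is precisely what lets the trimming steps of \citep{cai2017accelerated} be dropped. For the signal bound I would use $\bx-\bx_{k+1}=\cH^\dagger(\bL-\bL_{k+1})$ together with the anti-diagonal averaging form \eqref{eq:Hankel inverse mapping definition} of $\cH^\dagger$ to pass from the matrix level to $\|\bx-\bx_{k+1}\|_\infty$, showing that the averaging does not inflate the $\ell_\infty$ norm beyond the target $\frac{\tau-2\tau^2}{8\alpha\kappa n}\gamma^{k+1}\sigma_r^x$. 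Closing this coupled estimate---matching the $\ell_\infty$ signal error against the incoherence through $\cH^\dagger$ and $\cH$---is where the argument genuinely departs from the dense RPCA analysis.
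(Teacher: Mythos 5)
Your overall architecture (induction; support containment $\supp(\bs_{k+1})\subseteq\supp(\bs)$ from the thresholding step; the doubling trick $\|\bL-\bL_{k+1}\|_2\le 2\|\bL-\mathcal{P}_{T_k}\cH(\bz-\bs_{k+1})\|_2$; splitting into the curvature term $(\mathcal{I}-\mathcal{P}_{T_k})\bL$ and the projected-corruption term) matches the paper's proof, and your thresholding analysis is essentially its Lemma~\ref{lemma:Bound_of_S-S_k}. However, there is a genuine gap at exactly the point you yourself flag as ``the hard part'': you give no mechanism for the entrywise bound $\|\bL-\bL_{k+1}\|_\infty\lesssim \frac{\mu c_s r}{n}\gamma^{k+1}\sigma_r^x$ or for the $2{,}\infty$ bound needed to re-establish $4\mu\kappa^2$-incoherence. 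A generic ``row-wise perturbation bound for the singular vectors'' (Wedin/Davis--Kahan type) cannot deliver these: the target $\ell_\infty$ bound is smaller than the spectral bound $2\tau\gamma^{k+1}\sigma_r^x$ by a factor of order $\alpha\kappa n$, and a $\sin\Theta$-type argument only yields $\|\bU_{k+1}\|_{2,\infty}\le\|\bU\|_{2,\infty}+O(\tau)$, which is compatible with $4\mu\kappa^2$-incoherence only when $\alpha\lesssim 1/\sqrt{\mu c_s r\, n}$ --- a vanishing corruption fraction, far weaker than the $n$-independent level permitted by Assumption~\nameref{assume:Sparse}. The paper's actual device, which your proposal never identifies, is a Neumann-series expansion of the eigenvectors of the augmented Hermitian matrix $\widehat{\mathcal{P}}_{T_k}\widehat{\cH}(\bz-\bs_{k+1})$: with $\bZ=(\mathcal{P}_{T_k}-\mathcal{I})\bL+\mathcal{P}_{T_k}\cH(\bs-\bs_{k+1})$ and $\bm{u}_j=\sum_{l\ge0}(\oZ/\lambda_j)^l(\oL/\lambda_j)\bm{u}_j$, one decomposes $\hatL_{k+1}-\hatL=\bm{Y}_0+\sum_{a+b\ge1}\bm{Y}_{ab}$ and bounds every term entrywise (and in $2{,}\infty$ norm) using the incoherence of the \emph{true} $\bL$ (inserting $\aug{\bU}\aug{\bU}^*$ around $\oL$) together with $\|\oZ^a\aug{\bU}\|_{2,\infty}\le\sqrt{\mu c_s r/n}\,(\sqrt{n}\|\oZ\|_{2,\infty})^a$; this in turn requires the $2{,}\infty$ control of $\bZ$ (Lemma~\ref{lemma:norm_of_Z}), built from the $T_k$-incoherence estimate of Lemma~\ref{lemma:error_of_projected_max_norm}. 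Without this machinery, or an equivalent entrywise perturbation scheme, the induction cannot be closed.

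Two smaller inaccuracies. First, your spectral-norm step misassigns the role of incoherence: the paper bounds $\|\mathcal{P}_{T_k}\cH(\bs-\bs_{k+1})\|_2\le\sqrt{4/3}\,\|\cH(\bs-\bs_{k+1})\|_2\le\sqrt{4/3}\,\alpha n\|\bs-\bs_{k+1}\|_\infty$ with no incoherence at all (Lemma~\ref{lemma:P_T X 2-norm ineq}), whereas your stated bound carries an extra factor $\mu c_s r$; plugged into the recursion it gives $\frac{\mu c_s r}{4\kappa}\tau\gamma^k\sigma_r^x$, which does not close unless $\mu c_s r=O(1)$. This is repairable (even the trivial $\|\mathcal{P}_{T_k}\bM\|_2\le 2\|\bM\|_2$ suffices), but as written the arithmetic fails. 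Second, incoherence of $\bL_{k+1}$ is not recovered because ``projecting onto $T_k$ preserves incoherence''; in the paper it is recovered by writing $\|\bU_{k+1}\|_{2,\infty}=\|\bL_{k+1}\bV_{k+1}\bm{\Sigma}_{k+1}^{-1}\|_{2,\infty}\le\|\bL_{k+1}\|_{2,\infty}/\sigma_r(\bL_{k+1})$, where $\|\bL_{k+1}\|_{2,\infty}$ is controlled by the same series argument --- the $\kappa^2$ slack arises from the ratio $\sigma_1^x/\sigma_r(\bL_{k+1})$, not from any property of $\mathcal{P}_{T_k}$.
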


As Theorem~\ref{thm:local convergence} requires a sufficiently close initialization for the local convergence, the following theorem provides the conditions such that the outputs of Algorithm~\ref{Algo:Init1} is inside the desired basin of attraction.

\begin{theorem}[Sufficient Initialization] \label{thm:initialization bound} 
Let $\bx, \bs\in\C^{n}$ satisfy Assumptions~\nameref{assume:Inco} and \nameref{assume:Sparse}, respectively. Suppose $\frac{\mu c_sr\sigma_1^x}{n\sigma_1(\cH(\bz))}\leq\beta_{init}\leq\frac{3\mu c_s r\sigma_1^x}{n\sigma_1(\cH(\bz))}$. Denote $\tau=4\alpha\mu c_s r\kappa$. Then the outputs of  Algorithm~\ref{Algo:Init1} satisfy
\[
\|\bL-\bL_0\|_2 \leq 2\tau \sigma_r^x,\quad
\|\bx-\bx_0\|_\infty \leq \frac{\tau-2\tau^2}{8\alpha \kappa n } \sigma_r^x,
\]
and $\bL_0$ is $4\mu \kappa^2$-incoherent.
\end{theorem}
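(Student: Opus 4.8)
The plan is to track how the single hard-thresholding and single truncated-SVD steps of Algorithm~\ref{Algo:Init1} propagate through three quantities in turn: the thresholding residual $\bs-\bs_0$, the spectral error $\|\bL-\bL_0\|_2$, and finally the entrywise error $\|\bx-\bx_0\|_\infty$ together with the incoherence of $\bL_0$.

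First I would control the thresholding. By A1 the entries of $\bL=\bU\bSS\bV^*$ satisfy $\|\bL\|_\infty\le\|\bU\|_{2,\infty}\|\bSS\|_2\|\bV\|_{2,\infty}\le\tfrac{\mu c_s r}{n}\sigma_1^x$, and since $\cH^\dagger$ merely averages anti-diagonals it is nonexpansive in $\|\cdot\|_\infty$, giving $\|\bx\|_\infty\le\|\bL\|_\infty\le\tfrac{\mu c_s r}{n}\sigma_1^x$. The two-sided hypothesis on $\beta_{init}$ is engineered so that the factor $\sigma_1(\cH(\bz))$ cancels and the threshold lands in $\tfrac{\mu c_s r}{n}\sigma_1^x\le\zeta_0\le\tfrac{3\mu c_s r}{n}\sigma_1^x$. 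The lower end yields $\zeta_0\ge\|\bx\|_\infty$, so every clean coordinate $t\notin\supp(\bs)$ has $|[\bz]_t|=|[\bx]_t|\le\zeta_0$ and is zeroed, whence $\supp(\bs_0)\subseteq\supp(\bs)$. On $\supp(\bs)$ a retained coordinate incurs error $|[\bx]_t|\le\|\bx\|_\infty$ and a discarded one has $|[\bs]_t|\le\zeta_0+\|\bx\|_\infty$; with the upper end of the window this yields $\|\bs-\bs_0\|_\infty\le\tfrac{4\mu c_s r}{n}\sigma_1^x$, with $\bs-\bs_0$ supported on the $\alpha n$-sparse set $\supp(\bs)$.

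Next I would Hankelize. Setting $\bE_0:=\cH(\bs-\bs_0)$, the $\alpha$-sparsity of $\bs$ forces at most $\alpha n$ nonzeros in each row and column of $\bE_0$, so $\|\bE_0\|_2\le\alpha n\|\bs-\bs_0\|_\infty\le 4\alpha\mu c_s r\,\sigma_1^x=\tau\sigma_r^x$, using $\sigma_1^x=\kappa\sigma_r^x$ and $\tau=4\alpha\mu c_s r\kappa$. Since $\cH(\bz-\bs_0)=\bL+\bE_0$ with $\bL$ exactly rank $r$, Weyl's inequality gives $\|\bL_0-(\bL+\bE_0)\|_2=\sigma_{r+1}(\bL+\bE_0)\le\|\bE_0\|_2$, and the triangle inequality then produces the first claim $\|\bL-\bL_0\|_2\le 2\|\bE_0\|_2\le 2\tau\sigma_r^x$.

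The incoherence of $\bL_0$ and the entrywise bound are the crux, and I would prove them in that order. Writing $\bL_0=\bU_0\bSS_0\bV_0^*$ for the truncated SVD, the identity $\bU_0=(\bL+\bE_0)\bV_0\bSS_0^{-1}$ gives $\|\bU_0\|_{2,\infty}\le\bigl(\|\bU\|_{2,\infty}\sigma_1^x+\tau\sigma_r^x\|\bV_0\|_{2,\infty}\bigr)/\sigma_r(\bL_0)$, where the second numerator term comes from the $\alpha n$-sparse rows of $\bE_0$ and $\sigma_r(\bL_0)\ge(1-\tau)\sigma_r^x$; by symmetry the same holds with $\bU_0,\bV_0$ swapped, producing a self-bounding system in $\max\{\|\bU_0\|_{2,\infty},\|\bV_0\|_{2,\infty}\}$ that the smallness of $\tau$ under A2 closes to give the $4\mu\kappa^2$-incoherence (this replaces the trimming step of the unstructured analysis). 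For the entrywise bound I would use $\bL_0=(\bL+\bE_0)\bV_0\bV_0^*$ to write $\bL-\bL_0=\bL(\mathcal{I}-\bV_0\bV_0^*)-\bE_0\bV_0\bV_0^*$, bound the second term entrywise via the just-established incoherence of $\bV_0$, and bound the first through $\|\bU\|_{2,\infty}\sigma_1^x\|(\mathcal{I}-\bV_0\bV_0^*)\bV\|_{2,\infty}$. I expect the true obstacle to be this last factor: the crude estimate $\|(\mathcal{I}-\bV_0\bV_0^*)\bV\|_{2,\infty}\le\|(\mathcal{I}-\bV_0\bV_0^*)\bV\|_2\lesssim\tau$ discards the $\sqrt{\mu c_s r/n}$ incoherence factor and inflates the bound by a factor growing in $n$, so I would instead require an incoherence-preserving $\ell_{2,\infty}$ subspace-perturbation estimate of the form $\|(\mathcal{I}-\bV_0\bV_0^*)\bV\|_{2,\infty}\lesssim\tau\sqrt{\mu c_s r/n}$, exploiting both the row/column sparsity of $\bE_0$ and the Hankel-compatible incoherence of $\bV$. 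Finally, passing through $\bx-\bx_0=\cH^\dagger(\bL-\bL_0)$ with the nonexpansiveness of $\cH^\dagger$ and bookkeeping the constants so that the target $\tfrac{\tau-2\tau^2}{8\alpha\kappa n}\sigma_r^x=\tfrac{\mu c_s r}{2n}(1-2\tau)\sigma_r^x$ is met closes the argument.
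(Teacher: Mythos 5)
Your first two steps reproduce the paper's Parts 1--2 essentially verbatim (threshold window gives $\supp(\bs_0)\subseteq\supp(\bs)$ and $\|\bs-\bs_0\|_\infty\leq\frac{4\mu c_s r}{n}\sigma_1^x$; sparsity plus Eckart--Young gives $\|\bL-\bL_0\|_2\leq 2\tau\sigma_r^x$), and your self-bounding route to incoherence is in fact a valid alternative to the paper's Part 4: from $\bU_0=(\bL+\cH(\bs-\bs_0))\bV_0\bSS_0^{-1}$, its transposed counterpart, the row/column sparsity of $\cH(\bs-\bs_0)$, and $\sigma_r(\bL_0)\geq(1-\tau)\sigma_r^x$, the system in $\max\{\|\bU_0\|_{2,\infty},\|\bV_0\|_{2,\infty}\}$ closes to give $2\kappa\sqrt{\mu c_s r/n}$ without any series expansion. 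The genuine gap is in the entrywise bound, which is the crux of the theorem. In your decomposition $\bL-\bL_0=\bL(\bm{I}-\bV_0\bV_0^*)-\cH(\bs-\bs_0)\bV_0\bV_0^*$, the first term hinges on the estimate $\|(\bm{I}-\bV_0\bV_0^*)\bV\|_{2,\infty}\lesssim\tau\sqrt{\mu c_s r/n}$, which you explicitly do not prove but only ``require.'' This is not an off-the-shelf fact: Wedin/Davis--Kahan theory only controls $\|(\bm{I}-\bV_0\bV_0^*)\bV\|_2\lesssim\tau$, and upgrading spectral subspace-perturbation bounds to $\ell_{2,\infty}$ bounds that retain the $\sqrt{\mu c_s r/n}$ factor is precisely the hard analytical content of this theorem; proving it would require the same kind of series or leave-one-out machinery the paper deploys, so your proposal defers the main difficulty to an unproven lemma rather than resolving it.

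The paper avoids $\bV_0$ entirely at this stage: it augments $\cH(\bz-\bs_0)$ to a Hermitian matrix, expands each leading eigenvector via $\bm{u}_j=\sum_{l\geq0}(\oZ/\lambda_j)^l(\oL/\lambda_j)\bm{u}_j$ with $\oZ=\hatcH(\bs-\bs_0)$, and writes $\hatL_0-\hatL=\bm{Y}_0+\sum_{a+b\geq1}\bm{Y}_{ab}$, where every $\bm{Y}_{ab}$ carries the factors $\oZ^a\aug{\bm{U}}$ and $\aug{\bm{U}}^*\oZ^b$ built from the \emph{ground-truth} singular vectors (since $\oL=\aug{\bm{U}}\aug{\bm{U}}^*\oL=\oL\aug{\bm{U}}\aug{\bm{U}}^*$). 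The decisive ingredient is Lemma~\ref{init:lemma:bound_power_vector_norm_with_incoherence}: because $\oZ$ has at most $\alpha n$ nonzeros per row and column, $\|\oZ^a\aug{\bm{U}}\|_{2,\infty}\leq\sqrt{\mu c_s r/n}\,(\alpha n\|\bs-\bs_0\|_\infty)^a$, so the incoherence factor survives every power, the geometric series sums, and one lands on $\|\bL-\bL_0\|_\infty\leq 8\alpha\mu c_s r\|\bs-\bs_0\|_\infty$ with only the $\mu$-incoherence of $\bU,\bV$ entering --- no spurious $\kappa$ factors. Your second term illustrates why this matters: bounding $\|\cH(\bs-\bs_0)\bV_0\bV_0^*\|_\infty$ through $\|\bV_0\|_{2,\infty}^2\leq 4\mu c_s r\kappa^2/n$ yields $\frac{4\tau\mu c_s r\kappa^2}{n}\sigma_r^x$, and comparing with the target $\frac{(1-2\tau)\mu c_s r}{2n}\sigma_r^x$ forces $\alpha\lesssim 1/(\mu c_s r\kappa^3)$, which Assumption~\nameref{assume:Sparse} (namely $\alpha\lesssim 1/(\mu c_s r\kappa)^2$) does not imply when $\kappa\gg\mu c_s r$. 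So even granting your unproven perturbation lemma, the constants do not close under the stated hypotheses.
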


By the definition of $\cH^{\dagger}$ in \eqref{eq:Hankel inverse mapping definition}, it is clear that $\|\bx-\bx_k\|_2=\|\cH^{\dagger}(\bL-\bL_k)\|_2\leq\|\bL-\bL_k\|_F\leq\sqrt{2r}\|\bL-\bL_k\|_2$. Combining Theorem~\ref{thm:local convergence} with Theorem~\ref{thm:initialization bound}, it establishes the linear convergence of $\bx_k$ to $\bx$ for ASAP.

\textbf{Remark 1:} The proof of Theorem~\ref{thm:local convergence} can be done with $\beta=\frac{\mu c_sr}{2n}$, which is similar to the parameter setting in \citep{zhang2018correction}, but it will reduce the toleration bound of $\alpha$ to the order of $1/\kappa^3$. By setting $\beta=\frac{\mu c_sr}{2\kappa n}$, we manage to improve the dependence of $\alpha$ on the condition number. In practice, $\kappa$ and $\mu$ may be estimated from initialization or prior knowledge. In either setting for $\beta$, the theoretical corruption level that can be tolerated by ASAP is worse than the optimal guarantee $\mathcal{O}(1/\mu c_s r)$ obtained in \citep{zhang2018correction}. Here, the looseness of an order in $\mu c_s r$ may be improved if we further tune the size of the neighborhood for the local convergence analysis, i.e., striking a better balance between the requirements of $\alpha$ for the local convergence analysis, and the initialization. Also, the appearance of condition number $\kappa$ in our requirement is due to the fixed rank setting. Nonetheless, empirical results show that ASAP can indeed tolerate more corruptions in practice, thus the theoretical requirement is highly pessimistic. 

\textbf{Remark 2:} The choice of $\beta_{init}$ in Theorem~2 seems to require a lot prior knowledge about the ground truth $\bx$, but what we really need is the bound of $\|\bx\|_{\infty}$. In Part I of the proof for initialization, $\beta_{init}$ is chosen such that the initial threshold value $\zeta_0\geq\|\bx\|_{\infty}$, ensuring that $\text{supp}(\bs_0)\subset\text{supp}(\bs)$. Thus one can tune $\zeta_0$ directly. In real applications, it is very likely to have prior knowledge about the range of a clean signal. Some overestimate of $\|\bx\|_{\infty}$ can also be tolerated in our analysis.

\subsection{Extension to Multi-Dimensional Signals} \label{sec:extendion to mulit dimension}
For ease of presentation, we focus on the one-dimension signal case in this paper; however, our algorithm and the corresponding theoretical results can be easily extended to the multi-dimensional cases. For any $N$-dimensional signal, we can define a Hankel operator $\cH_N$ that maps a $N$-dimensional tensor to a $N$-level Hankel matrix. The definition of $\cH_N$ and the corresponding left inverse $\cH_N^\dagger$ can be found in, e.g., Section~2.4 of \citep{cai2019fast}. We emphasize that all the key properties of $\cH$ and $\cH^\dagger$ are well retained with the properly defined $\cH_N$ and $\cH_N^\dagger$. For instance, given a tensor $\bX$ corresponding to the samples of a $N$-dimensional spectrally $r$-sparse signal, one can easily verify that $\cH_N(\bX)$ is rank $r$. The incoherence assumption \nameref{assume:Inco} is still guaranteed for undamped signals with sufficient wrap-around distances between the frequencies. Moreover, for assumption \nameref{assume:Sparse}, $\cH_N(\bS)$ preserves the sparsity level of $\bS$. Following the proofs in Section~\ref{sec:proof}, one can directly extend our theoretical results to the multi-dimensional cases.
\section{Numerical Experiments} \label{sec:numberical}

In this section, we conduct numerical experiments to evaluate the empirical performance of ASAP. The experiments are executed from MATLAB on a Windows 10 laptop with Intel i7-8750H CPU and 32GB of RAM. To match SAP for a fair comparison, we employ the turning parameters $\beta_{init}=\frac{2\mu c_s r\sigma_1^x}{n \sigma_1(\cH(\bz))}$ and $\beta=\frac{\mu c_s r}{2n}$ in our experiments, thus we need the estimates of $\mu$ and $\sigma_1^x$. If prior information about $\|\bm{x}\|_{\infty}$ is available, we do not need to estimate $\beta_{init}$. Empirically, we find one step of Cadzow \citep{cadzow1988signal}, which costs $\mathcal{O}(rn\log(n))$ flops, provides good estimates of those values. Therefore such a routine is included in ASAP. The truncated SVD in the initialization is computed using the PROPACK package \citep{larsen2004propack}. The relative error at the $k$-th iteration is defined as $err_{k}=\|\bm{z}-\bm{x}_k-\bm{s}_k\|_2/\|\bm{z}\|_2$. ASAP is terminated when either $err_{k}$ is below a threshold $tol$, or the iteration number is greater than $100$. 

\subsection{Empirical Phase Transition}

We evaluate the recovery ability of ASAP and compare it with TV \citep{fernandez2017demixing}, Robust-EMaC \citep{chen2014robust}, and SAP \citep{zhang2018correction}. The code of \citep{fernandez2017demixing} can be found in its supplementary material. Robust-EMaC is implemented using CVX \citep{grant2008cvx} with default parameters. We implement SAP ourselves. The test spectrally sparse signals of length $n$ with $r$ frequency components are formed in the following way: each frequency $f_j$ is randomly generated from $[0,1)$, and the argument of each complex coefficient $a_j$ is uniformly sampled from $[0,2\pi)$ while the amplitude is selected to be $1+10^{0.5c_k}$ with $c_k$ being uniformly distributed over $[0,1]$. We test two different settings for the frequencies: a) no separation condition is imposed  on $\{f_k\}_{k=1}^r$, and b) the wrap-around distances between each pair of the randomly drawn frequencies are guaranteed to be greater than $1.5/n$. The locations of the corruptions are chosen uniformly, while the real and imaginary parts of the corruptions are drawn i.i.d. from the uniform distribution over the interval $[-c~\mathbb{E}(|\Re([\bx]_i)|),c~\mathbb{E}(|\Re([\bx]_i)|)]$ and $[-c~\mathbb{E}(|\Im([\bx]_i)|),c~\mathbb{E}(|\Im([\bx]_i)|)]$ for some constant $c>0$, respectively. For a given triple $(n,r,\alpha)$, $50$ random tests are conducted. We consider an algorithm to have successfully reconstructed a test signal if the recovered signal $\bx_{\mathrm{rec}}$ satisfies $\|\bx_{\mathrm{rec}}-\bx\|_2/\|\bx\|_2\leq 10^{-3}$. The tests are conducted with $n=125$ and $c=1$. An important parameter for both SAP and our algorithm is the target convergence rate $\gamma$. For easier problems, a smaller $\gamma$ can be chosen for computation efficiency. Since we would like to test the limits of the algorithms' recovery abilities, $\gamma$ is set to $0.95$ for both algorithms. We also tune the sparsity penalty parameter for TV and Robust-EMaC, and report the best performances among the chosen parameters. 

\begin{figure}[t]
\centering
\subfloat{\includegraphics[width=.22\linewidth]{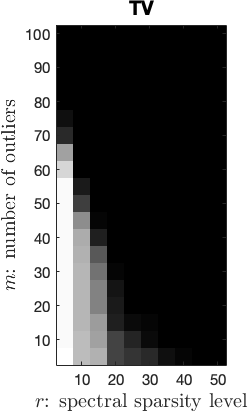}} \hfill
\subfloat{\includegraphics[width=.22\linewidth]{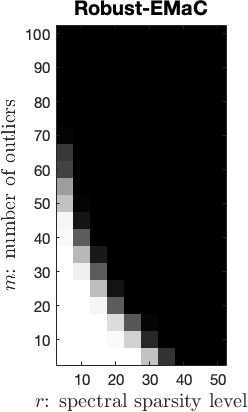}} \hfill
\subfloat{\includegraphics[width=.22\linewidth]{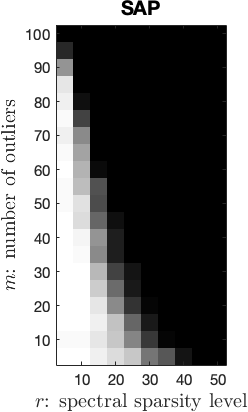}} \hfill
\subfloat{\includegraphics[width=.22\linewidth]{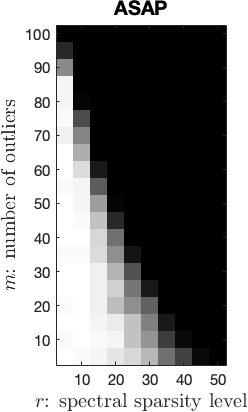}}\\
\subfloat{\includegraphics[width=.22\linewidth]{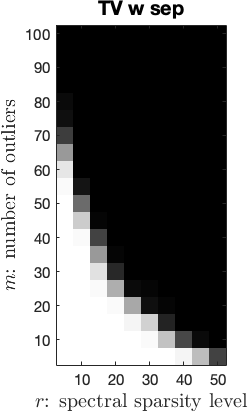}} \hfill
\subfloat{\includegraphics[width=.22\linewidth]{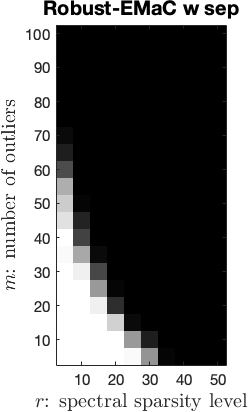}} \hfill
\subfloat{\includegraphics[width=.22\linewidth]{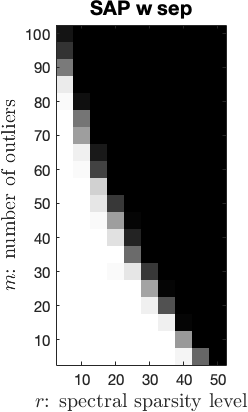}} \hfill
\subfloat{\includegraphics[width=.22\linewidth]{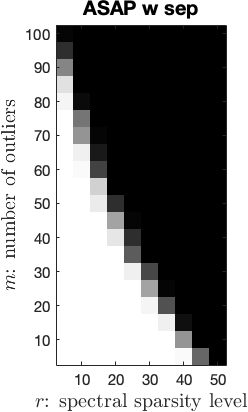}}
\caption{Phase transition comparisons: $x$-axis is spectral sparsity level $r$ and $y$-axis is number of corruptions $m$. \textbf{Top:} no restriction on frequencies of test signals. \textbf{Bottom:} wrap-around distances between frequencies are at least $1.5/n$.} \label{fig:PT}
\end{figure}

In Figure~\ref{fig:PT}, we can observe that for signals with sufficient frequency separations, TV is better than Robust-EMaC, but is not as good as the nonconvex methods. The performance of TV degrades severely when the signal frequencies are not well separated. The two non-convex methods have better performances than the convex method Robust-EMaC, and ASAP is more favorable compared to SAP for harder problems when the frequencies are less separated.   

\subsection{Computational Efficiency}

Next, we compare ASAP with SAP in terms of computational efficiency. For fair comparisons, we modify SAP such that it does not need to gradually increase the rank. The experiments are conducted on 2D spectrally sparse signals, whose definition can be found in e.g., Section 2.4 of \citep{cai2019fast}. The tested signals are square matrices with various sizes, generated similarly as in the 1D case without frequency separation. Such sizes are prohibitive for Robust-EMaC, even with a first-order solver. The results reported in Figure~\ref{fig:CE} are averaged over 10 random tests. The convergence rate parameter $\gamma$ is set to $0.5$ for both algorithms. To generate the corruptions, we use $\alpha=0.1$ as the sparsity parameter, and the magnitude is controlled by $c=1$. Figure~\ref{fig:CE} confirms the efficiency of our algorithm. The left subfigure suggests that both ASAP and SAP have computational complexities that are linear with respect to the signal dimension, while ASAP has a much smaller constant in the front, as evident by the less steep slope in the plot. In the middle subfigure, we find that ASAP maintains its speed advantage under different ranks, and the advantage is more prominent for smaller ranks. The right subfigure provides empirical evidence for the linear convergence of ASAP, and also shows that ASAP can achieve over $10\times$ speedup when the problem size is large and the rank is small.

\begin{figure}[t]
\centering
\subfloat{\includegraphics[width=.333\linewidth]{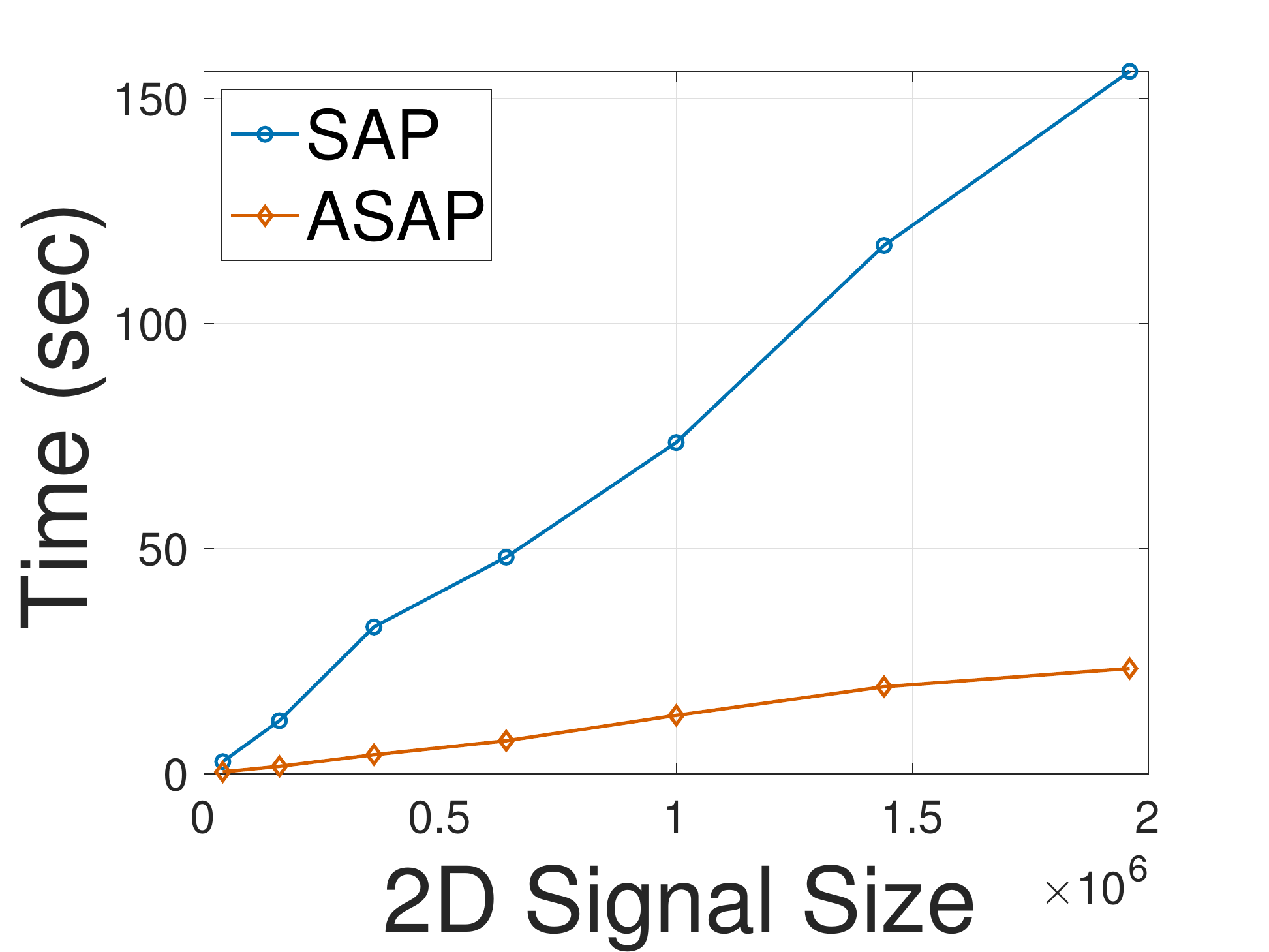}} \hfill
\subfloat{\includegraphics[width=.333\linewidth]{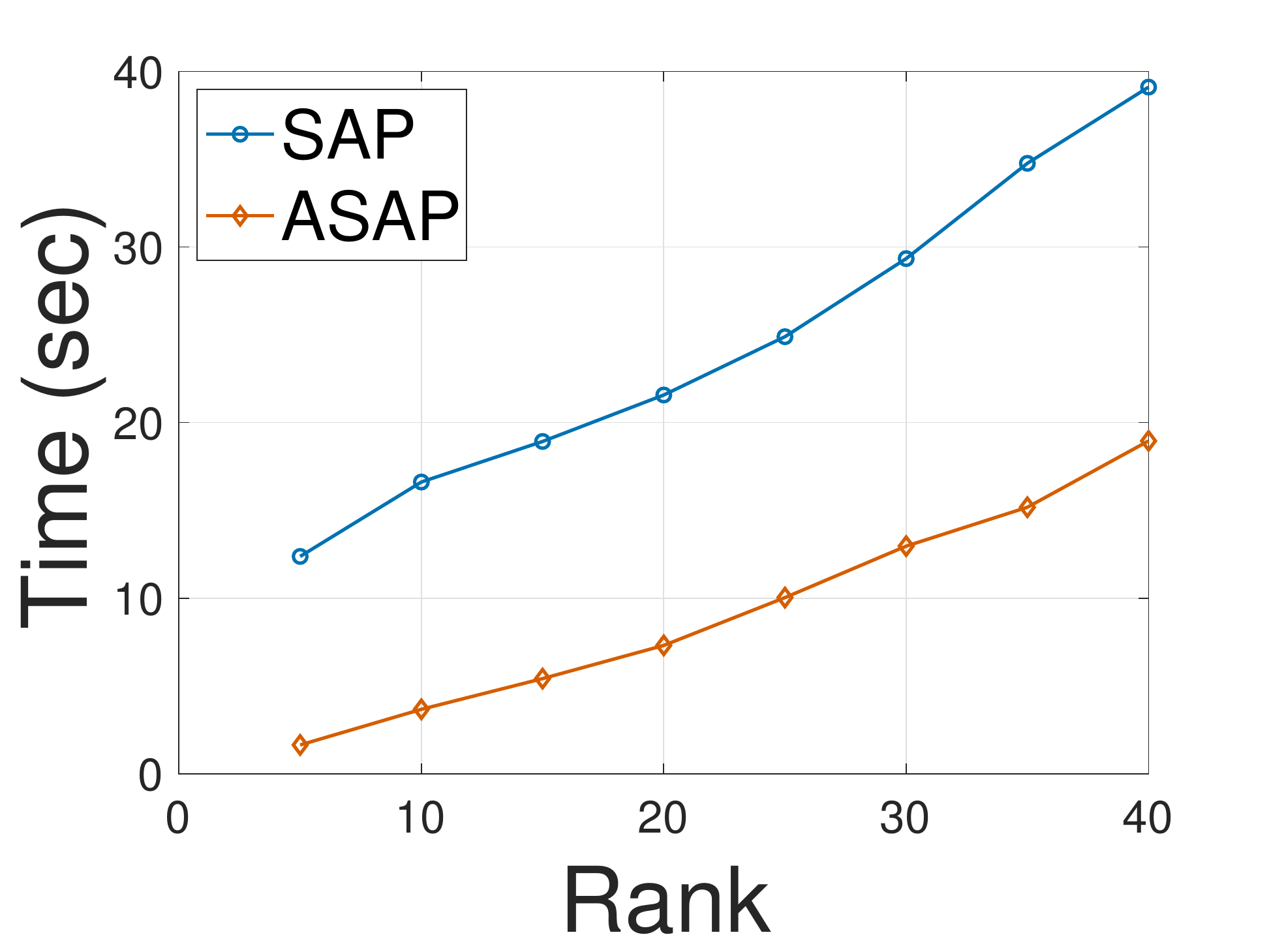}} \hfill
\subfloat{\includegraphics[width=.333\linewidth]{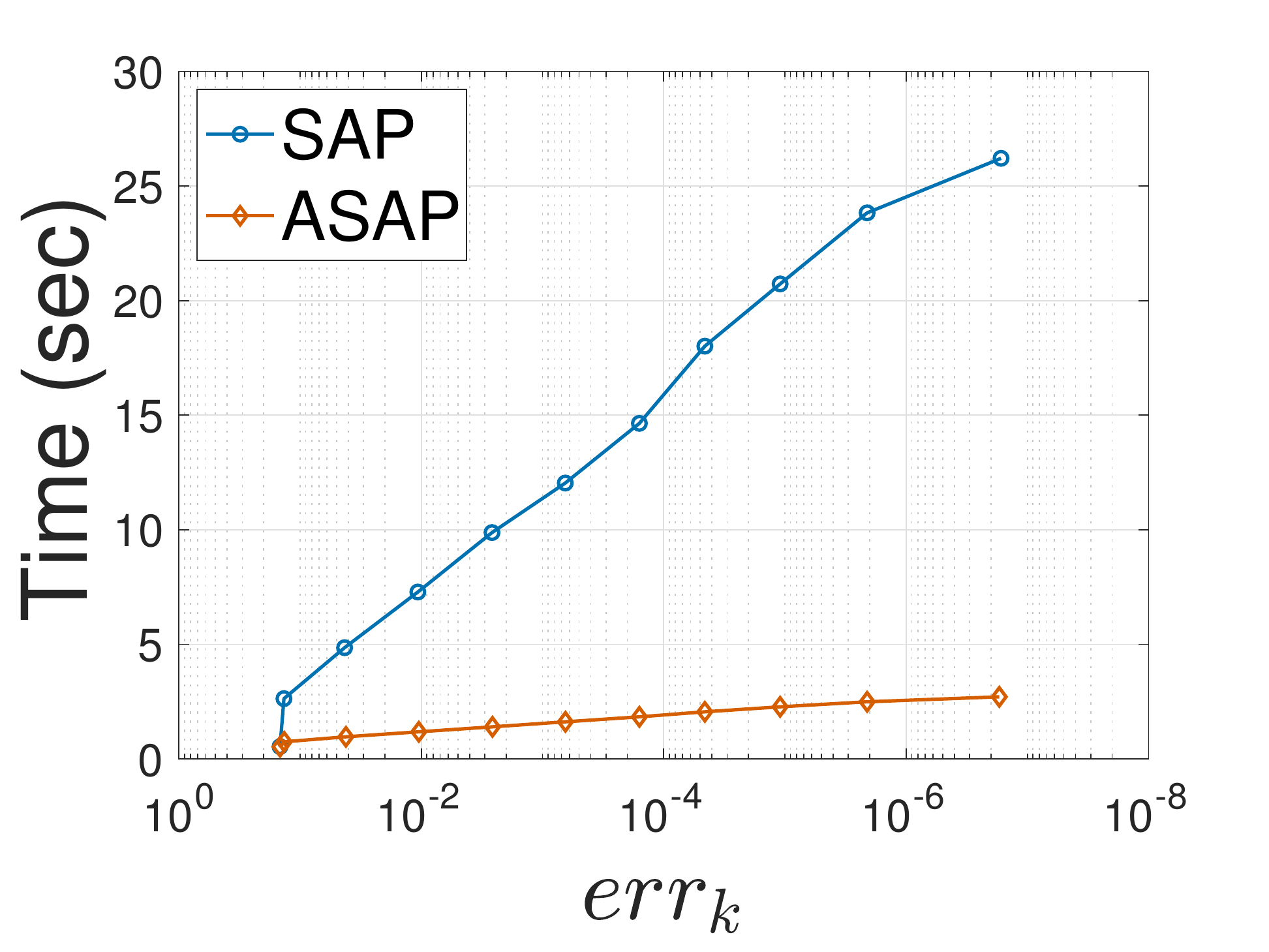}}
\caption{Computational efficiency comparisons. \textbf{Left:} fix rank $r=5$, runtime plots for 2D signals of size $200^2$, $400^2$, $600^2$, $800^2$, $1000^2$, $1200^2$, $1400^2$. \textbf{Middle:} fix 2D signals size to be $400^2$, runtime plots for rank $r=5,10,15,20,25,30,35,40$. \textbf{Right:} fix 2D signals size to be $400^2$ and $r=5$, relative error versus runtime.}\label{fig:CE}
\end{figure}

\subsection{Robustness to Additive Noise}

In practical applications, the observed measurements are usually corrupted by both additive noise and outliers, i.e.,
$$
\bm{z} = \bm{x}+\bm{s}+\bm{\eta},  
$$
where $\bm{\eta}$ is the noise term. We then test the robustness of ASAP to additive noise in the presence of outliers. For each test instance, we first generate a 2D signal $\bm{x}$ as in the previous subsection, then add i.i.d. Gaussian noise $\bm{\eta}$ such that $\bm{x}+\bm{\eta}$ is of certain SNR, and add the outliers at last. We experiment with SNR values from $80$ to $0$, and different amount of outliers ($\alpha\in\{0.1,0.3\}$) of different magnitudes ($c\in\{0.25,1,4\}$). The results shown in Figure~\ref{fig:additive_noise} are averaged over 10 random tests. For outliers of different magnitudes, we observe similar results. We also find that ASAP is robust to noise in the presence of different amount of outliers. Even for input signal corrupted by heavy noise (SNR$=0$), ASAP can still achieve very good recovery (SNR$>30$). Theoretically justifying such exceptional denoising ability is an interesting topic for further study.

\begin{figure}[t]
\centering
\subfloat{\includegraphics[width=.333\linewidth]{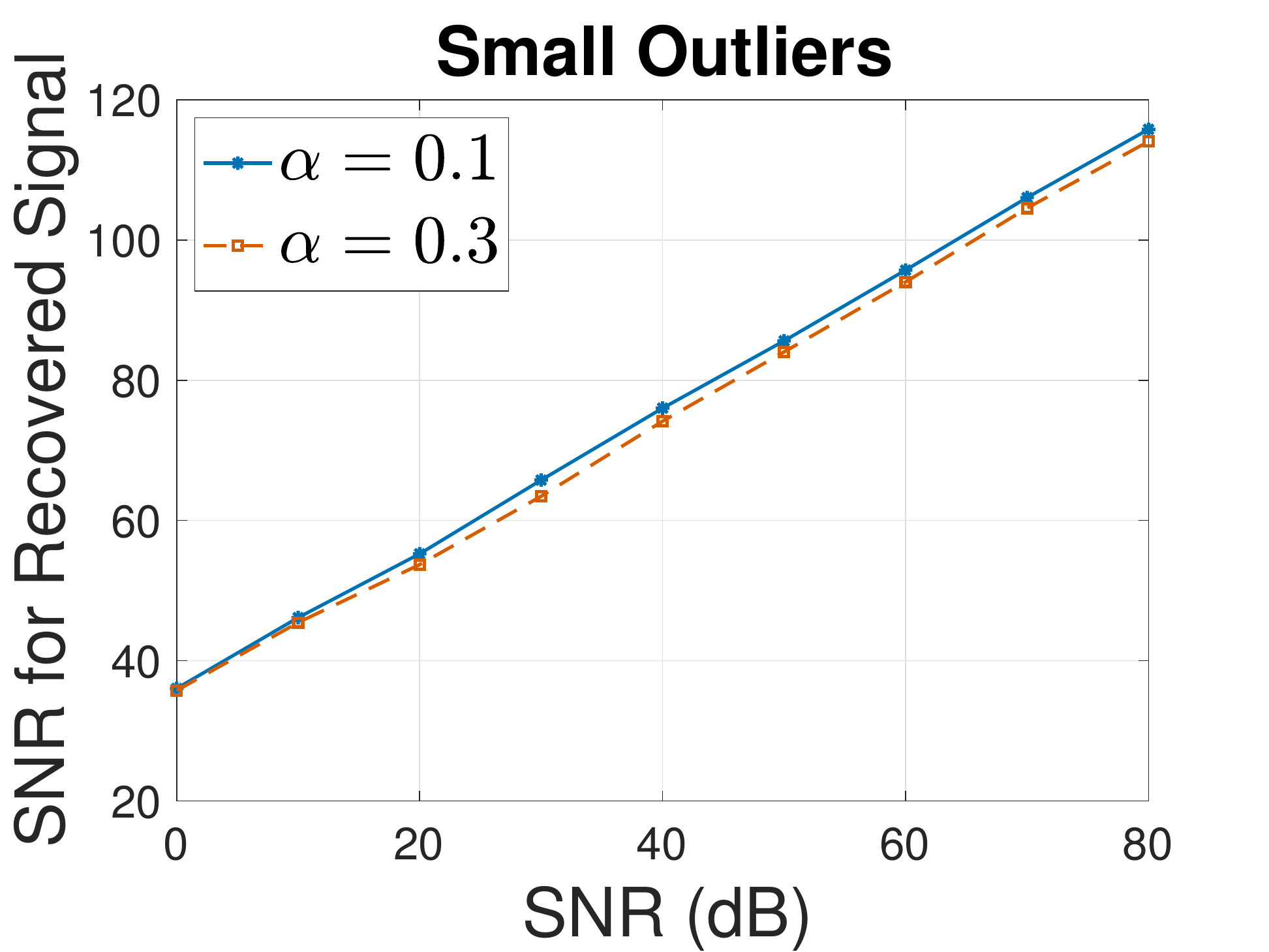}} \hfill
\subfloat{\includegraphics[width=.333\linewidth]{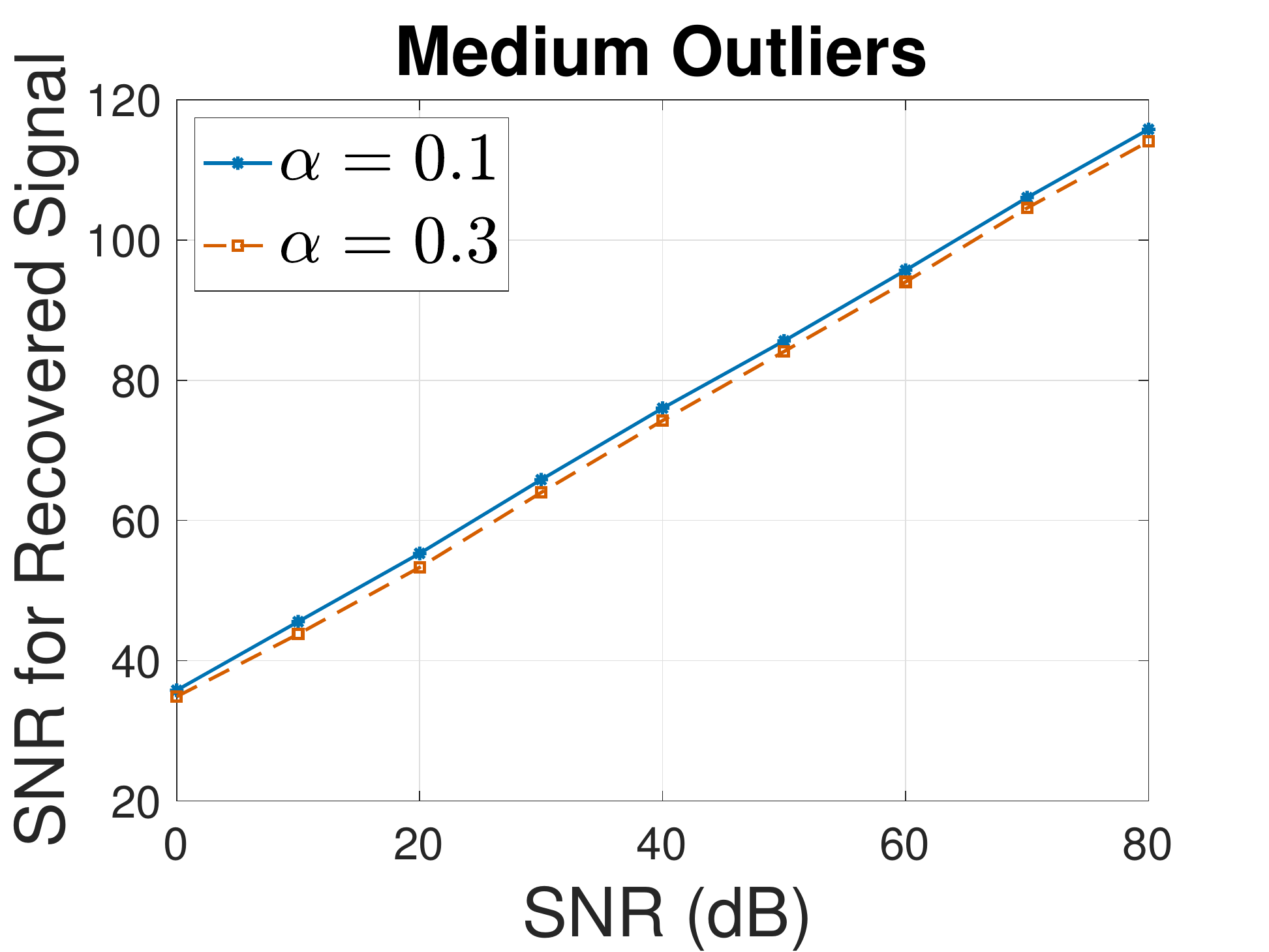}} \hfill
\subfloat{\includegraphics[width=.333\linewidth]{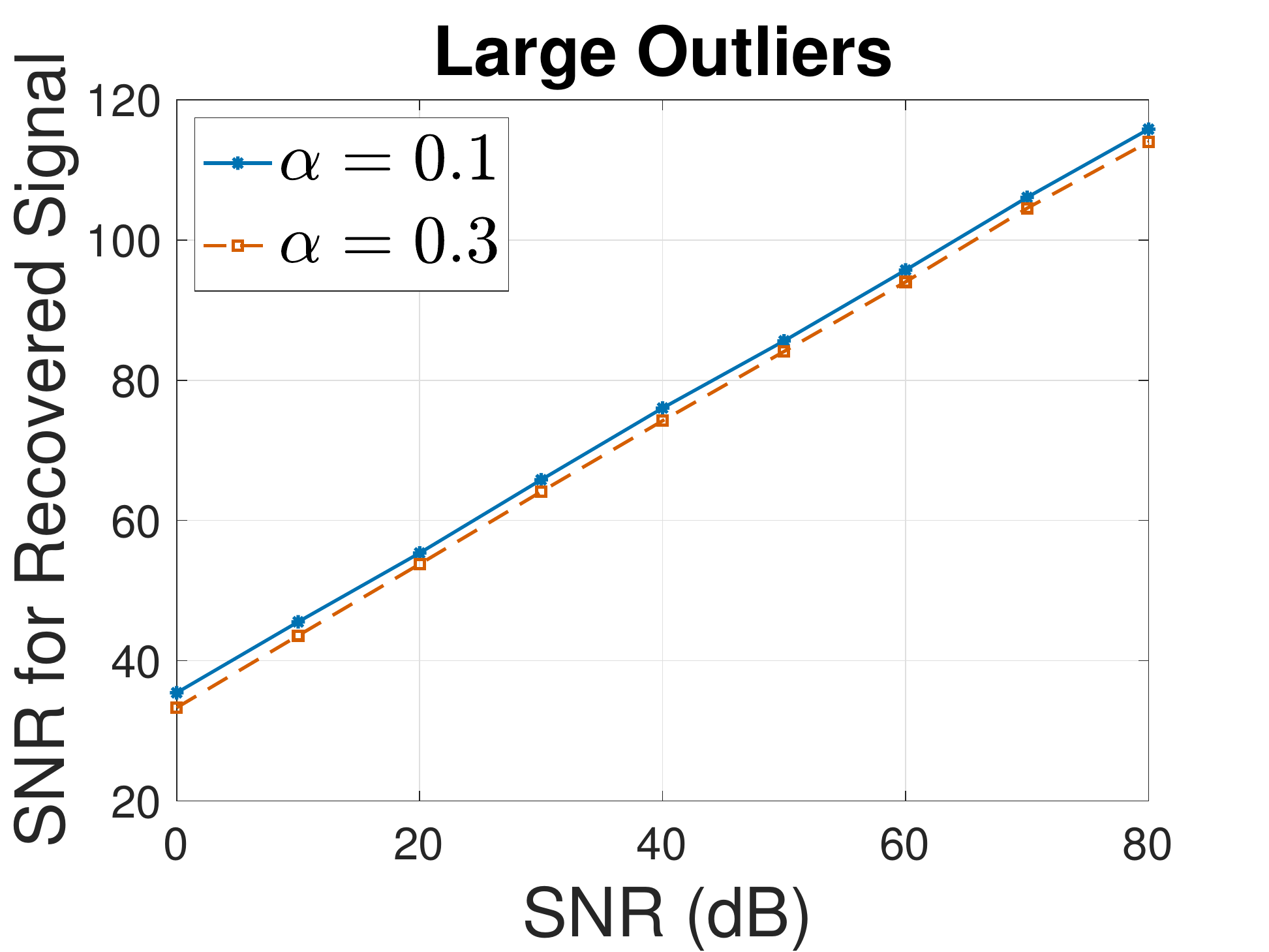}}
\caption{Robustness to additive noise in the presence of outliers. \textbf{Left:} small outliers ($c=0.25$). \textbf{Middle:} medium outliers ($c=1$). \textbf{Right:} large outliers ($c=4$).} \label{fig:additive_noise}
\end{figure}

\subsection{Impulse Corruptions in Nuclear Magnetic Resonance}

Our algorithm is applicable to removing impulse corruptions in signals from NMR spectroscopy \citep{qu2015accelerated}. The real-world data we are using is a 1D NMR signal of length $32,768$. Again, such size is  prohibited for the convex method Robust-EMaC. In this experiment, we add different amount of sparse outliers with large magnitude ($c=10$) to simulate the impulse corruptions caused by malfunctioning sensors, and compare our recovery result to SAP. For different values of $(\alpha,\gamma)$, we compare the computation time of SAP and ASAP. We can see from Table~\ref{table:nmr} that ASAP is more efficient in all the $4$ settings.

\begin{table}[t]
\caption{Computational time comparisons on the NMR data with different values of $(\alpha,\gamma)$.}\label{table:nmr}
\begin{center}
\begin{tabular}{ccccc}
\hline
\multicolumn{1}{|c|}{$(\alpha,\gamma)$} & \multicolumn{1}{c|}{$(0.2,0.6)$} & \multicolumn{1}{c|}{$(0.3,0.7)$} & \multicolumn{1}{c|}{$(0.4,0.8)$} & \multicolumn{1}{c|}{$(0.5,0.9)$} \\
\hline
\multicolumn{1}{|c|}{SAP} & \multicolumn{1}{c|}{$35.96$ s} & \multicolumn{1}{c|}{$55.21$ s} & \multicolumn{1}{c|}{$78.86$ s} & \multicolumn{1}{c|}{$150.19$ s} \\
\hline
\multicolumn{1}{|c|}{ASAP} & \multicolumn{1}{c|}{$11.51$ s} & \multicolumn{1}{c|}{$15.97$ s} & \multicolumn{1}{c|}{$19.10$ s} & \multicolumn{1}{c|}{$28.06$ s} \\
\hline
\end{tabular}
\end{center}
\end{table}

The two methods at convergence produce similar results, so we just show a typical recovery result for ASAP when $\alpha=0.5$ and $\gamma=0.9$. In Figure~\ref{fig:nmr}, we compare the power spectrum in a selected region of, the corrupted signal, the original noisy signal, and the result after corruption removal by ASAP (reversed for better visualization). We can see that the spectral peaks of the original noisy signal are well-preserved while the spurious peaks caused by the corruptions are removed. 

\begin{figure}[t]
\centering
\includegraphics[width=0.6\textwidth]{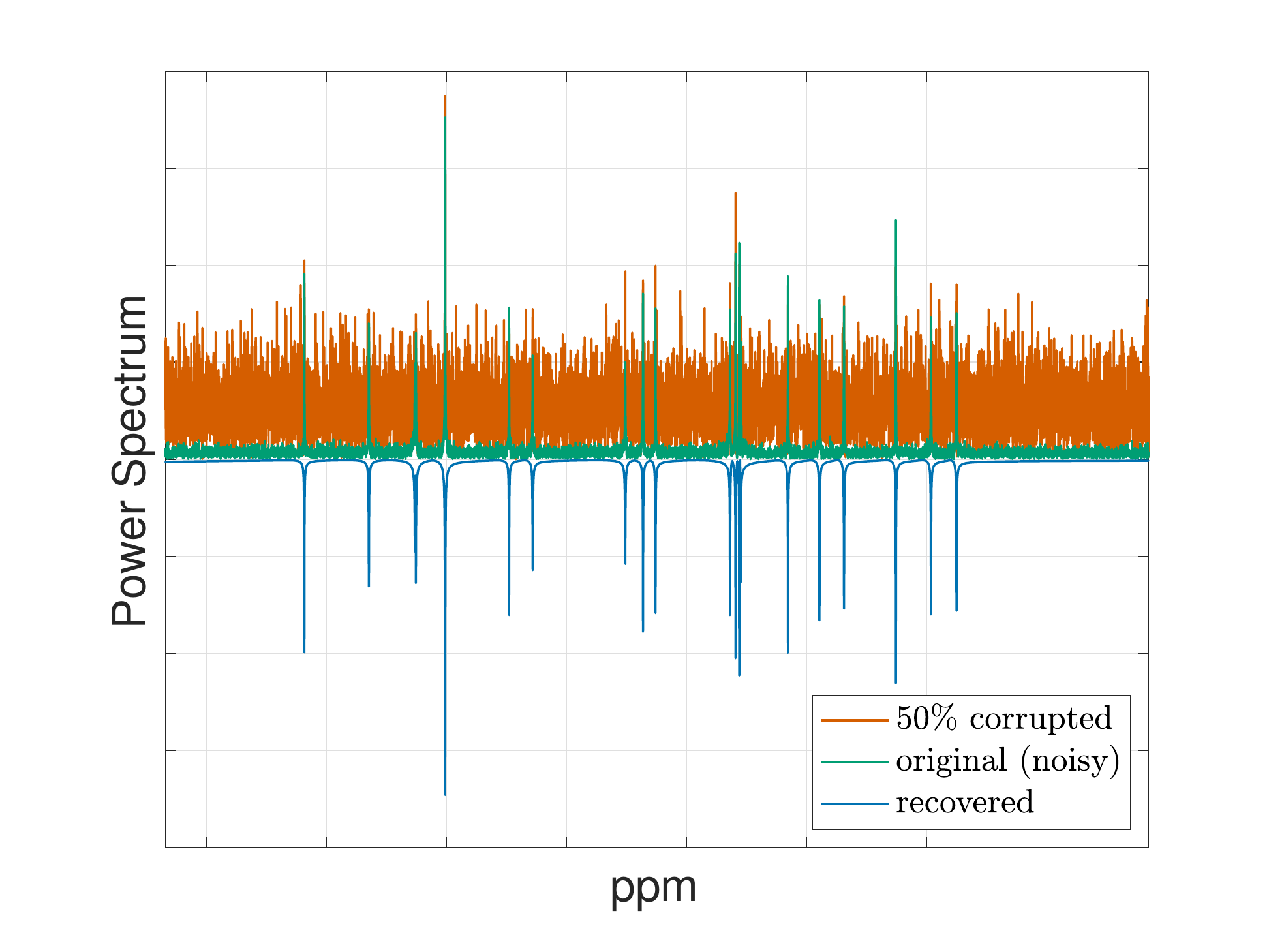}
\caption{Recovery result of ASAP on the NMR data.}\label{fig:nmr}
\end{figure}
\section{Proofs}  \label{sec:proof}

In this section, we present the mathematical proofs for the theoretical results in Theorem~\ref{thm:local convergence} and Theorem~\ref{thm:initialization bound}. 
Although we generally follow the framework in \citep{cai2017accelerated}, the details of the proofs are substantially different. For example, we improve the analysis by removing the unnecessary trimming step in \citep{cai2017accelerated}.

\subsection{Definitions and Auxiliary lemmas}

We first define some additional notations for the ease of presentation. Then some useful auxiliary lemmas are presented.

\begin{definition}\label{defs}
For any vector $\bz\in\C^n$, define an augmented Hermitian Hankel mapping $\widehat{\cH}$ as
    \begin{equation*}
        \widehat{\cH}(\bz)=
        \left[\begin{array}{cc}
        \bO & \cH(\bz) \\
        \left(\cH(\bz)\right)^{*} & \bO
        \end{array}\right] \in\C^{(n+1)\times (n+1)},
    \end{equation*}
where $\cH$ is defined as in \eqref{eq:Hankel definition}. For any matrix $\bM\in\C^{n_1\times n_2}$, its augmentation is defined using hat symbol, i.e.,
\begin{align*}
\widehat{\bM}=
    \left[\begin{array}{cc}
        \bO & \bM \\
        \bM^{*} & \bO
        \end{array}\right].
\end{align*}
For subspace projection $\mathcal{P}_{T_k}$, we also define its augmentation, acting on augmented Hermitian matrix $\widehat{\bM}$, as 
\begin{equation}  \label{eq:projections are same}
    \widehat{\mathcal{P}}_{T_k}(\widehat{\bM})= 
        \left[\begin{array}{cc}
        \bO & \mathcal{P}_{T_k}(\bM) \\
        \left(\mathcal{P}_{T_k}(\bM)\right)^{*} & \bO
        \end{array}\right].
\end{equation}
\end{definition}

\begin{lemma}  \label{lemma:augemtned property 1}
For any $\bz\in\C^n$, we have $\|\widehat{\cH}(\bz)\|_{\infty}= \|\cH(\bz)\|_{\infty} = \|\bz\|_{\infty}$. Also, for any $\bM\in\C^{n_1\times n_2}$,
we have $\|\cH^{\dagger}(\bM)\|_{\infty}\leq \|\bM\|_{\infty}$, where $\cH^{\dagger}$ is defined as in \eqref{eq:Hankel inverse mapping definition}. 
\end{lemma}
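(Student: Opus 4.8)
The plan is to prove the three relations separately by elementary entrywise arguments, exploiting the fact that $\|\cdot\|_{\infty}$ only records the largest magnitude among the entries of a matrix (or the coordinates of a vector). None of the three requires more than the triangle inequality and a careful inspection of which entries can occur.

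First I would establish $\|\widehat{\cH}(\bz)\|_{\infty}=\|\cH(\bz)\|_{\infty}$. By the block form of $\widehat{\cH}(\bz)$ in Definition~\ref{defs}, every entry of $\widehat{\cH}(\bz)$ is either a zero (coming from one of the two diagonal blocks $\bO$), an entry of $\cH(\bz)$ (upper-right block), or an entry of $(\cH(\bz))^{*}$ (lower-left block). Since conjugate-transposition preserves the magnitude of each entry, the upper-right and lower-left blocks contribute exactly the same set of magnitudes, while the zero blocks cannot raise the maximum. Taking the largest magnitude over all entries therefore gives the stated equality.

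Next I would show $\|\cH(\bz)\|_{\infty}=\|\bz\|_{\infty}$. By the Hankel definition \eqref{eq:Hankel definition}, the $(a,b)$-entry of $\cH(\bz)$ equals the coordinate $z_{a+b-2}$, so every entry of $\cH(\bz)$ is one of the coordinates of $\bz$; conversely each coordinate $z_t$ appears (on the $t$-th anti-diagonal of $\cH(\bz)$). Hence the set of entry values of $\cH(\bz)$ coincides with $\{z_0,\dots,z_{n-1}\}$, and equating the maxima of the magnitudes yields $\|\cH(\bz)\|_{\infty}=\|\bz\|_{\infty}$. For the inequality $\|\cH^{\dagger}(\bM)\|_{\infty}\leq\|\bM\|_{\infty}$, I would invoke the averaging formula \eqref{eq:Hankel inverse mapping definition}: each coordinate $[\cH^{\dagger}(\bM)]_t$ is the arithmetic mean of the $\rho_t$ entries of $\bM$ on the $t$-th anti-diagonal, so a single triangle-inequality step gives
\[
\left|[\cH^{\dagger}(\bM)]_t\right|
\leq \frac{1}{\rho_t}\sum_{a+b=t}\left|[\bM]_{a,b}\right|
\leq \frac{1}{\rho_t}\cdot \rho_t\cdot\|\bM\|_{\infty}
= \|\bM\|_{\infty},
\]
and taking the maximum over $t$ finishes the proof.

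I do not expect any serious obstacle here, as each step is a direct entrywise identity or inequality; the work is purely in the bookkeeping. The only points needing care are confirming that conjugation leaves magnitudes unchanged so the augmentation introduces nothing beyond extra zeros, verifying that the anti-diagonal indexing of $\cH(\bz)$ realizes every coordinate of $\bz$, and observing that averaging $\rho_t$ numbers each bounded by $\|\bM\|_{\infty}$ cannot exceed that bound.
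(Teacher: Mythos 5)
Your proof is correct and follows the same route as the paper, which simply states that the lemma follows directly from the definitions of $\widehat{\cH}$, $\cH$, and $\cH^{\dagger}$; your write-up just makes explicit the entrywise bookkeeping (block structure of the augmentation, anti-diagonal indexing, and the averaging bound) that the paper leaves implicit.
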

\begin{proof}
The results directly follow from the definitions of $\widehat{\cH}$, $\cH$ and $\cH^{\dagger}$.
\end{proof}

\begin{lemma}\label{lemma:augmentation}
For any $\bM\in\C^{n_1\times n_2}$,
consider the Hermitian matrix $\oM\in\C^{(n+1)\times (n+1)}$ augmented from $\bM$,
$$
\oM=\left[\begin{array}{cc}
\bO & \bM \\
\bM^{*} & \bO
\end{array}\right].
$$
Then
(1) $\|\oM\|_2=\|\bM\|_2$; (2) Suppose the SVD of $\bM$ can be written as $\bP\bm{\Delta}\bQ^*+\ddot{\bP}\ddot{\bm{\Delta}}\ddot{\bQ}^*$, where $\bP\bm{\Delta}\bQ^*$ is the best rank $r$ approximation of $\bM$. Then $\widehat{\bM}$ has SVD in the form of $\aug{\bP}\aug{\bm{\Delta}}\aug{\bQ}^*+\aug{\ddot{\bP}}\aug{\ddot{\bm{\Delta}}}\aug{\ddot{\bQ}}^*$, where 
\[
\aug{\bP}\aug{\bm{\Delta}}\aug{\bQ}^* =
\left[\begin{array}{cc}
\bO & \bP\bm{\Delta}\bQ^* \\
\bQ\bm{\Delta}\bP^* & \bO
\end{array}\right]
\]
is the best rank $2r$ approximation of $\oM$; (3) For $\mu$-incoherent rank $r$ matrix $\bL$, its augmentation has SVD $\widehat{\bL}= \aug{\bU}\aug{\bm{\Sigma}}\aug{\bV}^*$, $\aug{\bU}\aug{\bU}^*=\aug{\bV}\aug{\bV}^*$, and satisfy
$$
\|\aug{\bm{U}}\|_{2,\infty}\leq \sqrt{\frac{\mu c_s r}{n}}, \quad\textnormal{and}\quad  \|\aug{\bm{V}}\|_{2,\infty}\leq \sqrt{\frac{\mu c_s r}{n}}.
$$
\end{lemma}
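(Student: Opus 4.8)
The plan is to reduce all three claims to the standard correspondence between the SVD of $\bM$ and the spectral decomposition of its Hermitian dilation $\oM$. Write the full SVD $\bM=\sum_i\sigma_i\bm{p}_i\bm{q}_i^*$ with orthonormal families $\{\bm{p}_i\}\subset\C^{n_1}$ and $\{\bm{q}_i\}\subset\C^{n_2}$. The single computation that drives everything is the block identity
\[
\oM\begin{bmatrix}\bm{p}_i\\ \pm\bm{q}_i\end{bmatrix}=\begin{bmatrix}\pm\bM\bm{q}_i\\ \bM^*\bm{p}_i\end{bmatrix}=\pm\sigma_i\begin{bmatrix}\bm{p}_i\\ \pm\bm{q}_i\end{bmatrix},
\]
so each singular value $\sigma_i$ of $\bM$ generates a pair of eigenvalues $\pm\sigma_i$ of $\oM$ with orthonormal eigenvectors $\tfrac{1}{\sqrt2}[\bm{p}_i;\pm\bm{q}_i]$; their orthonormality is inherited directly from $\{\bm{p}_i\}$ and $\{\bm{q}_i\}$.

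Part (1) is then immediate: being Hermitian, $\|\oM\|_2$ equals its largest eigenvalue magnitude, namely $\max_i\sigma_i=\sigma_1(\bM)=\|\bM\|_2$. For part (2), the $2r$ leading singular values of $\oM$ are $\sigma_1,\dots,\sigma_r$, each with multiplicity two, so its best rank-$2r$ approximation is the spectral projection onto the eigenpairs $\{\pm\sigma_i\}_{i=1}^r$. Summing $\tfrac{\sigma_i}{2}[\bm{p}_i;\bm{q}_i][\bm{p}_i;\bm{q}_i]^*-\tfrac{\sigma_i}{2}[\bm{p}_i;-\bm{q}_i][\bm{p}_i;-\bm{q}_i]^*$ over $i\le r$, the diagonal blocks cancel while the off-diagonal blocks reinforce, producing exactly $\left[\begin{smallmatrix}\bO & \bP\bm{\Delta}\bQ^*\\ \bQ\bm{\Delta}\bP^* & \bO\end{smallmatrix}\right]$. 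To display this as a genuine SVD I would set $\aug{\bm{\Delta}}=\mathrm{diag}(\bm{\Delta},\bm{\Delta})$, collect the eigenvectors into $\aug{\bP}=\tfrac{1}{\sqrt2}\left[\begin{smallmatrix}\bP & \bP\\ \bQ & -\bQ\end{smallmatrix}\right]$, and push the sign of the negative eigenvalues into the right factor $\aug{\bQ}=\tfrac{1}{\sqrt2}\left[\begin{smallmatrix}\bP & -\bP\\ \bQ & \bQ\end{smallmatrix}\right]$; a one-line block multiplication confirms $\aug{\bP}\aug{\bm{\Delta}}\aug{\bQ}^*$ equals the claimed block matrix, and the complementary summand $\aug{\ddot{\bP}}\aug{\ddot{\bm{\Delta}}}\aug{\ddot{\bQ}}^*$ is assembled identically from the tail of the SVD.

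Part (3) specializes $\bM=\bL$, $\bP=\bU$, $\bQ=\bV$, $\bm{\Delta}=\bm{\Sigma}$. The same block multiplication gives $\aug{\bU}\aug{\bU}^*=\aug{\bV}\aug{\bV}^*=\mathrm{diag}(\bU\bU^*,\bV\bV^*)$, which establishes the projector identity. For incoherence, the point is that the $\tfrac{1}{\sqrt2}$ scaling is exactly calibrated: each of the first $n_1$ rows of $\aug{\bU}$ is $\tfrac{1}{\sqrt2}$ times the row $[\bU]_{i,:}$ repeated, hence of $\ell_2$ norm $\|[\bU]_{i,:}\|_2$, while each of the last $n_2$ rows is $\tfrac{1}{\sqrt2}$ times $[\bV]_{i,:}$ up to a sign, of norm $\|[\bV]_{i,:}\|_2$. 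Therefore $\|\aug{\bU}\|_{2,\infty}=\max\{\|\bU\|_{2,\infty},\|\bV\|_{2,\infty}\}$, and symmetrically for $\aug{\bV}$, so the bound $\sqrt{\mu c_s r/n}$ from \nameref{assume:Inco} carries over verbatim---importantly with $n$, not $n+1$, in the denominator.

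The computations are all direct block-matrix algebra, so the only genuinely delicate point is the sign bookkeeping in part (2): the dilation is Hermitian with signed eigenvalues, and one must factor $\mathrm{diag}(\bm{\Delta},-\bm{\Delta})$ as a nonnegative singular-value matrix times a sign pattern and absorb that pattern into $\aug{\bQ}$, so that $\aug{\bP}$ and $\aug{\bQ}$ retain orthonormal columns while $\aug{\bm{\Delta}}\succeq\bO$. I expect this, together with checking that the $\tfrac{1}{\sqrt2}$ normalization keeps the denominator at $n$ in the incoherence estimate, to be the subtlest aspects; everything else follows by inspection.
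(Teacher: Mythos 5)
Your proof is correct and follows essentially the same route as the paper: both build the eigen-decomposition of the Hermitian dilation from the SVD of $\bM$ via the $\tfrac{1}{\sqrt{2}}$-scaled vectors $[\bm{p}_i;\pm\bm{q}_i]$ (the paper uses the equivalent sign convention $[\mp\bm{p}_i;\bm{q}_i]$), identify the block matrix as the best rank-$2r$ approximation, and read off the incoherence bounds from the row structure. If anything, you are more explicit than the paper on two points it leaves implicit --- recasting the signed eigen-decomposition as a genuine SVD with $\aug{\bm{\Delta}}\succeq\bO$ by absorbing signs into $\aug{\bQ}$, and the row-norm computation behind part (3), which the paper dismisses with ``follows directly.''
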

\begin{proof}
Denote 
$$
\bm{R}:=\frac{1}{\sqrt{2}}\left[\begin{array}{cccc}
\bP & \ddot{\bP} & -\bP & -\ddot{\bP} \\
\bQ & \ddot{\bQ} & \bQ  & \ddot{\bQ}
\end{array}\right].
$$
It can be verified that
$$
\bm{R}
\left[\begin{array}{cccc}
\bm{\Delta} &  &  & \\
& \ddot{\bm{\Delta}} & & \\
& & -\bm{\Delta} & \\
& & & -\ddot{\bm{\Delta}}
\end{array}\right]
\bm{R}^*
$$
is an eigen-decomposition of $\oM$. Property (1) is then verified.
The best rank $2r$ approximation of $\oM$ can be written as
\begin{align*}
&\left(\frac{1}{\sqrt{2}}\left[\begin{array}{cc}
\bP & -\bP \\
\bQ & \bQ
\end{array}\right]\right)
\left[\begin{array}{cc}
\bm{\Delta} & \\
& -\bm{\Delta} 
\end{array}\right]
\left(\frac{1}{\sqrt{2}}\left[\begin{array}{cc}
\bP & -\bP \\
\bQ & \bQ
\end{array}\right]\right)^* \cr
&=
\left[\begin{array}{cc}
\bO & \bP\bm{\Delta}\bQ^* \\
\bQ\bm{\Delta}\bP^* & \bO
\end{array}\right].
\end{align*}
Property (3) then follows directly.
\end{proof}

\begin{lemma}  \label{lemma:bound of sparse matrix}
Let $\bs\in\C^n$ satisfy Assumption \nameref{assume:Sparse}. Then,
\[
\|\widehat{\cH}(\bs)\|_2 =\|\cH(\bs)\|_2 \leq \alpha n\|\bs\|_\infty.
\]
\end{lemma}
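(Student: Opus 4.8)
The first equality, $\|\widehat{\cH}(\bs)\|_2=\|\cH(\bs)\|_2$, is immediate from Lemma~\ref{lemma:augmentation}(1) applied to $\bM=\cH(\bs)$, so the whole task reduces to establishing the spectral-norm bound $\|\cH(\bs)\|_2\le \alpha n\|\bs\|_\infty$. The plan is to control the spectral norm of $\cH(\bs)$ by the much simpler induced $\ell_1$ and $\ell_\infty$ operator norms, exploiting only the sparsity of $\bs$ and the magnitude bound $\|\bs\|_\infty$ on its entries.

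First I would record the combinatorial fact already noted after Assumption~\nameref{assume:Sparse}: since $\bs$ has at most $\alpha n$ nonzero entries, and the anti-diagonal indexing $(i,j)\mapsto i+j$ that sends the $(i,j)$ entry of $\cH(\bs)$ to the corresponding coordinate of $\bs$ is injective along any fixed row and along any fixed column, each row and each column of $\cH(\bs)$ contains at most $\alpha n$ nonzero entries. Each such entry is a coordinate of $\bs$, hence has magnitude at most $\|\bs\|_\infty$. Consequently the maximum absolute row sum and the maximum absolute column sum of $\cH(\bs)$ are both bounded by $\alpha n\|\bs\|_\infty$; that is, $\|\cH(\bs)\|_{\infty\to\infty}\le \alpha n\|\bs\|_\infty$ and $\|\cH(\bs)\|_{1\to1}\le \alpha n\|\bs\|_\infty$.

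To pass from these two bounds to the spectral norm, I would invoke the standard interpolation inequality $\|\bM\|_2\le\sqrt{\|\bM\|_{1\to1}\,\|\bM\|_{\infty\to\infty}}$ (a consequence of the Riesz--Thorin theorem, or equivalently of Schur's test), which yields $\|\cH(\bs)\|_2\le\alpha n\|\bs\|_\infty$ directly. An equally short alternative, perhaps more in the spirit of the augmented setting, is to argue through the Hermitian matrix $\widehat{\cH}(\bs)$ itself: for a Hermitian matrix the spectral norm equals the spectral radius, which is dominated by the induced $\ell_\infty$ norm, and the maximum absolute row sum of $\widehat{\cH}(\bs)$ is just the larger of the maximum absolute row sum and the maximum absolute column sum of $\cH(\bs)$, again at most $\alpha n\|\bs\|_\infty$. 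This route reaches the conclusion without any interpolation and reuses the block structure introduced in Definition~\ref{defs}.

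There is no genuine obstacle here; the only points demanding care are the per-row/per-column sparsity count, which rests on the injectivity of the anti-diagonal indexing rather than on any distributional assumption on the support of $\bs$, and the correct invocation of the norm interpolation (or spectral-radius) inequality. Everything else is a one-line estimate.
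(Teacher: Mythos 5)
Your proposal is correct, and both of your routes lead to the stated bound. The core combinatorial observation is identical to the paper's: $\alpha$-sparsity of $\bs$, together with the injectivity of the anti-diagonal indexing along each row and column, forces at most $\alpha n$ nonzero entries (each of magnitude at most $\|\bs\|_\infty$) in every row and column of $\cH(\bs)$, hence of $\widehat{\cH}(\bs)$. Where you diverge is in how the spectral-norm bound is then extracted. The paper passes to the augmented Hermitian matrix via Lemma~\ref{lemma:augmentation}, invokes Lemma~4 of \citep{netrapalli2014non} as a black box to get $\|\widehat{\cH}(\bs)\|_2 \leq \alpha n\|\widehat{\cH}(\bs)\|_\infty$, and finishes with $\|\widehat{\cH}(\bs)\|_\infty = \|\bs\|_\infty$ from Lemma~\ref{lemma:augemtned property 1}. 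Your primary route works with the non-Hermitian $\cH(\bs)$ directly and proves the inequality from scratch via $\|\bM\|_2\le\sqrt{\|\bM\|_{1\to 1}\,\|\bM\|_{\infty\to\infty}}$, so the augmentation is needed only for the trivial first equality; your alternative route (Hermitian spectral norm equals spectral radius, which is dominated by the maximum absolute row sum of $\widehat{\cH}(\bs)$) is in effect a self-contained one-line proof of exactly the statement the paper cites. What your write-up buys is self-containedness --- the reader need not chase an external reference --- at the modest cost of invoking a standard interpolation (or spectral-radius) inequality explicitly; the paper's version is shorter on the page but delegates the key linear-algebra step to the literature.
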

\begin{proof}
The Hermitian matrix $\widehat{\cH}(\bs)$ has no more than $\alpha n$ nonzero entries per row and column since $\bs$ is $\alpha$-sparse. Then 
\begin{align*}
    \|\cH(\bs)\|_2 = \|\ocH(\bs)\|_2 \leq \alpha n \|\ocH(\bs)\|_{\infty} = \alpha n\|\bs\|_\infty,
\end{align*}
where the first equality uses Lemma~\ref{lemma:augmentation}, and the inequality applies \citep[Lemma 4]{netrapalli2014non} for general Hermitian matrices with at most $\alpha n$ nonzeros entries per row and column. 
\end{proof}

\begin{lemma}  \label{lemma:error_of_projected_max_norm}
Let $\bs\in\C^n$ satisfy Assumption \nameref{assume:Sparse}. Let $\bL_k\in\mathbb{C}^{n_1\times n_2}$ be a  rank $r$ matrix with $4\mu\kappa^2$-incoherence. That is,
\begin{equation*}
\| \bU_k\|_{2,\infty}\leq  \sqrt{\frac{4\mu c_s r\kappa^2}{n}} \quad\textnormal{and}\quad  \| \bV_k\|_{2,\infty} \leq \sqrt{\frac{4\mu c_s r\kappa^2}{n}},
\end{equation*}
where $\bU_k\bm{\Sigma}_k\bV_k^\ast$ is SVD of $\bL_k$. 
If $\supp(\bs_{k+1})\subseteq\supp(\bs)$, then
\begin{equation*}
\|\mathcal{P}_{T_k} \cH(\bs-\bs_{k+1})\|_\infty \leq 12\alpha \mu c_s r \kappa^2 \|\bs-\bs_{k+1}\|_\infty .
\end{equation*}
\end{lemma}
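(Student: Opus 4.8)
The plan is to expand the projection through the explicit formula \eqref{eq:projection onto tangent space tilde k}. Writing $\bw:=\bs-\bs_{k+1}$ and $\bM:=\cH(\bw)$, we have
\[
\mathcal{P}_{T_k}\bM = \bU_k\bU_k^*\bM + \bM\bV_k\bV_k^* - \bU_k\bU_k^*\bM\bV_k\bV_k^*,
\]
and I would bound each of the three summands in $\|\cdot\|_\infty$ separately, aiming for $4\alpha\mu c_s r\kappa^2\|\bw\|_\infty$ apiece so that the triangle inequality yields the constant $12$. The starting observation is that $\supp(\bs_{k+1})\subseteq\supp(\bs)$ forces $\supp(\bw)\subseteq\supp(\bs)$, hence $\bw$ is $\alpha$-sparse; consequently every row and every column of the Hankel matrix $\bM=\cH(\bw)$ carries at most $\alpha n$ nonzero entries, and $\|\bM\|_\infty=\|\bw\|_\infty$.

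For the first term I would read off the $(a,b)$ entry as $[\bU_k\bU_k^*\bM]_{a,b}=\bm{u}_a^*\,\bU_k^*\bM_{:,b}$, where $\bm{u}_a:=\bU_k^*\bm{e}_a$ is the conjugated $a$-th row of $\bU_k$. Expanding $\bU_k^*\bM_{:,b}$ over the at-most-$\alpha n$ nonzero positions of the $b$-th column turns the entry into a sum of at most $\alpha n$ terms of the form $w\,\bm{u}_a^*\bm{u}_i$. Cauchy--Schwarz together with the $4\mu\kappa^2$-incoherence bound $\|\bm{u}_a\|_2,\|\bm{u}_i\|_2\le\sqrt{4\mu c_s r\kappa^2/n}$ gives $|\bm{u}_a^*\bm{u}_i|\le 4\mu c_s r\kappa^2/n$, and the $\alpha n$ summands then produce $4\alpha\mu c_s r\kappa^2\|\bw\|_\infty$. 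The second term $\bM\bV_k\bV_k^*$ is handled identically after swapping the roles of rows and columns and of $\bU_k$ and $\bV_k$, using that each row of $\bM$ is also $\alpha$-sparse.

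The third term is where a naive entrywise count would overshoot, so instead I would collapse it through the spectral norm: writing $[\bU_k\bU_k^*\bM\bV_k\bV_k^*]_{a,b}=\bm{u}_a^*(\bU_k^*\bM\bV_k)\bm{v}_b$ with $\bm{v}_b:=\bV_k^*\bm{e}_b$, Cauchy--Schwarz gives the bound $\|\bm{u}_a\|_2\|\bm{v}_b\|_2\,\|\bU_k^*\bM\bV_k\|_2\le\|\bU_k\|_{2,\infty}\|\bV_k\|_{2,\infty}\|\bM\|_2$. Since $\bU_k$ and $\bV_k$ have orthonormal columns, $\|\bU_k^*\bM\bV_k\|_2\le\|\bM\|_2$, and Lemma~\ref{lemma:bound of sparse matrix} supplies $\|\bM\|_2=\|\cH(\bw)\|_2\le\alpha n\|\bw\|_\infty$; combined with $\|\bU_k\|_{2,\infty}\|\bV_k\|_{2,\infty}\le 4\mu c_s r\kappa^2/n$ this again yields $4\alpha\mu c_s r\kappa^2\|\bw\|_\infty$.

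The main obstacle, and the only genuinely delicate point, is precisely this mismatch in bounding strategy. The rank-one-projected terms must be controlled entrywise via the column/row sparsity of $\cH(\bw)$, since a spectral-norm estimate there would degrade the $\alpha$-dependence to $\sqrt{\alpha}$; by contrast the doubly-projected term must be routed through the operator-norm estimate of Lemma~\ref{lemma:bound of sparse matrix}, since an entrywise count over all nonzeros of $\bM$ inflates the bound by an extra factor of order $\mu r\kappa^2$. Applying one method uniformly fails on one of the three terms. Once each summand is pinned at $4\alpha\mu c_s r\kappa^2\|\bw\|_\infty$, the triangle inequality closes the proof with the stated constant $12$.
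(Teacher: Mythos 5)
Your proof is correct and takes essentially the same route as the paper's: both split $\mathcal{P}_{T_k}\cH(\bs-\bs_{k+1})$ into the same three terms, bound the two one-sided projections entrywise using the at-most-$\alpha n$ nonzeros per row/column of $\cH(\bs-\bs_{k+1})$ together with the incoherence bound, and control the doubly-projected term via $\|\cH(\bs-\bs_{k+1})\|_2\leq \alpha n\|\bs-\bs_{k+1}\|_\infty$ (Lemma~\ref{lemma:bound of sparse matrix}) combined with the rank-one structure, arriving at the constant $12=4+4+4$. The only cosmetic difference is that the paper dualizes first, writing each entry as $\langle \cH(\bs-\bs_{k+1}),\mathcal{P}_{T_k}(\bm{e}_a\bm{e}_b^T)\rangle$ and invoking H\"older with the nuclear norm of a rank-one matrix, which is identical to your direct expansion plus Cauchy--Schwarz.
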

\begin{proof}
Denote $\Omega:=\supp(\bs)$. By the incoherence assumption of $\bL_{k}$ and the sparsity of $\bs-\bs_{k+1}$, we have 
\begin{align*}
    &~[\mathcal{P}_{T_k}\cH(\bs-\bs_{k+1})]_{a,b} \cr
    =&~\langle  \mathcal{P}_{T_k}\cH(\bs-\bs_{k+1}),\bm{e}_a\bm{e}_b^T \rangle 
    ~=\langle \cH(\bs-\bs_{k+1}),\mathcal{P}_{T_k}(\bm{e}_a\bm{e}_b^T) \rangle  \cr
    =&~\langle \cH(\bs-\bs_{k+1}),\bU_k\bU_k^*\bm{e}_a\bm{e}_b^T   + \bm{e}_a\bm{e}_b^T\bV_k\bV_k^*   -  \bU_k\bU_k^*\bm{e}_a\bm{e}_b^T \bV_k\bV_k^* \rangle  \cr
    =&~  \langle \cH(\bs-\bs_{k+1})\bm{e}_b, \bU_k\bU_k^* \bm{e}_a\rangle + \langle \bm{e}_a^T\cH(\bs-\bs_{k+1}),\bm{e}_b^T \bV_k\bV_k^*\rangle  \\
    &~- \langle \cH(\bs-\bs_{k+1}),\bU_k\bU_k^*\bm{e}_a\bm{e}_b^T \bV_k\bV_k^* \rangle  \cr
    \leq&~ \Big(\sum_{i|(i,b)\in\Omega} |\bm{e}_i^T \bU_k\bU_k^*\bm{e}_a| + \sum_{j|(a,j)\in\Omega} |\bm{e}_b^T \bV_k\bV_k^*\bm{e}_j|\Big)\cdot\\
    &~\|\bs-\bs_{k+1}\|_\infty +  \|\cH(\bs-\bs_{k+1})\|_2~\|\bU_k\bU_k^*\bm{e}_a\bm{e}_b^T \bV_k\bV_k^*\|_\ast       \cr
    \leq&~ 2\alpha n\frac{4\mu c_s r \kappa^2}{n}\|\bs-\bs_{k+1}\|_\infty \cr
    &~ + \alpha n \|\bs-\bs_{k+1}\|_\infty\|\bU_k\bU_k^*\bm{e}_a\bm{e}_b^T \bV_k\bV_k^*\|_F \cr
                           \leq&~ 8\alpha \mu c_s r \kappa^2\|\bs-\bs_{k+1}\|_\infty  +  \alpha n \|\bs-\bs_{k+1}\|_\infty\frac{4\mu c_s r \kappa^2}{n} \cr
                           \leq&~ 12\alpha \mu c_s r \kappa^2 \|\bs-\bs_{k+1}\|_\infty,
\end{align*}
where the first inequality uses H\"older's inequality and the second inequality uses Lemma~\ref{lemma:bound of sparse matrix}. We also use the fact $\bU_k\bU_k^*\bm{e}_a\bm{e}_b^T\bV_k\bV_k^*$ is rank $1$ to bound its nuclear norm.
\end{proof}

\begin{lemma}   \label{lemma:P_T X 2-norm ineq}
Let $T$ be the direct sum of the row space and column space of $\bM$. Then, for any $\bZ$, we have
\[
\|\mathcal{P}_T \bZ\|_2 \leq \sqrt{4/3}\|\bZ\|_2.
\]
\end{lemma}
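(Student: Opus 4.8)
The plan is to argue directly from the explicit formula for the tangent-space projection. Writing $\bP=\bU\bU^*$ and $\bQ=\bV\bV^*$ for the orthogonal projections onto the column and row spaces of $\bM$ (so $T$ is the range of these), the formula analogous to \eqref{eq:projection onto tangent space tilde k} gives $\mathcal{P}_T\bZ=\bP\bZ+\bZ\bQ-\bP\bZ\bQ$, and equivalently the three-block form $\mathcal{P}_T\bZ=\bP\bZ\bQ+\bP\bZ(\bm{I}-\bQ)+(\bm{I}-\bP)\bZ\bQ$; only the $(\bm{I}-\bP)\bZ(\bm{I}-\bQ)$ block of $\bZ$ is discarded. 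Since the spectral norm of a finite matrix is attained, I would fix unit vectors $u,v$ with $|\langle u,\mathcal{P}_T\bZ v\rangle|=\|\mathcal{P}_T\bZ\|_2$ and bound this scalar. I note in passing that the naive orthogonality bound $\|\mathcal{P}_T\bZ\|_2^2\le\|\bP\bZ\|_2^2+\|(\bm{I}-\bP)\bZ\bQ\|_2^2$ only yields the constant $\sqrt{2}$, which is why a sharper argument is needed.

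The reduction I have in mind turns this into a clean $2\times2$ problem. Split $u=c_1\hat u_1+s_1\hat u_2$, where $\hat u_1,\hat u_2$ are unit vectors along $\bP u$ and $(\bm{I}-\bP)u$ and $c_1^2+s_1^2=1$, and split $v=c_2\hat v_1+s_2\hat v_2$ analogously with $\bQ$. Substituting into the three-block form and using $\langle\bP u,\cdot\rangle=\langle \hat u_1,\cdot\rangle c_1$ and so on, the inner product becomes $c_1c_2\bm{G}_{11}+c_1s_2\bm{G}_{12}+s_1c_2\bm{G}_{21}$, where $\bm{G}_{ij}:=\langle\hat u_i,\bZ\hat v_j\rangle$. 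Because $\{\hat u_1,\hat u_2\}$ and $\{\hat v_1,\hat v_2\}$ are each orthonormal, for all $a,b\in\C^2$ one has $a^*\bm{G}b=(\textstyle\sum_i a_i\hat u_i)^*\bZ(\sum_j b_j\hat v_j)$, whence $\|\bm{G}\|_2\le\|\bZ\|_2$. Thus all the geometry is encoded in a single $2\times2$ matrix $\bm{G}$ of spectral norm at most $\|\bZ\|_2$, paired with the real weight matrix $\bm{W}=\begin{bmatrix}c_1c_2 & c_1s_2\\ s_1c_2 & 0\end{bmatrix}$.

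With $S:=\langle u,\mathcal{P}_T\bZ v\rangle=\sum_{i,j}\bm{W}_{ij}\bm{G}_{ij}=\langle\bm{W},\bm{G}\rangle$, the duality between the nuclear and spectral norms gives $|S|\le\|\bm{W}\|_\ast\|\bm{G}\|_2\le\|\bm{W}\|_\ast\|\bZ\|_2$. It then remains to maximize $\|\bm{W}\|_\ast$ over admissible weights. Using $\sigma_1+\sigma_2=\sqrt{\|\bm{W}\|_F^2+2|\det\bm{W}|}$ for a $2\times2$ matrix, together with $\|\bm{W}\|_F^2=c_1^2+s_1^2c_2^2$ and $|\det\bm{W}|=c_1s_1c_2s_2$, I would reduce the task to maximizing $c_1^2+s_1^2c_2^2+2c_1s_1c_2s_2$ over the two circles. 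Optimizing first over the $(c_1,s_1)$ angle and then setting $t=c_2^2$, this collapses to $\tfrac{1+t}{2}+\tfrac12\sqrt{(1-t)(1+3t)}$, whose maximum is $\tfrac43$ at $t=\tfrac23$; hence $\|\bm{W}\|_\ast\le\sqrt{4/3}$ and the lemma follows, the bound holding uniformly over $u,v$.

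The main obstacle is this final scalar optimization: showing that the maximal nuclear norm of $\bm{W}$ is exactly $\sqrt{4/3}$ rather than something larger. Everything preceding it is bookkeeping — the block decomposition, the reduction to $\bm{G}$, and the norm duality are routine — but the constant $\sqrt{4/3}$ is sharp, so the reduction to the two-parameter weight problem and its careful maximization are precisely what make the bound tight instead of the looser $\sqrt{2}$.
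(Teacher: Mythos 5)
Your proof is correct, but it takes a genuinely different route from the paper. The paper handles the Hermitian case by citing the symmetric-setting result of \citep[Lemma~8]{cai2017accelerated}, and then disposes of general (rectangular) $\bZ$ by passing to the augmented Hermitian matrix $\oZ$ and invoking Lemma~\ref{lemma:augmentation}, so the constant $\sqrt{4/3}$ itself is imported from the external reference rather than derived. You instead prove the inequality from scratch: picking maximizing unit vectors $u,v$, splitting each orthogonally along $\bP=\bU\bU^*$ and $\bQ=\bV\bV^*$, and compressing $\bZ$ to the $2\times 2$ matrix $\bm{G}$ (which has $\|\bm{G}\|_2\le\|\bZ\|_2$ because the two frames are orthonormal) reduces the lemma to bounding the nuclear norm of the weight matrix $\bm{W}=\begin{bmatrix}c_1c_2 & c_1s_2\\ s_1c_2 & 0\end{bmatrix}$; the identity $\|\bm{W}\|_*^2=\|\bm{W}\|_F^2+2\lvert\det\bm{W}\rvert$ and the scalar maximization (maximum $4/3$ at $t=2/3$, checked against the endpoint value $1$) are all correct, as is the nuclear/spectral duality step. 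What each approach buys: the paper's argument is shorter and reuses the augmentation machinery that it needs elsewhere anyway, but is not self-contained; yours is self-contained, works directly for rectangular matrices without augmentation, exposes exactly where the constant $\sqrt{4/3}$ comes from (it is the sharp factor for zeroing out the $(\bm{I}-\bP)\bZ(\bm{I}-\bQ)$ corner block), and improves on the naive $\sqrt{2}$ bound in a transparent way. In effect you have re-proved the cited lemma of \citep{cai2017accelerated} in its general non-Hermitian form. The only loose ends are cosmetic: the degenerate cases $\bP u=0$ or $\bQ v=0$ (where $\hat u_1$ or $\hat v_1$ is undefined) should be noted as trivial since the corresponding weight vanishes, and the phases of $c_1,s_1,c_2,s_2$ should be explicitly absorbed into $\hat u_i,\hat v_j$ so that the weights are real and nonnegative; neither affects the validity of the argument.
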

\begin{proof}
For $\bZ$ being Hermitian, this is an extension to the symmetric setting proved in \cite[Lemma~8]{cai2017accelerated}. For the non-Hermitian $\bZ$, consider the augmented Hermitian matrix $\oZ$.
\[
\oZ = \left[\begin{array}{cc}
\bO & \bZ \\
\bZ^* & \bO
\end{array}\right].
\]
Let $\bP\bm{\Delta}\bQ^*$ be the SVD of $\bM$. Denote
\[
\aug{\bm{P}} = \frac{1}{\sqrt{2}} \left[\begin{array}{cc}
\bP & -\bP \\
\bQ & \bQ 
\end{array}\right],
\]
$\aug{\bm{P}}$ is orthogonal and
\begin{align*}
\aug{\bm{P}}\aug{\bm{P}}^* \oZ + \oZ\aug{\bm{P}}\aug{\bm{P}}^* - \aug{\bm{P}}\aug{\bm{P}}^*\oZ\aug{\bm{P}}\aug{\bm{P}}^* = 
\left[\begin{array}{cc}
    \bO & \mathcal{P}_T \bZ  \\
    \left(\mathcal{P}_T \bZ\right)^* & \bO 
\end{array}\right].
\end{align*}
Apply Lemma~\ref{lemma:augmentation} and the result from Hermitian case, we have
\begin{align*}
\|\mathcal{P}_T \bZ\|_2 &= \left\|\left[\begin{array}{cc}
    \bO & \mathcal{P}_T \bZ  \\
    \left(\mathcal{P}_T \bZ\right)^* & \bO 
\end{array}\right]\right\|_2 \cr
&\leq \sqrt{4/3}\|\oZ\|_2~=\sqrt{4/3}\|\bZ\|_2
\end{align*}
for any $\bZ$.
\end{proof}

\begin{lemma} \label{init:lemma:bound_power_vector_norm_with_incoherence}
Let $\bs\in\C^n$ satisfy Assumption \nameref{assume:Sparse}, and $\bm{W}\in\mathbb{C}^{(n+1)\times r}$ be an orthogonal matrix with $\mu$-incoherence, i.e., $\|\bm{W}\|_{2,\infty}\leq\sqrt{\frac{\mu c_s r}{n}}$. Then,
\[
\|(\hatcH(\bs))^a\bm{W}\|_{2,\infty}\leq \sqrt{\frac{\mu c_s r}{n}}(\alpha n\|\bs\|_\infty)^a
\]
for all integer $a\geq 0$.
\end{lemma}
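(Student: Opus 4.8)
The plan is to argue by induction on $a$, propagating the row-wise $\ell_2$ control through a single multiplication by the sparse augmented matrix $\hatcH(\bs)$ at each step. For the base case $a=0$, the matrix power is the identity, so $\|(\hatcH(\bs))^0\bm{W}\|_{2,\infty}=\|\bm{W}\|_{2,\infty}\leq\sqrt{\mu c_s r/n}$ by the incoherence hypothesis, which matches the claim since $(\alpha n\|\bs\|_\infty)^0=1$.

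For the inductive step, suppose the bound holds for $a-1$ and set $\bm{Y}=(\hatcH(\bs))^{a-1}\bm{W}$, so that $(\hatcH(\bs))^a\bm{W}=\hatcH(\bs)\,\bm{Y}$. I would estimate the $(2,\infty)$ norm row by row. The $i$-th row of $\hatcH(\bs)\bm{Y}$ equals $\bm{e}_i^{*}\hatcH(\bs)\bm{Y}=\sum_j [\hatcH(\bs)]_{i,j}\,\bm{e}_j^{*}\bm{Y}$, a combination of the rows $\bm{e}_j^{*}\bm{Y}$ of $\bm{Y}$. Two structural facts make this sum easy to control: first, since $\bs$ is $\alpha$-sparse, $\hatcH(\bs)$ has at most $\alpha n$ nonzero entries in each row (as already noted in the proof of Lemma~\ref{lemma:bound of sparse matrix}), so at most $\alpha n$ terms contribute; second, by Lemma~\ref{lemma:augemtned property 1} every entry obeys $|[\hatcH(\bs)]_{i,j}|\leq\|\hatcH(\bs)\|_\infty=\|\bs\|_\infty$. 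Applying the triangle inequality for the $\ell_2$ norm and then bounding each $\|\bm{e}_j^{*}\bm{Y}\|_2$ by $\|\bm{Y}\|_{2,\infty}$ gives
\[
\Big\|\sum_j [\hatcH(\bs)]_{i,j}\,\bm{e}_j^{*}\bm{Y}\Big\|_2 \leq \alpha n\,\|\bs\|_\infty\,\|\bm{Y}\|_{2,\infty}.
\]
Invoking the inductive hypothesis $\|\bm{Y}\|_{2,\infty}\leq\sqrt{\mu c_s r/n}\,(\alpha n\|\bs\|_\infty)^{a-1}$ and taking the maximum over $i$ yields exactly $\sqrt{\mu c_s r/n}\,(\alpha n\|\bs\|_\infty)^a$, completing the induction.

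The argument is essentially a bookkeeping exercise, and I do not anticipate a genuine obstacle. The only point requiring care is to resist bounding $\hatcH(\bs)\bm{Y}$ through the spectral-norm estimate $\|\hatcH(\bs)\|_2\leq\alpha n\|\bs\|_\infty$ of Lemma~\ref{lemma:bound of sparse matrix}: the spectral norm does not control the $(2,\infty)$ norm of a product in a useful way. Instead one must exploit the sparsity pattern explicitly---counting the at-most-$\alpha n$ nonzero summands per row and combining this with the uniform entrywise bound $\|\bs\|_\infty$---so that each factor of $\alpha n\|\bs\|_\infty$ is produced cleanly at every multiplication. This row-by-row mechanism is precisely what lets the single incoherence factor $\sqrt{\mu c_s r/n}$ survive unchanged across all $a$ powers.
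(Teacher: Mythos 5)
Your proof is correct and follows essentially the same route as the paper: induction on $a$, with each multiplication by $\hatcH(\bs)$ contributing a factor $\alpha n\|\bs\|_\infty$ from the at-most-$\alpha n$ nonzeros per row and the entrywise bound $\|\hatcH(\bs)\|_\infty=\|\bs\|_\infty$. The only cosmetic difference is that the paper packages your row-by-row estimate as the norm inequality $\|\bm{A}\bm{B}\|_{2,\infty}\leq\|\bm{A}\|_{1,\infty}\|\bm{B}\|_{2,\infty}$ combined with $\|\hatcH(\bs)\|_{1,\infty}\leq\alpha n\|\bs\|_\infty$, whereas you derive the same bound explicitly.
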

\begin{proof}
This proof is also done by mathematical induction. When $a=0$, $\|\bm{W}\|_{2,\infty}\leq\sqrt{\frac{\mu c_sr}{n}}$ is satisfied following the assumption. So we have the base case. Now, we assume $\|(\hatcH(\bs))^a\bm{W}\|_{2,\infty}\leq \sqrt{\frac{\mu c_s r}{n}}(\alpha n\|\bs\|_\infty)^a $. Then, we can show
\begin{align*}
\|\hatcH(\bs)^{a+1}\bm{W}\|_{2,\infty} 
&= \|\hatcH(\bs)\hatcH(\bs)^a\bm{W}\|_{2,\infty} \cr
&\leq \|\hatcH(\bs)\|_{1,\infty}\|\hatcH(\bs)^a\bm{W}\|_{2,\infty} \cr
&\leq \alpha n\|\hatcH(\bs)\|_{\infty}\sqrt{\frac{\mu c_s r}{n}}(\alpha n\|\bs\|_\infty)^a \cr
&= \sqrt{\frac{\mu c_s r}{n}}(\alpha n\|\bs\|_\infty)^{a+1}
\end{align*}
where the first inequality uses $\|\bm{A}\bm{B}\|_{2,\infty}\leq \|\bm{A}\|_{1,\infty}\|\bm{B}\|_{2,\infty}$ and the second inequality uses Assumption~\nameref{assume:Sparse} and the induction hypothesis. This finishes the proof.
\end{proof}

\subsection{Local Analysis}

We proceed via a series of lemmas characterizing, in some neighborhood of the ground truth, the behaviors of the iterates produced by ASAP, and establish Theorem~\ref{thm:local convergence} at the end of this subsection. For the ease of notations, we use $\tau:=4\alpha\mu c_s r\kappa$ and $\upsilon:=3\tau(2\sqrt{\mu c_s r}\kappa+\mu c_s r \kappa)$ in the subsequent proofs. Under Assumption~\nameref{assume:Sparse}, $\tau\lesssim\mathcal{O}(1/\mu c_s r\kappa )$, and $\upsilon\lesssim\mathcal{O}(1)$. Also recall that $\bL=\cH(\bx)$, $\sigma_i^{x}$ denotes the $i$-th singular value of $\cH(\bx)$, and $\bL_k$ is the estimate of $\bL$ at the $k$-th iteration.

\begin{lemma} \label{lemma:Bound_eigenvalues}
Let $\sigma^{(k)}_{i}$ denote the $i$-th singular value of $\bL_k$.  
If \[
\|\bL-\bL_k\|_2 \leq 2\tau \gamma^k\sigma_r^x
\]
for some $\gamma \in(0,1)$, then 
\begin{equation}  \label{eq:Bound_eigenvalues 2}
(1-2\tau) \sigma_i^x\leq  \sigma^{(k)}_i \leq(1+2\tau) \sigma_i^x
\end{equation}
hold for $1\leq i\leq r$ and $k\geq 0$.
\end{lemma}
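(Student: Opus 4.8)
The plan is to invoke Weyl's perturbation inequality for singular values, which is essentially the only ingredient required. Recall that $\bL=\cH(\bx)$ has exactly $r$ nonzero singular values $\sigma_1^x\geq\cdots\geq\sigma_r^x>0$, and that $\bL_k$ is itself a rank $r$ matrix with singular values $\sigma_1^{(k)}\geq\cdots\geq\sigma_r^{(k)}$. The goal is to show that the relative perturbation of each $\sigma_i^{(k)}$ away from $\sigma_i^x$ is controlled by $2\tau$, uniformly in $i\in\{1,\dots,r\}$ and $k\geq 0$.

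First I would apply Weyl's inequality, which states that for any two matrices the $i$-th singular values differ by at most the spectral norm of their difference, i.e. $|\sigma_i^{(k)}-\sigma_i^x|\leq\|\bL_k-\bL\|_2$ for every $i$. Combining this with the hypothesis $\|\bL-\bL_k\|_2\leq 2\tau\gamma^k\sigma_r^x$ immediately gives $|\sigma_i^{(k)}-\sigma_i^x|\leq 2\tau\gamma^k\sigma_r^x$.

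Next I would discard the decay factor and convert the right-hand side from $\sigma_r^x$ to $\sigma_i^x$. Since $\gamma\in(0,1)$ and $k\geq 0$ force $\gamma^k\leq 1$, and since the singular values are ordered so that $\sigma_r^x\leq\sigma_i^x$ for all $1\leq i\leq r$, we obtain $2\tau\gamma^k\sigma_r^x\leq 2\tau\sigma_r^x\leq 2\tau\sigma_i^x$. Hence $|\sigma_i^{(k)}-\sigma_i^x|\leq 2\tau\sigma_i^x$, and splitting this absolute value into its two one-sided bounds and rearranging yields exactly $(1-2\tau)\sigma_i^x\leq\sigma_i^{(k)}\leq(1+2\tau)\sigma_i^x$, as claimed.

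There is no genuine obstacle here: the statement is a direct consequence of singular value stability, and the proof is essentially two lines once Weyl's inequality is on the table. The only points that warrant any care are the two elementary monotonicity observations that permit the passage from $\sigma_r^x$ to $\sigma_i^x$, namely that $\gamma^k\leq 1$ holds uniformly in $k$ and that $\sigma_r^x$ is the smallest of the relevant singular values. These are precisely what make the conclusion hold simultaneously for every $1\leq i\leq r$ and every $k\geq 0$, which is how the lemma will later be used to keep the conditioning of $\bL_k$ under control throughout the iteration.
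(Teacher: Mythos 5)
Your proposal is correct and follows essentially the same route as the paper: Weyl's inequality for singular values combined with $\gamma^k\leq 1$, and then the monotonicity $\sigma_r^x\leq\sigma_i^x$ to pass from the uniform bound $2\tau\sigma_r^x$ to the relative bound $2\tau\sigma_i^x$. The paper states this more tersely (one displayed inequality plus a remark that both matrices are rank $r$), but the substance is identical to what you wrote.
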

\begin{proof}
By Weyl's inequality and the fact $\gamma<1$, we have
\begin{equation}  \label{eq:Bound_eigenvalues 1}
|\sigma_i^x-\sigma^{(k)}_i| \leq\|\bL-\bL_k\|_2\leq 2\tau\sigma_r^x,
\end{equation}
which implies the claim immediately, since both $\bL$ and $\bL_k$ are rank $r$ matrices.
\end{proof}

\begin{lemma}   \label{lemma:Bound_of_S-S_k}
Let $\bs\in\C^n$ satisfy Assumption \nameref{assume:Sparse}.
Recall that  $\beta=\frac{\mu c_s r}{2\kappa n}$. If 
\[
\|\bL-\bL_k\|_2 \leq 2\tau \gamma^k\sigma_r^x,\quad
\|\bx-\bx_k\|_\infty \leq \frac{\tau-2\tau^2}{8\alpha \kappa n} \gamma^k\sigma_r^x,
\]
then we have
\[
\supp(\bs_{k+1})\subseteq \supp(\bs), ~ \|\bs-\bs_{k+1}\|_\infty \leq  \frac{\tau}{4\alpha \kappa n } \gamma^k\sigma_r^x.
\]
\end{lemma}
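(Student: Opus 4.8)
The plan is to analyze the hard-thresholding step $\bs_{k+1}=\mathcal{T}_{\zeta_{k+1}}(\bz-\bx_k)$ directly, coordinate by coordinate, using $\bz=\bx+\bs$ to write $\bz-\bx_k=\bs+(\bx-\bx_k)$. The first task is to pin down two-sided bounds on the threshold $\zeta_{k+1}=\beta\gamma^k\sigma_1(\bL_k)$. Applying Lemma~\ref{lemma:Bound_eigenvalues} to $\sigma_1(\bL_k)$ and using $\sigma_1^x=\kappa\sigma_r^x$ together with the calibrated choice $\beta=\frac{\mu c_s r}{2\kappa n}$, one obtains $\frac{\mu c_s r(1-2\tau)}{2n}\gamma^k\sigma_r^x\le\zeta_{k+1}\le\frac{\mu c_s r(1+2\tau)}{2n}\gamma^k\sigma_r^x$; the factor $1/\kappa$ in $\beta$ is exactly what cancels the $\kappa$ carried by $\sigma_1^x$. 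The identities I would keep at hand are that, by $\tau=4\alpha\mu c_s r\kappa$, the assumed bound $\frac{\tau-2\tau^2}{8\alpha\kappa n}$ equals precisely $\frac{\mu c_s r(1-2\tau)}{2n}$, while the target bound $\frac{\tau}{4\alpha\kappa n}$ equals precisely $\frac{\mu c_s r}{n}$.

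For the support claim, I would restrict to $t\in\supp(\bs)^c$, where $[\bs]_t=0$ and hence $[\bz-\bx_k]_t=[\bx-\bx_k]_t$. The hypothesis $\|\bx-\bx_k\|_\infty\le\frac{\tau-2\tau^2}{8\alpha\kappa n}\gamma^k\sigma_r^x$ together with the identity above gives $|[\bz-\bx_k]_t|\le\frac{\mu c_s r(1-2\tau)}{2n}\gamma^k\sigma_r^x\le\zeta_{k+1}$, so the thresholding operator sets $[\bs_{k+1}]_t=0$. This yields $\supp(\bs_{k+1})\subseteq\supp(\bs)$.

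For the $\ell_\infty$ bound, since both $\bs$ and $\bs_{k+1}$ vanish off $\Omega:=\supp(\bs)$, only $t\in\Omega$ matters, and I would split into the two outcomes of the thresholding. If the entry is retained, then $[\bs_{k+1}]_t=[\bz-\bx_k]_t$ and $[\bs-\bs_{k+1}]_t=-[\bx-\bx_k]_t$, whence $|[\bs-\bs_{k+1}]_t|\le\|\bx-\bx_k\|_\infty$. If the entry is zeroed, then $|[\bz-\bx_k]_t|\le\zeta_{k+1}$ and $[\bs-\bs_{k+1}]_t=[\bs]_t=[\bz-\bx_k]_t-[\bx-\bx_k]_t$, so by the triangle inequality $|[\bs-\bs_{k+1}]_t|\le\zeta_{k+1}+\|\bx-\bx_k\|_\infty$. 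In either case $|[\bs-\bs_{k+1}]_t|\le\zeta_{k+1}+\|\bx-\bx_k\|_\infty$; substituting the upper bound on $\zeta_{k+1}$ and the hypothesis makes the two $\pm2\tau$ contributions cancel, leaving exactly $\frac{\mu c_s r}{n}\gamma^k\sigma_r^x=\frac{\tau}{4\alpha\kappa n}\gamma^k\sigma_r^x$.

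The computation itself is elementary; the real content, and the only delicate point, is the calibration of constants. The choice $\beta=\frac{\mu c_s r}{2\kappa n}$ must simultaneously make $\zeta_{k+1}$ large enough (through its lower bound) to annihilate every clean coordinate, yet small enough (through its upper bound) that the residual corruption error telescopes to the claimed value. The fact that the assumed $\|\bx-\bx_k\|_\infty$ bound matches the threshold's lower bound on the nose, and that the two then cancel against the threshold's upper bound, is forced by the definition of $\tau$ and the value of $\beta$; checking that these two inequalities close with the right slack is where I would concentrate the care. I would also note that Assumption~\nameref{assume:Sparse} guarantees $\tau<\tfrac12$, so that $1-2\tau>0$ and all intermediate quantities remain positive.
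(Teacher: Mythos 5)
Your proposal is correct and follows essentially the same route as the paper's proof: a coordinate-wise case analysis of the hard-thresholding step on $\supp(\bs)^c$ versus $\supp(\bs)$ (retained/zeroed entries), with Lemma~\ref{lemma:Bound_eigenvalues} supplying two-sided bounds on $\zeta_{k+1}$ so that the $\pm 2\tau$ terms cancel to give exactly $\frac{\tau}{4\alpha\kappa n}\gamma^k\sigma_r^x$. The only difference is cosmetic: you rewrite the constants as $\frac{\mu c_s r(1\pm 2\tau)}{2n}$ via the identity $\frac{\tau}{8\alpha\kappa n}=\frac{\mu c_s r}{2n}$, while the paper keeps them in the form $\frac{\tau(1\pm2\tau)}{8\alpha\kappa n}$.
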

\begin{proof}
Denote $\Omega := \supp(\bs)$. Notice that, for every entry of $\bs_{k+1}$, we have
\begin{align*}
[\bs_{k+1}]_{t}&=[\mathcal{T}_{\zeta_{k+1}} (\bz-\bx_{k})]_{t}=[\mathcal{T}_{\zeta_{k+1}} (\bs+\bx-\bx_{k})]_{t} \cr
&=
\begin{cases}
\mathcal{T}_{\zeta_{k+1}} ([\bs+\bx-\bx_{k}]_{t}) &\quad t\in\Omega \cr
\mathcal{T}_{\zeta_{k+1}} ([\bx-\bx_{k}]_{t})        &\quad t\in\Omega^c  \cr
\end{cases}.
\end{align*}
Let $\sigma_{i}^{(k)}$ denote the $i$-th singular value of $\bL_k$. By Lemma~\ref{lemma:Bound_eigenvalues},
\begin{align*}
\|\bx-\bx_{k}\|_\infty
&\leq  \frac{\tau(1-2\tau)}{8\alpha\kappa n } \gamma^k\sigma_r^x \\
&\leq \frac{\tau(1-2\tau)}{8\alpha \kappa^2 n} \frac{1}{1-2\tau} \gamma^{k}\sigma_1^{(k)} 
= \zeta_{k+1},
\end{align*}
which implies $[\bs_{k+1}]_{t}=0$ for all $t\in \Omega^c$; in other words, $\supp(\bs_{k+1})\subseteq \Omega$. Denote $\Omega_{k+1}:=\supp(\bs_{k+1})$. Then, for every entry of $\bs-\bs_{k+1}$ in $\Omega$, it holds that 
\begin{align*}
[\bs-\bs_{k+1}]_{t} &=
\begin{cases}
[\bx_{k}-\bx]_{t} &   \cr
[\bs]_{t}         & \cr
\end{cases} \leq
\begin{cases}
\|\bx-\bx_{k}\|_\infty       &  \cr
\|\bx-\bx_{k}\|_\infty +\zeta_{k+1} & \cr
\end{cases}\cr
&\leq
\begin{cases}
\frac{\tau(1-2\tau)}{8\alpha \kappa n } \gamma^k\sigma_r^x       & (i,j)\in \Omega_{k+1}   \cr
\frac{\tau}{4\alpha\kappa n } \gamma^k\sigma_r^x & (i,j)\in \Omega\backslash\Omega_{k+1}. \cr
\end{cases}
\end{align*}
The last step holds since $\zeta_{k+1}\leq \frac{\tau(1+2\tau)}{8\alpha\kappa n } \gamma^k\sigma_r^x$, which follows from Lemma~\ref{lemma:Bound_eigenvalues}. 
\end{proof}

\begin{lemma} \label{lemma:norm_of_Z}
Let $\bx,\bs\in\C^n$ satisfy Assumptions \nameref{assume:Inco} and \nameref{assume:Sparse}, respectively. If $\bL_k$ is $4\mu\kappa^2$-incoherent and
\[
\|\bL-\bL_k\|_2 \leq 2\tau \gamma^k\sigma_r^x,\quad
\|\bx-\bx_k\|_\infty \leq \frac{\tau-2\tau^2}{8\alpha\kappa  n} \gamma^k\sigma_r^x,
\]
then we have
\begin{align}  
\|(\mathcal{P}_{T_k}-\mathcal{I})\bL+\mathcal{P}_{T_k}\cH(\bs-\bs_{k+1})\|_2 \leq \tau\gamma^{k+1}\sigma_r^x,\label{eq:norm_of_Z 1} \\
\|(\mathcal{P}_{T_k}-\mathcal{I})\bL+\mathcal{P}_{T_k}\cH(\bs-\bs_{k+1})\|_{2,\infty} \leq n^{-\frac12} \upsilon\gamma^{k}\sigma_r^x,\label{eq:norm_of_Z 2}
\end{align}
provided $1>\gamma\geq 4\tau + \sqrt{1/12}$.
\end{lemma}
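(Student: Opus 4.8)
The plan is to bound the two norms of $\bm{W}:=(\mathcal{P}_{T_k}-\mathcal{I})\bL+\mathcal{P}_{T_k}\cH(\bs-\bs_{k+1})$ separately, in each case splitting $\bm{W}$ by the triangle inequality into the \emph{low-rank residual} $(\mathcal{P}_{T_k}-\mathcal{I})\bL$ and the \emph{sparse term} $\mathcal{P}_{T_k}\cH(\bs-\bs_{k+1})$. The hypotheses here coincide with those of Lemma~\ref{lemma:Bound_of_S-S_k} (recall $\beta=\frac{\mu c_s r}{2\kappa n}$), so I would first invoke it to get $\supp(\bs_{k+1})\subseteq\supp(\bs)$ and $\|\bs-\bs_{k+1}\|_\infty\leq\frac{\tau}{4\alpha\kappa n}\gamma^k\sigma_r^x$. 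The support containment makes $\bs-\bs_{k+1}$ $\alpha$-sparse, so Lemmas~\ref{lemma:bound of sparse matrix} and~\ref{lemma:error_of_projected_max_norm} apply to $\cH(\bs-\bs_{k+1})$.

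For \eqref{eq:norm_of_Z 1}, the sparse term is quick: Lemma~\ref{lemma:P_T X 2-norm ineq} gives $\|\mathcal{P}_{T_k}\cH(\bs-\bs_{k+1})\|_2\leq\sqrt{4/3}\,\|\cH(\bs-\bs_{k+1})\|_2$, and Lemma~\ref{lemma:bound of sparse matrix} with the $\infty$-bound above yields $\frac{\sqrt{1/12}}{\kappa}\tau\gamma^k\sigma_r^x$ (using $\sqrt{4/3}/4=\sqrt{1/12}$). The residual is the real work. Writing $\bm{E}:=\bL-\bL_k$ and using $\mathcal{P}_{T_k}\bM=\bU_k\bU_k^*\bM+\bM\bV_k\bV_k^*-\bU_k\bU_k^*\bM\bV_k\bV_k^*$, one has $(\mathcal{I}-\mathcal{P}_{T_k})\bL=(\bm{I}-\bU_k\bU_k^*)\bL(\bm{I}-\bV_k\bV_k^*)$. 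Since $(\bm{I}-\bU_k\bU_k^*)\bL_k=0$, substituting $\bL=\bU\bSS\bV^*$ and the identity $(\bm{I}-\bU_k\bU_k^*)\bU=(\bm{I}-\bU_k\bU_k^*)\bm{E}\bV\bSS^{-1}$ cancels the $\bSS$ factor and gives $(\mathcal{I}-\mathcal{P}_{T_k})\bL=(\bm{I}-\bU_k\bU_k^*)\bm{E}\bV\bV^*(\bm{I}-\bV_k\bV_k^*)$. This cancellation is the crucial step: it reduces the bound to $\|\bm{E}\|_2\,\|(\bm{I}-\bV_k\bV_k^*)\bV\|_2\leq\|\bm{E}\|_2^2/\sigma_r^x$, where $\|(\bm{I}-\bV_k\bV_k^*)\bV\|_2=\|(\bm{I}-\bV_k\bV_k^*)\bm{E}^*\bU\bSS^{-1}\|_2\leq\|\bm{E}\|_2/\sigma_r^x$ is the analogous principal-angle estimate; with $\|\bm{E}\|_2\leq 2\tau\gamma^k\sigma_r^x$ this is $4\tau^2\gamma^{2k}\sigma_r^x$. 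Adding the two pieces, the claim reduces to $4\tau\gamma^{k-1}+\frac{\sqrt{1/12}}{\kappa\gamma}\leq1$; since $\gamma<1$ the left side is largest at $k=0$, and using $\kappa\geq1$ it is $\leq\frac{4\tau+\sqrt{1/12}}{\gamma}\leq1$ by the hypothesis $\gamma\geq4\tau+\sqrt{1/12}$.

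For \eqref{eq:norm_of_Z 2} I would split $\bm{W}$ the same way and match the two summands of $\upsilon=3\tau(2\sqrt{\mu c_s r}\kappa+\mu c_s r\kappa)$ against them. For the residual I would write $(\bm{I}-\bU_k\bU_k^*)\bL(\bm{I}-\bV_k\bV_k^*)=\bL(\bm{I}-\bV_k\bV_k^*)-\bU_k\bU_k^*\bL(\bm{I}-\bV_k\bV_k^*)$ and estimate each row's $\ell_2$ norm. The first summand uses $\bL=\bU\bSS\bV^*$, the incoherence $\|\bU\|_{2,\infty}\leq\sqrt{\mu c_s r/n}$ from \nameref{assume:Inco}, $\|\bSS\|_2=\sigma_1^x=\kappa\sigma_r^x$, and $\|(\bm{I}-\bV_k\bV_k^*)\bV\|_2\leq\|\bm{E}\|_2/\sigma_r^x$, giving $n^{-1/2}\,2\tau\sqrt{\mu c_s r}\kappa\gamma^k\sigma_r^x$; the second uses the $4\mu\kappa^2$-incoherence $\|\bU_k\|_{2,\infty}\leq2\kappa\sqrt{\mu c_s r/n}$ together with $\|\bL(\bm{I}-\bV_k\bV_k^*)\|_2=\|\bm{E}(\bm{I}-\bV_k\bV_k^*)\|_2\leq\|\bm{E}\|_2$, giving $n^{-1/2}\,4\tau\sqrt{\mu c_s r}\kappa\gamma^k\sigma_r^x$; together these are $n^{-1/2}\cdot3\tau\cdot2\sqrt{\mu c_s r}\kappa\gamma^k\sigma_r^x$. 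For the sparse term I would pass from entries to rows via $\|\cdot\|_{2,\infty}\leq\sqrt{n_2}\,\|\cdot\|_\infty\leq\sqrt{n}\,\|\cdot\|_\infty$ and apply Lemma~\ref{lemma:error_of_projected_max_norm}, which after substituting the $\infty$-bound on $\bs-\bs_{k+1}$ gives exactly $n^{-1/2}\cdot3\tau\mu c_s r\kappa\gamma^k\sigma_r^x$. Summing the two reproduces $n^{-1/2}\upsilon\gamma^k\sigma_r^x$, with no contraction condition on $\gamma$ needed here.

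The main obstacle is the low-rank residual in both bounds. For \eqref{eq:norm_of_Z 1} the delicate point is the $\bSS$-cancellation that keeps the condition number out of the estimate; the naive factorization would leave an extra $\sigma_1^x/\sigma_r^x=\kappa$ and force a requirement stronger than $\gamma\geq4\tau+\sqrt{1/12}$. For \eqref{eq:norm_of_Z 2} the delicacy is the row-wise bookkeeping: each factor must be routed through the correct incoherence estimate — the $\sqrt{\mu c_s r/n}$ of $\bL$ versus the $2\kappa\sqrt{\mu c_s r/n}$ of $\bL_k$ — so that the constants assemble into precisely the two terms of $\upsilon$ rather than a looser multiple. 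Everything else is triangle inequalities and substitution of the already-established bounds on $\|\bL-\bL_k\|_2$ and $\|\bs-\bs_{k+1}\|_\infty$.
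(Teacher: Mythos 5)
Your proposal is correct and takes essentially the same route as the paper's proof: first invoke Lemma~\ref{lemma:Bound_of_S-S_k}, then split by the triangle inequality, bound the sparse term via Lemmas~\ref{lemma:P_T X 2-norm ineq} and \ref{lemma:bound of sparse matrix} for the spectral norm and via Lemma~\ref{lemma:error_of_projected_max_norm} (after passing from $\|\cdot\|_\infty$ to $\|\cdot\|_{2,\infty}$ with a $\sqrt{n}$ factor) for the row norm, and bound the low-rank residual by $\|\bL-\bL_k\|_2^2/\sigma_r^x$ and $3\kappa\sqrt{\mu c_s r/n}\,\|\bL-\bL_k\|_2$, which assemble into exactly the paper's constants and the same condition $\gamma\geq 4\tau+\sqrt{1/12}$. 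The only real difference is that where the paper simply cites \citep[Lemma~6]{cai2017accelerated} for $\|(\mathcal{P}_{T_k}-\mathcal{I})\bL\|_2\leq\|\bL-\bL_k\|_2^2/\sigma_r^x$, you re-derive it from scratch via the $\bm{\Sigma}$-cancellation identity $(\bm{I}-\bU_k\bU_k^*)\bU=(\bm{I}-\bU_k\bU_k^*)(\bL-\bL_k)\bV\bm{\Sigma}^{-1}$ -- a self-contained but mathematically equivalent argument.
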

\begin{proof}
From \citep[Lemma~6]{cai2017accelerated}, we obtain the inequality
\begin{align*}
    \|(\mathcal{P}_{T_k}-\mathcal{I})\bL\|_2 &\leq \|\bL-\bL_k\|_2^2/\sigma_r^x.
\end{align*}
First note that Lemma~\ref{lemma:Bound_of_S-S_k} holds. We further get
\begin{align*}
&\quad~\|(\mathcal{P}_{T_k}-\mathcal{I})\bL+\mathcal{P}_{T_k}\cH(\bs-\bs_{k+1})\|_2 \cr
&\leq \|\bL-\bL_k\|_2^2/\sigma_r^x + \sqrt{4/3}\|\cH(\bs-\bs_{k+1})\|_2  \cr
&\leq 2\tau\gamma^k\|\bL-\bL_k\|_2+ \sqrt{4/3}\alpha n\|\bs-\bs_{k+1}\|_\infty  \cr
&\leq \left( 4\tau + \sqrt{1/12}\right) \tau\gamma^k\sigma_r^x 
~\leq \tau\gamma^{k+1}\sigma_r^x,
\end{align*}
where the first inequality uses Lemma~\ref{lemma:P_T X 2-norm ineq}, the second inequality uses Lemma~\ref{lemma:bound of sparse matrix}, and the last inequality uses the bound of $\gamma$. 

To bound $\|(\mathcal{P}_{T_k}-\mathcal{I})\bL+\mathcal{P}_{T_k}\cH(\bs-\bs_k)\|_{2,\infty}$, note that
\begin{align*}
&\quad~ \|(\mathcal{I}-\mathcal{P}_{T_k})\bL\|_{2,\infty}\cr &= \max_l \|\bm{e}_l^T(\bU\bU^*-\bU_k\bU^*_k)(\bL-\bL_k)(\bm{I}-\bV_k\bV^*_k)\|_2  \cr
&\leq (2\kappa + 1 )\sqrt{\frac{\mu c_s r}{n}}\|\bL-\bL_k\|_2 
~\leq 3\kappa \sqrt{\frac{\mu c_s r}{n}}\|\bL-\bL_k\|_2,
\end{align*}
where the first inequality holds since $\bL$ is $\mu$-incoherent and $\bL_k$ is $4\mu \kappa^2$-incoherent.
Hence, we have
\begin{align*}
&~\|(\mathcal{P}_{T_k}-\mathcal{I})\bL+\mathcal{P}_{T_k}\cH(\bs-\bs_{k+1})\|_{2,\infty} \cr
    \leq&~ \|(\mathcal{I}-\mathcal{P}_{T_k})\bL\|_{2,\infty}+\|\mathcal{P}_{T_k}\cH(\bs-\bs_{k+1})\|_{2,\infty} \cr
	\leq&~ 3\kappa\sqrt{\frac{\mu c_s r}{n}}\|\bL-\bL_k\|_2 + \sqrt{n}\|\mathcal{P}_{T_k}\cH(\bs-\bs_{k+1})\|_\infty\cr
	\leq&~ 3 \kappa\sqrt{\frac{\mu c_s r}{n}}\|\bL-\bL_k\|_2 +12\alpha \mu c_s r \kappa^2 \sqrt{n}\|\bs-\bs_{k+1}\|_\infty\cr
	\leq&~ n^{-1/2}\upsilon\gamma^{k}\sigma_r^x,
\end{align*}
where the third inequality uses Lemma~\ref{lemma:error_of_projected_max_norm}, and the last inequality follows from Lemma~\ref{lemma:Bound_of_S-S_k} and the definition of $\upsilon$.
\end{proof}

\begin{lemma} \label{lemma:Bound_of_L-L_k_2_norm}
Let $\bx,\bs\in\C^n$ satisfy Assumptions \nameref{assume:Inco} and \nameref{assume:Sparse}, respectively. If $\bL_k$ is $4\mu\kappa^2$-incoherent and
\[
\|\bL-\bL_k\|_2 \leq 2\tau \gamma^k\sigma_r^x,\quad
\|\bx-\bx_k\|_\infty \leq \frac{\tau-2\tau^2}{8\alpha  \kappa n} \gamma^k\sigma_r^x,
\]
then we have 
\[
\|\bL-\bL_{k+1}\|_2 \leq 2\tau \gamma^{k+1}\sigma_r^x,\quad
\]
provided $1>\gamma\geq 4\tau + \sqrt{1/12}$.
\end{lemma}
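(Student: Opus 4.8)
The plan is to reduce the desired bound to the spectral-norm estimate already proved in Lemma~\ref{lemma:norm_of_Z}, by means of the standard best rank-$r$ approximation trick. First I would unfold the update rule \eqref{eq:step: update L}. Writing $\bz=\bx+\bs$ and using the linearity of $\cH$, we have $\cH(\bz-\bs_{k+1})=\cH(\bx)+\cH(\bs-\bs_{k+1})=\bL+\cH(\bs-\bs_{k+1})$. Setting $\bM:=\mathcal{P}_{T_k}\cH(\bz-\bs_{k+1})=\mathcal{P}_{T_k}\bigl(\bL+\cH(\bs-\bs_{k+1})\bigr)$, the iterate becomes $\bL_{k+1}=\mathcal{D}_r\bM$, i.e. the best rank-$r$ approximation of $\bM$.

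The second step is the triangle inequality combined with the optimality of $\mathcal{D}_r$. Since $\bL_{k+1}$ is the truncated SVD of $\bM$, we have $\|\bM-\bL_{k+1}\|_2=\sigma_{r+1}(\bM)$; because $\bL$ is itself rank $r$, Eckart--Young (equivalently, Weyl's inequality) gives $\sigma_{r+1}(\bM)=\min_{\rank(\bN)\le r}\|\bM-\bN\|_2\le\|\bM-\bL\|_2$. Hence
\[
\|\bL-\bL_{k+1}\|_2 \le \|\bL-\bM\|_2+\|\bM-\bL_{k+1}\|_2 \le 2\|\bL-\bM\|_2.
\]

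It remains to identify $\bL-\bM$ with the quantity controlled in Lemma~\ref{lemma:norm_of_Z}. From the definition of $\bM$,
\[
\bL-\bM = (\mathcal{I}-\mathcal{P}_{T_k})\bL - \mathcal{P}_{T_k}\cH(\bs-\bs_{k+1}),
\]
so that $\|\bL-\bM\|_2=\|(\mathcal{P}_{T_k}-\mathcal{I})\bL+\mathcal{P}_{T_k}\cH(\bs-\bs_{k+1})\|_2$ by invariance of the spectral norm under negation. The hypotheses of the present lemma ($\bL_k$ is $4\mu\kappa^2$-incoherent, the same two initial bounds on $\|\bL-\bL_k\|_2$ and $\|\bx-\bx_k\|_\infty$, and $1>\gamma\ge 4\tau+\sqrt{1/12}$) coincide exactly with those of Lemma~\ref{lemma:norm_of_Z}, so its estimate \eqref{eq:norm_of_Z 1} applies directly and yields $\|\bL-\bM\|_2\le\tau\gamma^{k+1}\sigma_r^x$. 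Combining with the previous display gives $\|\bL-\bL_{k+1}\|_2\le 2\tau\gamma^{k+1}\sigma_r^x$, as claimed.

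Because essentially all of the analytic work has been absorbed into Lemma~\ref{lemma:norm_of_Z}, I anticipate no serious obstacle here; the only points requiring care are verifying that the hypothesis sets of the two lemmas match verbatim (so that \eqref{eq:norm_of_Z 1} is legitimately available), and correctly invoking the optimality of $\mathcal{D}_r$ in the \emph{spectral} norm rather than the Frobenius norm. The latter is the one place a reader might slip: the best rank-$r$ approximation property holds in both norms, but it is the clean spectral-norm identity $\|\bM-\bL_{k+1}\|_2=\sigma_{r+1}(\bM)\le\|\bM-\bL\|_2$ that produces the factor of $2$ and lets the $\tau\gamma^{k+1}\sigma_r^x$ bound pass through to $2\tau\gamma^{k+1}\sigma_r^x$.
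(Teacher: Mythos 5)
Your proposal is correct and follows essentially the same route as the paper's own proof: a triangle inequality about $\mathcal{P}_{T_k}\cH(\bz-\bs_{k+1})$, the factor of $2$ from the best rank-$r$ approximation property of $\mathcal{D}_r$ (the paper invokes this optimality directly, just as you do via $\sigma_{r+1}$ and Eckart--Young), and then the bound \eqref{eq:norm_of_Z 1} from Lemma~\ref{lemma:norm_of_Z}, whose hypotheses indeed match verbatim. No gaps.
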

\begin{proof}
A direct calculation yields 
\begin{align*}
&\quad~\|\bL-\bL_{k+1}\|_2 \cr &\leq\|\bL-\mathcal{P}_{T_k}\cH(\bz-\bs_{k+1})\|_2+\|\mathcal{P}_{T_k}\cH(\bz-\bs_{k+1})-\bL_{k+1}\|_2\cr
&\leq 2\|\bL-\mathcal{P}_{T_k}\cH(\bz-\bs_{k+1})\|_2\cr
&=2\|\bL-\mathcal{P}_{T_k}(\bL+\cH(\bs-\bs_{k+1}))\|_2\cr
&= 2\|(\mathcal{P}_{T_k}-\mathcal{I})\bL+\mathcal{P}_{T_k}\cH(\bs-\bs_{k+1})\|_2 \leq 2 \tau\gamma^{k+1}\sigma_r^x, 
\end{align*}
where the second inequality holds since $\bL_{k+1}$ is the best rank $r$ approximation to $\mathcal{P}_{{T_k}}\cH(\bz-\bs_{k+1})$, and the last inequality follows from \eqref{eq:norm_of_Z 1}. 
\end{proof}

\begin{lemma}  \label{lemma:Bound_of_L-L_k_infty_norm}
Let $\bx,\bs\in\C^n$ satisfy Assumptions \nameref{assume:Inco} and \nameref{assume:Sparse}, respectively.
If $\bL_k$ is $4\mu\kappa^2$-incoherent and
\[
\|\bL-\bL_k\|_2 \leq 2\tau \gamma^k\sigma_r^x,\quad
\|\bx-\bx_k\|_\infty \leq \frac{\tau-2\tau^2}{8\alpha  \kappa n} \gamma^k\sigma_r^x,
\]
then we have 
\[
\|\bx-\bx_{k+1}\|_\infty \leq \frac{\tau-2\tau^2}{8\alpha  \kappa n} \gamma^{k+1}\sigma_r^x,
\]
provided $1>\gamma\geq\max\{4\tau +\sqrt{1/12},\frac{4\upsilon}{(1-12\tau)(1-\tau-\upsilon)^2}\}$.
\end{lemma}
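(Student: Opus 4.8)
The plan is to reduce the $\ell_\infty$ bound on the signal to an entrywise max-norm bound on the Hankel matrices, and then to analyze the rank-$2r$ truncation by a perturbation series. Since $\bx=\cH^\dagger(\bL)$ and $\bx_{k+1}=\cH^\dagger(\bL_{k+1})$, linearity of $\cH^\dagger$ together with Lemma~\ref{lemma:augemtned property 1} gives
\[
\|\bx-\bx_{k+1}\|_\infty=\|\cH^\dagger(\bL-\bL_{k+1})\|_\infty\leq\|\bL-\bL_{k+1}\|_\infty=\|\oL-\oL_{k+1}\|_\infty,
\]
the last step being the augmentation identity $\|\oM\|_\infty=\|\bM\|_\infty$. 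Working with the augmented Hermitian matrices is convenient: by Lemma~\ref{lemma:augmentation}, $\oZ:=\widehat{\mathcal{P}}_{T_k}\hatcH(\bz-\bs_{k+1})=\oL+\oE$ is Hermitian, $\oL_{k+1}$ is exactly its best rank-$2r$ approximation, and the Hermitian error $\oE$ (the augmentation of $(\mathcal{P}_{T_k}-\mathcal{I})\bL+\mathcal{P}_{T_k}\cH(\bs-\bs_{k+1})$) already satisfies $\|\oE\|_2\leq\tau\gamma^{k+1}\sigma_r^x$ and $\|\oE\|_{2,\infty}\leq n^{-1/2}\upsilon\gamma^{k}\sigma_r^x$ by \eqref{eq:norm_of_Z 1}--\eqref{eq:norm_of_Z 2}. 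Thus it suffices to bound $\|\oL-\oL_{k+1}\|_\infty$.

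First I would expand $\oL-\oL_{k+1}$ as a perturbation series in $\oE$ anchored at the incoherent $\oL$. The $2r$ nonzero eigenvalues of $\oL=\oU\bm{\Sigma}\oU^*$ all have magnitude at least $\sigma_r^x$ and are separated from the zero eigenvalue by a gap $\sigma_r^x$, whereas $\|\oE\|_2\leq\tau\sigma_r^x\ll\sigma_r^x$; by Weyl's inequality the dominant $2r$ eigenvalues of $\oZ$ stay isolated, so both $\oL$ and $\oL_{k+1}$ admit the resolvent representation $\frac{1}{2\pi\imath}\oint_\Gamma z(z\bm{I}-\oZ)^{-1}\,dz$ (and likewise with $\oL$) over a contour $\Gamma$ enclosing only the dominant cluster. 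Expanding $(z\bm{I}-\oZ)^{-1}$ in a Neumann series around $(z\bm{I}-\oL)^{-1}$ and integrating term by term yields $\oL-\oL_{k+1}=-\sum_{j\geq1}\widehat{\bm{R}}_j$, where residue calculus identifies the first-order term as the tangent-space projection
\[
\widehat{\bm{R}}_1=\oU\oU^*\oE+\oE\oU\oU^*-\oU\oU^*\oE\oU\oU^*,
\]
and each higher-order $\widehat{\bm{R}}_j$ is an alternating product of $j$ copies of $\oE$ with resolvents of $\oL$.

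The core of the proof is the entrywise estimate of each $\widehat{\bm{R}}_j$. For $\widehat{\bm{R}}_1$ I would place the incoherence at the two ends: using that $\oE$ is Hermitian (so $\|\oE\bm{e}_b\|_2\leq\|\oE\|_{2,\infty}$), the bound $[\oU\oU^*\oE]_{a,b}=\langle\oU\oU^*\bm{e}_a,\oE\bm{e}_b\rangle\leq\|\oU\|_{2,\infty}\|\oE\|_{2,\infty}$ and its symmetric counterpart, together with $\|\oU\oU^*\oE\oU\oU^*\|_\infty\leq\|\oU\|_{2,\infty}^2\|\oE\|_2$, control $\|\widehat{\bm{R}}_1\|_\infty$ through $\|\oU\|_{2,\infty}\leq\sqrt{\mu c_s r/n}$ (Lemma~\ref{lemma:augmentation}(3)) and the two norms of $\oE$. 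For $j\geq2$ the two outermost factors are estimated the same way (invoking Lemma~\ref{lemma:error_of_projected_max_norm} for the $\mathcal{P}_{T_k}\cH(\bs-\bs_{k+1})$ part of $\oE$), while every interior factor $\oE(z\bm{I}-\oL)^{-1}$ contributes a spectral-norm ratio of order $\|\oE\|_2/\sigma_r^x\lesssim\tau$; summing the resulting geometric series produces the two endpoint factors $(1-\tau-\upsilon)^{-1}$ and the interior factor $(1-12\tau)^{-1}$. Collecting the terms gives $\|\oL-\oL_{k+1}\|_\infty\lesssim\frac{\upsilon}{(1-12\tau)(1-\tau-\upsilon)^2}\cdot\frac{\mu c_s r}{n}\gamma^{k}\sigma_r^x$, and since $\frac{\tau-2\tau^2}{8\alpha\kappa n}\sigma_r^x=\frac{(1-2\tau)\mu c_s r}{2n}\sigma_r^x$, the hypothesis $\gamma\geq\frac{4\upsilon}{(1-12\tau)(1-\tau-\upsilon)^2}$ converts the trailing $\gamma^{k}$ into $\gamma^{k+1}$ and yields the claim; the requirement $\gamma\geq4\tau+\sqrt{1/12}$ is carried along so that Lemma~\ref{lemma:norm_of_Z} is in force.

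The hard part will be the entrywise bookkeeping of the higher-order terms: the incoherence factor $\|\oU\|_{2,\infty}$ must be kept only at the two ends where the free row and column indices sit, with the interior factors handled purely in spectral norm, and one must simultaneously track the condition number $\kappa$ that enters through the spread of the nonzero eigenvalues of $\oL$ over $[\sigma_r^x,\sigma_1^x]$ (equivalently, through the length of $\Gamma$ and the pointwise size of $(z\bm{I}-\oL)^{-1}$ on it). Verifying that these $\kappa$-factors assemble exactly into the quantity $\upsilon$, and that the series converges --- which is where Assumption~\nameref{assume:Sparse} is used to make $\tau$ and $\upsilon$ small --- is the delicate step; a purely algebraic resolvent expansion can replace the contour integral if one prefers to avoid complex analysis.
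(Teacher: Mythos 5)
Your proposal is correct in its overall architecture and reaches the same reduction as the paper (pass to $\|\bL-\bL_{k+1}\|_\infty$ via Lemma~\ref{lemma:augemtned property 1}, augment to Hermitian matrices via Lemma~\ref{lemma:augmentation}, and feed in the two error bounds \eqref{eq:norm_of_Z 1}--\eqref{eq:norm_of_Z 2} of Lemma~\ref{lemma:norm_of_Z}), but the expansion you use is genuinely different from the paper's. The paper never writes a resolvent or contour integral: it expands each \emph{perturbed} eigenvector through its own eigen-equation, $\bm{u}_j=\sum_{l\geq0}(\oZ/\lambda_j)^l(\oL/\lambda_j)\bm{u}_j$, which produces the double series $\sum_{a,b\geq0}\oZ^a\oL\aug{\bm{U}}_{k+1}\bm{\Lambda}^{-(a+b+1)}\aug{\bm{U}}_{k+1}^*\oL\oZ^b$ anchored at the perturbed eigenpairs $(\aug{\bm{U}}_{k+1},\bm{\Lambda})$; the copies of $\oL$ flanking the $\oZ$-powers let it insert the true incoherent basis $\aug{\bm{U}}\aug{\bm{U}}^*$ and invoke the power lemma $\|\oZ^a\aug{\bm{U}}\|_{2,\infty}\leq\sqrt{\mu c_s r/n}\,(\sqrt{n}\|\oZ\|_{2,\infty})^a$ from \citep{cai2017accelerated}, so there are no multiple poles and no condition-number bookkeeping from a contour. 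Your route is instead the Riesz-projector/Kato series anchored at the clean $\oL$; your first-order term is indeed exactly $\widehat{\mathcal{P}}_T\oE$ (the residue computation you sketch checks out, including the cancellation of the $(\bm{I}-\oU\oU^*)\oE(\bm{I}-\oU\oU^*)$ piece because $z=0$ lies outside the contour), and your endpoint-incoherence/interior-spectral-norm strategy is the right one --- it even yields a slightly tighter first-order bound, of order $\sqrt{\mu c_s r}\,\upsilon/n$ rather than the paper's $\mu c_s r\,\upsilon/n$. What the paper's route buys is that all constants are elementary to track and assemble exactly into the stated threshold $\gamma\geq 4\upsilon/\bigl((1-12\tau)(1-\tau-\upsilon)^2\bigr)$; what yours buys is a transparent geometric meaning of the leading term and resolvents involving only clean quantities.

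Two cautions on the step you yourself flag as delicate. First, if you execute the higher orders by literal residue calculus you will hit multiple poles (repeated eigenvalues of $\oL$, and coincident factors), whose derivative residues generate combinatorial factors; the clean execution keeps the contour-integral form, bounds the integrand uniformly on a two-piece contour around $\pm[\sigma_r^x,\sigma_1^x]$ (splitting each resolvent as $\oU D(z)\oU^*+(\bm{I}-\oU\oU^*)/z$ at the two ends), and only then integrates --- with that choice the $\kappa$-factors you worry about do not blow up. Second, your attribution of the factor $(1-12\tau)^{-1}$ to the interior of the geometric series is not where it actually comes from: in the paper it appears only at the final assembly, as the slack $\frac{1-2\tau}{2}\gamma-5\tau\gamma=\frac{1-12\tau}{2}\gamma$ left over after charging the zeroth/first-order contribution against the target; your own constants must be re-derived and checked to fit under the specific $\gamma$-threshold fixed in the statement, which the hidden constant in \nameref{assume:Sparse} gives you room to do.
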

\begin{proof}
From Lemma~\ref{lemma:augemtned property 1}, $\|\bx-\bx_{k+1}\|_\infty \leq \|\bL-\bL_{k+1}\|_\infty$. We then provide a bound for $\|\bL-\bL_{k+1}\|_\infty$.

Consider the augmented matrix of $\mathcal{P}_{T_k}\cH(\bz-\bs_{k+1})$: 
\begin{align*}
    \widehat{\mathcal{P}}_{T_k}\hatcH(\bz-\bs_{k+1})
    &=\left[\begin{array}{cc}
    \bO & \mathcal{P}_{T_k}\cH(\bz-\bs_{k+1}) \\
    \mathcal{P}_{T_k}\cH(\bz-\bs_{k+1})^* & \bO 
    \end{array}\right] \cr
    &:= \aug{\bm{U}}_{k+1}\bLa\aug{\bm{U}}_{k+1}^*+\aug{\ddot{\bU}}_{k+1}\ddot{\bLa}\aug{\ddot{\bU}}_{k+1}^*,
\end{align*}
where the eigen-decomposition is derived from the SVD of $\mathcal{P}_{T_k}\cH(\bz-\bs_{k+1})$ and $\aug{\bm{U}}_{k+1}\bLa\aug{\bm{U}}_{k+1}^*$ is the best rank $2r$ approximation, as shown in Lemma~\ref{lemma:augmentation}. Moreover,
\begin{align*}
    \aug{\bm{U}}_{k+1}\bLa\aug{\bm{U}}_{k+1}^* = \left[\begin{array}{cc}
    \bO & \bL_{k+1} \\
    \bL_{k+1}^* & \bO 
    \end{array}\right].
\end{align*}
The $i$-th singular value of $\mathcal{P}_{T_k}\cH(\bz-\bs_{k+1})$ is denoted by $\sigma_i$ to ease the notations in proving this lemma.  

Denote $\hatZ=\widehat{\mathcal{P}}_{T_k}\hatcH(\bz-\bs_{k+1})-\hatL=(\widehat{\mathcal{P}}_{T_k}-\mathcal{I})\widehat{\bL}+\widehat{\mathcal{P}}_{T_k}\hatcH(\bs-\bs_{k+1})$. For $1\leq j\leq 2r$, let $\bm{u}_j$ be the $j$-th eigenvector of $\aug{\bm{U}}_{k+1}$. Noticing that $(\oL+\oZ)\bm{u}_j=\lambda_j\bm{u}_j$, we have 
\begin{align*}
\bm{u}_j = \lb\bm{I}-\frac{\oZ}{\lambda_j}\rb^{-1}  \frac{\oL}{\lambda_j}\bm{u}_j= \sum_{l=0}^\infty \left(\frac{\oZ}{\lambda_j}\right)^l\frac{\oL}{\lambda_j}\bm{u}_j
\end{align*}
for all $\bm{u}_j$, where the expansion is valid because for $1\leq i\leq r$,
$$\frac{\|\oZ\|_2}{|\lambda_{i}|}=\frac{\|\oZ\|_2}{|\lambda_{i+r}|}=\frac{\ln\bZ\rn_2}{\sigma_{i}}\leq\frac{\ln\bZ\rn_2}{\sigma_{r}}\leq\frac{\tau}{1-\tau}<1,$$
where in the second last inequality, we use
\eqref{eq:norm_of_Z 1} in Lemma~\ref{lemma:norm_of_Z} and Lemma~\ref{lemma:Bound_eigenvalues}. Then
\begin{align*}
&\quad~\aug{\bm{U}}_{k+1}\bLa\aug{\bm{U}}_{k+1}^* = \sum_{j=1}^{2r} \bm{u}_j\lambda_j\bm{u}_j^*   \cr
&= \sum_{j=1}^{2r} \Bigg( \sum_{a\geq0} \left(\frac{\oZ}{\lambda_j}\right)^a\frac{\oL}{\lambda_j}\Bigg)\bm{u}_j\lambda_j\bm{u}_j^* \Bigg( \sum_{b\geq0} \left(\frac{\oZ}{\lambda_j}\right)^b\frac{\oL}{\lambda_j} \Bigg)^* \cr
&= \sum_{a\geq0}  (\oZ)^a \oL \sum_{j=1}^{2r}\left(\bm{u}_j\frac{1}{\lambda_j^{a+b+1}}\bm{u}_j^* \right) \oL \sum_{b\geq0} (\oZ)^b  \cr
&= \sum_{a,b\geq0}  (\oZ)^a\oL\aug{\bm{U}}_{k+1}\bm{\Lambda} ^{-(a+b+1)}\aug{\bm{U}}_{k+1}^*\oL(\oZ)^b,
\end{align*}
and therefore
\begin{align*}
\hatL_{k+1}-\hatL 
=&\aug{\bm{U}}_{k+1}\bLa\aug{\bm{U}}_{k+1}^* -\oL \cr
= &\oL\aug{\bm{U}}_{k+1}\bm{\Lambda} ^{-1}\aug{\bm{U}}_{k+1}^*\oL-\oL  \cr
& + \sum_{a+b\geq1}  (\oZ)^a\oL\aug{\bm{U}}_{k+1}\bm{\Lambda} ^{-(a+b+1)}\aug{\bm{U}}_{k+1}^*\oL(\oZ)^b  \cr
:=& \bm{Y}_0 + \sum_{a+b\geq1} \bm{Y}_{ab}.
\end{align*}
Hence,
\begin{align*}
\|\hatL_{k+1}-\hatL\|_\infty &\leq \|\bm{Y}_0\|_\infty + \sum_{a+b\geq1} \|\bm{Y}_{ab}\|_\infty.
\end{align*}
We will handle $\bm{Y}_0$ first.  
\begin{align*}
\|\bm{Y}_0\|_\infty &= \max_{ij} |\bm{e}_i^T(\oL\aug{\bm{U}}_{k+1}\bm{\Lambda} ^{-1}\aug{\bm{U}}_{k+1}^*\oL-\oL)\bm{e}_j| \cr
&=\max_{ij} |\bm{e}_i^T\aug{\bm{U}}\aug{\bm{U}}^*(\oL\aug{\bm{U}}_{k+1}\bm{\Lambda} ^{-1}\aug{\bm{U}}_{k+1}^*\oL-\oL)\aug{\bm{U}}\aug{\bm{U}}^*\bm{e}_j| \cr
& \leq \|\aug{\bm{U}}\|_{2,\infty}\|\oL\aug{\bm{U}}_{k+1}\bm{\Lambda} ^{-1}\aug{\bm{U}}_{k+1}^*\oL-\oL\|_2 \|\aug{\bm{U}}\|_{2,\infty} \cr
&\leq \frac{\mu c_s r}{n}\|\oL\aug{\bm{U}}_{k+1}\bm{\Lambda} ^{-1}\aug{\bm{U}}_{k+1}^*\oL-\oL\|_2,
\end{align*}
where the second equality is due to the fact $\oL=\aug{\bm{U}}\aug{\bm{U}}^*\oL=\oL\aug{\bm{U}}\aug{\bm{U}}^*$. 
Since $\oL=\aug{\bm{U}}_{k+1}\bm{\Lambda}\aug{\bm{U}}_{k+1}^*+\aug{\ddot{\bU}}_{k+1}\ddot{\bm{\Lambda}}\aug{\ddot{\bU}}_{k+1}^* -\oZ$,
\begin{align*}
&~\|\oL\aug{\bm{U}}_{k+1}\bm{\Lambda} ^{-1}\aug{\bm{U}}_{k+1}^*\oL-\oL\|_2 \cr
=&~ \|(\aug{\bm{U}}_{k+1}\bm{\Lambda}\aug{\bm{U}}_{k+1}^*+\aug{\ddot{\bU}}_{k+1}\ddot{\bm{\Lambda}}\aug{\ddot{\bU}}_{k+1}^* -\oZ)\aug{\bm{U}}_{k+1}\bm{\Lambda} ^{-1}\aug{\bm{U}}_{k+1}^* \cr
&~~(\aug{\bm{U}}_{k+1}\bm{\Lambda}\aug{\bm{U}}_{k+1}^*+\aug{\ddot{\bU}}_{k+1}\ddot{\bm{\Lambda}}\aug{\ddot{\bU}}_{k+1}^* -\oZ)-\oL\|_2  \cr
=&~\|\aug{\bm{U}}_{k+1}\bm{\Lambda}\aug{\bm{U}}_{k+1}^*-\oL-\aug{\bm{U}}_{k+1}\aug{\bm{U}}_{k+1}^*\oZ-\oZ\aug{\bm{U}}_{k+1}\aug{\bm{U}}_{k+1}^*\cr
&~~+\oZ\aug{\bm{U}}_{k+1}\bm{\Lambda} ^{-1}\aug{\bm{U}}_{k+1}^*\oZ\|_2  \cr
\leq&~ \|\oZ-\aug{\ddot{\bU}}_{k+1}\ddot{\bm{\Lambda}}\aug{\ddot{\bU}}_{k+1}^*\|_2 +2\|\oZ\|_2+\frac{\|\oZ\|_2^2}{\sigma_r}  \cr
\leq&~ \|\aug{\ddot{\bU}}_{k+1}\ddot{\bm{\Lambda}}\aug{\ddot{\bU}}_{k+1}^*\|_2 +3\|\oZ\|_2+\frac{\|\bZ\|_2^2}{\sigma_r}  \cr
\leq&~\|\aug{\ddot{\bU}}_{k+1}\ddot{\bm{\Lambda}}\aug{\ddot{\bU}}_{k+1}^*\|_2 +3\|\bZ\|_2+\|\bZ\|_2  \cr
\leq&~  \sigma_{r+1}+4\|\bZ\|_2 ~\leq  5\|\bZ\|_2,
\end{align*}
where the third inequality holds since $\frac{\|\bZ\|_2}{\sigma_r}\leq\frac{\tau}{1-\tau}<1$, and the last inequality follows from $\sigma_{r+1}=\sigma_{r+1}-\sigma_{r+1}^x\leq\|\bZ\|_2$ since $\cH(\bx)$ is a rank $r$ matrix. Together, by \eqref{eq:norm_of_Z 1} in Lemma~\ref{lemma:norm_of_Z},
\begin{align}  
\|\bm{Y}_0\|_\infty
&\leq \frac{\mu c_sr}{n}5\tau\gamma^{k+1}\sigma_r^x. \label{eq:Y0 bound}
\end{align}
Next, we derive the bound for $\bm{Y}_{ab}$. Note that
\begin{align*}
&\quad~ \|\bm{Y}_{ab}\|_\infty \cr
&=\max_{ij} |\bm{e}_i^T\oZ^a\oL\aug{\bm{U}}_{k+1}\bm{\Lambda} ^{-(a+b+1)}\aug{\bm{U}}_{k+1}^*\oL\oZ^b\bm{e}_j|  \cr
&=\max_{ij} |\bm{e}_i^T\oZ^a\aug{\bm{U}}\aug{\bm{U}}^*\oL\aug{\bm{U}}_{k+1}\bm{\Lambda} ^{-(a+b+1)}\aug{\bm{U}}_{k+1}^*\oL\aug{\bm{U}}\aug{\bm{U}}^*\oZ^b\bm{e}_j|  \cr
&\leq  \|\oZ^a\aug{\bm{U}}\|_{2,\infty}\|\oL\aug{\bm{U}}_{k+1}\bm{\Lambda} ^{-(a+b+1)}\aug{\bm{U}}_{k+1}^*\oL\|_2\|\oZ^b\aug{\bm{U}}\|_{2,\infty}  \cr
&\leq \frac{\mu c_sr}{n}( \sqrt{n}\|\oZ\|_{2,\infty})^{a+b} \|\oL\aug{\bm{U}}_{k+1}\bm{\Lambda} ^{-(a+b+1)}\aug{\bm{U}}_{k+1}^*\oL\|_2,
\end{align*}
where the last inequality follows from \citep[Lemma~9]{cai2017accelerated}, which states that
\[
\|\hatZ^a\aug{\bm{U}}\|_{2,\infty}\leq \sqrt{\frac{\mu c_sr}{n}}(\sqrt{n}\|\hatZ\|_{2,\infty})^a
\]
holds for all $a\geq 0$. On the other hand, 
\begin{align*}
&~\|\oL\aug{\bm{U}}_{k+1}\bm{\Lambda} ^{-(a+b+1)}\aug{\bm{U}}_{k+1}^*\oL\|_2 \cr
=&~ \|\aug{\bm{U}}_{k+1}\bm{\Lambda} ^{-(a+b-1)}\aug{\bm{U}}_{k+1}^*-\aug{\bm{U}}_{k+1}\bm{\Lambda} ^{-(a+b)}\aug{\bm{U}}_{k+1}^*\oZ\cr
 &~-\oZ\aug{\bm{U}}_{k+1}\bm{\Lambda} ^{-(a+b)}\aug{\bm{U}}_{k+1}^*
+\oZ\aug{\bm{U}}_{k+1}\bm{\Lambda} ^{-(a+b+1)}\aug{\bm{U}}_{k+1}^*\oZ\|_2 \cr
\leq&~ \sigma_r^{-(a+b-1)} + 2\sigma_r^{-(a+b)}\|\oZ\|_2+ \sigma_r^{-(a+b+1)}\|\oZ\|_2^2 \cr
=&~ \sigma_r^{-(a+b-1)}\left( 1+ \frac{2\|\bZ\|_2}{\sigma_r}+\frac{\|\bZ\|_2^2}{\sigma_r^2} \right)  \cr
=&~ \sigma_r^{-(a+b-1)}\left( 1+ \frac{\|\bZ\|_2}{\sigma_r} \right)^2  \cr
\leq&~ \sigma_r^{-(a+b-1)}\left(  \frac{1}{1-\tau} \right)^2
\leq\left( \frac{1}{1-\tau} \right)^2 \left((1-\tau)\sigma_r^x\right)^{-(a+b-1)},
\end{align*}
where the second inequality follows from $\frac{\|\bZ\|_2}{\sigma_r}\leq\frac{\tau}{1-\tau}$, and the last inequality follows from Lemma~\ref{lemma:Bound_eigenvalues}. Together with \eqref{eq:norm_of_Z 2} in Lemma~\ref{lemma:norm_of_Z}, we have
\begin{align*}
\sum_{a+b\geq1}\|\bm{Y}_{ab}\|_\infty 
&\leq \frac{\mu c_sr}{n}\left( \frac{1}{1-\tau} \right)^2 \upsilon \gamma^{k}\sigma_r^x  \sum_{a+b\geq1} \left(  \frac{\upsilon}{1-\tau}\right)^{a+b-1}\cr
&\leq \frac{\mu c_sr}{n}\left( \frac{1}{1-\tau} \right)^2 \upsilon \gamma^{k}\sigma_r^x  \cdot 2\left( \frac{1}{1-\frac{\upsilon}{1-\tau}}\right)^2\cr
&=\frac{\mu c_sr}{n} \frac{2\upsilon}{(1-\tau-\upsilon)^2} \gamma^{k}\sigma_r^x,
\end{align*}
where $\upsilon<1-\tau$ is satisfied when the constant hidden in the bound of $\alpha$ is small enough. Finally, combining with \eqref{eq:Y0 bound} gives
\begin{align*}
\|\bL-\bL_{k+1}\|_\infty &= \|\hatL-\hatL_{k+1}\|_\infty \leq \|\bm{Y}_0\|_\infty + \sum_{a+b\geq1} \|\bm{Y}_{ab}\|_\infty  \cr
&\leq \left(5\tau\gamma+ \frac{2\upsilon}{(1-\tau-\upsilon)^2}\right)\frac{\mu c_sr}{n} \gamma^{k}\sigma_r^x\cr
&\leq \frac{1-2\tau}{2}\frac{\mu c_sr}{n} \gamma^{k+1}\sigma_r^x =\frac{\tau(1-2\tau)}{8\alpha  \kappa n} \gamma^{k+1}\sigma_r^x,
\end{align*}
where the third inequality uses $\gamma\geq\frac{4\upsilon}{(1-12\tau)(1-\tau-\upsilon)^2}$ and the last step uses the definition of $\tau$.
\end{proof}

\begin{lemma}  \label{lemma:keep incoherence of L}
Let $\bx,\bs\in\C^n$ satisfy Assumptions \nameref{assume:Inco} and \nameref{assume:Sparse}, respectively.
If $\bL_k$ is $4\mu\kappa^2$-incoherent and
\[
\|\bL-\bL_k\|_2 \leq 2\tau \gamma^k\sigma_r^x,\quad
\|\bx-\bx_k\|_\infty \leq \frac{\tau-2\tau^2}{8\alpha  \kappa n} \gamma^k\sigma_r^x,
\]
then $\bL_{k+1}$ is also $4\mu \kappa^2$-incoherent.
\end{lemma}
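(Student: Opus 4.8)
The plan is to bound the row norms of the augmented eigenvector matrix $\aug{\bm{U}}_{k+1}$ of $\oL_{k+1}$, aiming for $\|\aug{\bm{U}}_{k+1}\|_{2,\infty}\leq\sqrt{4\mu c_s r\kappa^2/n}=2\kappa\sqrt{\mu c_s r/n}$. By Lemma~\ref{lemma:augmentation}, the top and bottom blocks of $\aug{\bm{U}}_{k+1}$ reproduce the rows of $\bU_{k+1}$ and $\bV_{k+1}$ respectively (the $\tfrac{1}{\sqrt2}$ scaling is cancelled by the doubled columns), so a single bound on $\|\aug{\bm{U}}_{k+1}\|_{2,\infty}$ controls both incoherence conditions on $\bL_{k+1}$ simultaneously.

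First I would reuse the power-series representation of the perturbed eigenvectors already derived in the proof of Lemma~\ref{lemma:Bound_of_L-L_k_infty_norm}, written compactly as
\[
\aug{\bm{U}}_{k+1} = \sum_{l\geq 0} (\oZ)^l\, \oL\, \aug{\bm{U}}_{k+1}\, \bm{\Lambda}^{-(l+1)},
\]
which is valid since $\|\oZ\|_2/\sigma_i\leq \tau/(1-\tau)<1$ for each relevant $i$. Taking $\|\cdot\|_{2,\infty}$ and applying the triangle inequality, I would factor each summand using $\oL\,\aug{\bm{U}}_{k+1}=\aug{\bm{U}}\,\aug{\bm{U}}^*\oL\,\aug{\bm{U}}_{k+1}$ (from $\oL=\aug{\bm{U}}\aug{\bm{U}}^*\oL$, property~(3) of Lemma~\ref{lemma:augmentation}), separating the incoherent part from the spectral part:
\[
\|(\oZ)^l\oL\aug{\bm{U}}_{k+1}\bm{\Lambda}^{-(l+1)}\|_{2,\infty}\leq \|(\oZ)^l\aug{\bm{U}}\|_{2,\infty}\,\|\aug{\bm{U}}^*\oL\aug{\bm{U}}_{k+1}\bm{\Lambda}^{-(l+1)}\|_2.
\]

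Each factor is then controlled by results in hand. For the first, Lemma~9 of \citep{cai2017accelerated} gives $\|(\oZ)^l\aug{\bm{U}}\|_{2,\infty}\leq\sqrt{\mu c_s r/n}\,(\sqrt{n}\|\oZ\|_{2,\infty})^l$, and \eqref{eq:norm_of_Z 2} supplies $\sqrt{n}\|\oZ\|_{2,\infty}\leq\upsilon\gamma^k\sigma_r^x$. For the second, I would use $\|\oL\|_2=\sigma_1^x$ (property~(1) of Lemma~\ref{lemma:augmentation}) and $\|\bm{\Lambda}^{-1}\|_2=\sigma_r^{-1}\leq((1-\tau)\sigma_r^x)^{-1}$, the latter from Weyl's inequality together with \eqref{eq:norm_of_Z 1} exactly as before. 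Summing the resulting geometric series (convergent because $\upsilon\gamma^k<1-\tau$) yields
\[
\|\aug{\bm{U}}_{k+1}\|_{2,\infty}\leq\sqrt{\frac{\mu c_s r}{n}}\,\frac{\kappa}{1-\tau}\sum_{l\geq0}\left(\frac{\upsilon\gamma^k}{1-\tau}\right)^l=\sqrt{\frac{\mu c_s r}{n}}\,\frac{\kappa}{1-\tau-\upsilon\gamma^k},
\]
where the factor $\kappa=\sigma_1^x/\sigma_r^x$ entering through the spectral term is precisely what forces the incoherence constant to scale like $\kappa^2$.

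It then remains to verify $\kappa/(1-\tau-\upsilon\gamma^k)\leq2\kappa$, i.e. $2\tau+2\upsilon\gamma^k\leq1$. Since $\gamma^k\leq1$ and, under Assumption~\nameref{assume:Sparse}, $\tau\lesssim\mathcal{O}(1/\mu c_s r\kappa)$ and $\upsilon\lesssim\mathcal{O}(1)$, this holds once the absolute constant hidden in the bound on $\alpha$ is taken sufficiently small. I expect the main difficulty to be bookkeeping rather than conceptual: faithfully tracking the condition number $\kappa$ as it enters through $\sigma_1^x/\sigma_r$ in the spectral factor, and confirming that the appeal to Lemma~9 of \citep{cai2017accelerated} transfers to the augmented Hankel setting, which it does because $\aug{\bm{U}}$ is $\mu$-incoherent and $\oZ$ is the augmentation of a Hermitian matrix of the required form.
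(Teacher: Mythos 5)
Your proof is correct, and it takes a genuinely more direct route than the paper's. The paper never expands $\aug{\bm{U}}_{k+1}$ itself: it re-runs the decomposition $\hatL_{k+1}-\hatL=\bm{Y}_0+\sum_{a+b\geq1}\bm{Y}_{ab}$ from Lemma~\ref{lemma:Bound_of_L-L_k_infty_norm} in the $(2,\infty)$ norm (a double series in $a,b$, because $\hatL_{k+1}=\aug{\bm{U}}_{k+1}\bLa\aug{\bm{U}}_{k+1}^*$ is quadratic in the eigenvector expansion), concludes $\|\bL-\bL_{k+1}\|_{2,\infty}\leq\frac{1}{2}\sqrt{\mu c_s r/n}\,\sigma_r^x$, adds $\|\bL\|_{2,\infty}\leq\sqrt{\mu c_s r/n}\,\sigma_1^x$ to control $\|\bL_{k+1}\|_{2,\infty}$, and only then converts this into incoherence of the factors via $\bU_{k+1}=\bL_{k+1}\bV_{k+1}\bm{\Sigma}_{k+1}^{-1}$ together with $\sigma_r(\bL_{k+1})\geq\frac{3}{4}\sigma_r^x$ (Lemmas~\ref{lemma:Bound_eigenvalues} and \ref{lemma:Bound_of_L-L_k_2_norm}). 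Your expansion of the eigenvector matrix is linear in the Neumann series, so you need only a single geometric sum, you skip both the $\bm{Y}_0$ spectral estimate and the final matrix-to-factor conversion, and $\kappa$ enters transparently as $\|\oL\|_2/\sigma_r$. Both arguments rest on the same pillars: validity of the expansion ($\|\oZ\|_2/\sigma_r\leq\tau/(1-\tau)<1$), Lemma~9 of \citep{cai2017accelerated} applied to $\oZ^l\aug{\bm{U}}$, the two estimates of Lemma~\ref{lemma:norm_of_Z}, and a sufficiently small hidden constant in the bound on $\alpha$. What the paper's detour buys is economy of presentation: given Lemma~\ref{lemma:Bound_of_L-L_k_infty_norm}, its proof is nearly free, whereas yours, though shorter in total, requires a fresh chain of estimates. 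One shared imprecision, not specific to you: \eqref{eq:norm_of_Z 2} bounds the row norms of $\bZ$, while both proofs feed it into a bound on $\|\oZ\|_{2,\infty}$, whose rows are rows of $\bZ$ or of $\bZ^*$; the transposed bound holds by the symmetric argument, but neither proof records this.
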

\begin{proof}
Following the proof and notations of Lemma~\ref{lemma:Bound_of_L-L_k_infty_norm}, we can similarly show
\begin{align*}
\|\hatL_{k+1}-\hatL\|_{2,\infty} &\leq \|\bm{Y}_0\|_{2,\infty} + \sum_{a+b\geq1} \|\bm{Y}_{ab}\|_{2,\infty},
\end{align*}
where
\begin{align*}
\|\bm{Y}_0\|_{2,\infty}
&\leq \sqrt{\frac{\mu c_s r}{n}}\|\oL\aug{\bm{U}}_{k+1}\bm{\Lambda} ^{-1}\aug{\bm{U}}_{k+1}^*\oL-\oL\|_2 \cr
 &\leq \sqrt{\frac{\mu c_sr}{n}}5\tau\gamma^{k+1}\sigma_r^x, \numberthis\label{eq:Y0 2,inf bound}
\end{align*}
and
\begin{align*}
    &\quad~\|\bm{Y}_{ab}\|_{2,\infty}\cr
    &\leq \sqrt{\frac{\mu c_sr}{n}}( \sqrt{n}\|\oZ\|_{2,\infty})^{a+b} \|\oL\aug{\bm{U}}_{k+1}\bm{\Lambda} ^{-(a+b+1)}\aug{\bm{U}}_{k+1}^*\oL\|_2
\end{align*}
since $\|\hatZ\|_2\leq\sqrt{n}\|\hatZ\|_{2,\infty}$.
Furthermore, we can show
\begin{align*}
\sum_{a+b\geq1}\|\bm{Y}_{ab}\|_{2,\infty} 
    &\leq\sqrt{\frac{\mu c_sr}{n}} \frac{2\upsilon}{(1-\tau-\upsilon)^2}\gamma^{k}\sigma_r^x. 
\end{align*}
Combining with \eqref{eq:Y0 2,inf bound}, we get
\begin{align*}
    \|\bL-\bL_{k+1}\|_{2,\infty} \leq  \frac{1-2\tau}{2}\sqrt{\frac{\mu c_sr}{n}} \gamma^{k+1}\sigma_r^x\leq \frac{1}{2} \sqrt{\frac{\mu c_sr}{n}} \sigma_r^x.
\end{align*}
Since $\bL=\cH(\bx)$ is $\mu$-incoherent, $\|\bL\|_{2,\infty}\leq \sqrt{\frac{\mu c_sr}{n}} \sigma_1^x$, which implies
\begin{align*}
    \|\bL_{k+1}\|_{2,\infty} \leq  \frac{3}{2}\sqrt{\frac{\mu c_sr}{n}} \sigma_1^x.
\end{align*}
Let $\bU_{k+1}\bm{\Sigma}_{k+1}\bV_{k+1}^*$ be the SVD of $\bL_{k+1}$. We obtain
\begin{align*}
    \|\bU_{k+1}\|_{2,\infty}&=\|\bL_{k+1}\bV_{k+1}\bm{\Sigma}_{k+1}^{-1}\|_{2,\infty} \cr
    &\leq \frac{3}{2}\sqrt{\frac{\mu c_sr}{n}} \frac{\sigma_1^x}{\sigma_r}
    ~\leq 2\kappa\sqrt{\frac{\mu c_sr}{n}}, 
\end{align*}
where the last step uses lemmas~\ref{lemma:Bound_eigenvalues} and \ref{lemma:Bound_of_L-L_k_2_norm}, ensuring that $\sigma_r \geq \frac{3}{4}\sigma_r^x$. Similarly, we can also show
$
\|\bV_{k+1}\|_{2,\infty}
\leq 2\kappa\sqrt{\frac{\mu c_sr}{n}} 
$.
We conclude that $\bL_{k+1}$ is $4\mu\kappa^2$-incoherent.
\end{proof}

Now, we have all the ingredients for the proof of Theorem~\ref{thm:local convergence}, which shows the local linear convergence of Algorithm~\ref{Algo:Algo1}.
\begin{proof} [Proof of Theorem~\ref{thm:local convergence}] This theorem will be proved by  mathematical induction.\\
\textbf{Base Case:} When $k=0$, the base case is satisfied by the assumption on the initialization.\\
\textbf{Induction Step:} Assume that $\bL_k$ is $4\mu\kappa^2$-incoherent,
\[
\|\bL-\bL_k\|_2 \leq 2\tau \gamma^k\sigma_r^x,\quad
\|\bx-\bx_k\|_\infty \leq \frac{\tau-2\tau^2}{8\alpha  \kappa n} \gamma^k\sigma_r^x,
\]
at the $k$-th iteration. At the $(k+1)$-th iteration. It follows directly from lemmas~\ref{lemma:Bound_of_L-L_k_2_norm}, \ref{lemma:Bound_of_L-L_k_infty_norm} and \ref{lemma:keep incoherence of L} that $\bL_{k+1}$ is also $4\mu\kappa^2$-incoherent, and
\begin{align*}
\|\bL-\bL_{k+1}\|_2 \leq 2\tau \gamma^{k+1}\sigma_r^x, \end{align*}
\begin{align*}
\|\bx-\bx_{k+1}\|_\infty \leq \frac{\tau-2\tau^2}{8\alpha  \kappa n} \gamma^{k+1}\sigma_r^x,
\end{align*}
which completes the proof. Additionally, notice that we overall require
$1>\gamma\geq\max\{4\tau + \sqrt{1/12},\frac{4\upsilon}{(1-12\tau)(1-\tau-\upsilon)^2}\}$. 
By the definition of $\tau$ and $\upsilon$, one can easily see that the lower bound approaches $\sqrt{1/12}$ when the constant hidden in the bound of $\alpha$ is sufficiently small. Therefore,  the theorem can be proved for any $\gamma\in\left(\frac{1}{\sqrt{12}},1\right)$.
\end{proof}

\subsection{Initialization}

Finally, we show Algorithm~\ref{Algo:Init1} provides sufficient initialization for the local convergence. 

\begin{proof} [Proof of Theorem~\ref{thm:initialization bound}]
The proof is partitioned into several parts.\\
\noindent\textbf{Part 1:} Note that 
\[
\|\cH(\bx)\|_\infty 
\leq  \|\bU\|_{2,\infty}\|\bm{\Sigma}\|_2\|\bV\|_{2,\infty}\leq\frac{\mu c_s r}{n}\sigma_1^x,
\]
where the last inequality follows from the assumption that $\cH(\bx)$ is $\mu$-incoherent.
Thus, with the choice of $\beta_{init}\geq\frac{\mu c_s r\sigma_1^x}{n\sigma_1(\cH(\bz))}$, we have
\begin{equation*}
\|\bx\|_{\infty}=
\|\cH(\bx)\|_\infty\leq \beta_{init}\sigma_1(\cH(\bz)) = \zeta_{0}.
\end{equation*}
From here, following the proof in Lemma~\ref{lemma:Bound_of_S-S_k}, we can conclude \begin{align}
    &\supp(\bs_0)\subseteq\supp(\bs), \cr
    &\|\bs-\bs_{0}\|_\infty \leq \|\bx\|_\infty+\zeta_0 
    \leq\frac{4\mu c_s r}{n}\sigma_1^x,  \label{eq:SminusS}
\end{align}
where the last inequality follows from $\beta_{init}\leq\frac{3\mu c_s r\sigma_1^x}{n\sigma_1(\cH(\bz))}$, which implies $\zeta_{0}\leq \frac{3\mu c_s r}{n}\sigma_1^x$.

\noindent\textbf{Part 2:} To bound the approximation error of $\bL_0$  to $\bL=\cH(\bx)$ in terms of the spectral norm, note that
\begin{align*}
\|\bL-\bL_0\|_2 & = \|\cH(\bx)-\mathcal{D}_r\cH(\bz-\bs_{0})\|_2 \cr
        & \leq 2 \|\cH(\bx)-\cH(\bz-\bs_{0})\|_2 
        ~= 2 \|\cH(\bs-\bs_{0})\|_2,
\end{align*}
where the inequality holds since $\mathcal{D}_r\cH(\bz-\bs_{0})$ is the best rank $r$ approximation to $\cH(\bz-\bs_{0})$. By Lemma~\ref{lemma:bound of sparse matrix},
\begin{equation}\label{eq:norm:L-L0}
\|\bL-\bL_0\|_2\leq 8\alpha \mu c_s r\sigma_1^x=2\tau\sigma_r^x.
\end{equation}
This proves the first claim of Theorem~\ref{thm:initialization bound}.

\noindent\textbf{Part 3:}
For $\|\bx-\bx_{0}\|_\infty$, it is upper bounded by $\|\bL-\bL_0\|_{\infty}$. 
Note that $\cH(\bz-\bs_{0})=\bL+\cH(\bs-\bs_{0})$. Let $\sigma_i$ denote the $i$-th singular value of $\cH(\bz-\bs_{0})$. Applying  Weyl's inequality together with Lemma~\ref{lemma:bound of sparse matrix}, we get
\begin{equation}   \label{init:eq:eigenvalues bound 0}
|\sigma_i^x-\sigma_i| \leq \|\cH(\bs-\bs_{0})\|_2 \leq \alpha n\|\bs-\bs_{0}\|_\infty \leq\tau\sigma_r^x
\end{equation}
holds for all $i$. Consequently, we have 
\begin{align*} 
&(1-\tau)\sigma_i^x \leq \sigma_i \leq (1+\tau)\sigma_i^x,\qquad \forall 1\leq i\leq r.\numberthis  \label{init:eq:eigenvalues bound 1}
\end{align*}
Consider the augmented matrix of $\cH(\bz-\bs_{0})$,
\begin{align*}
    \hatcH(\bz-\bs_{0})&=\left[\begin{array}{cc}
    \bO & \cH(\bz-\bs_{0}) \\
    \cH(\bz-\bs_{0})^* & \bO 
    \end{array}\right] \cr
    &:= \aug{\bm{U}}_0\bLa\aug{\bm{U}}_0^*+\aug{\ddot{\bU}}_0\ddot{\bLa}\aug{\ddot{\bU}}_0^*,
\end{align*}
where the eigen-decomposition is derived as in Lemma~\ref{lemma:augmentation}.
Denote $\bZ=\cH(\bz-\bs_{0})-\bL=\cH(\bs-\bs_{0})$. Following the proof in Lemma~\ref{lemma:Bound_of_L-L_k_infty_norm}, we get
\begin{align*}
&\quad~\|\hatL_0-\hatL\|_\infty
= \|\aug{\bm{U}}_0\bm{\Lambda}\aug{\bm{U}}_0^* -\oL\|_\infty \cr
&\leq \|\oL\aug{\bm{U}}_0\bm{\Lambda}^{-1}\aug{\bm{U}}_0^*\oL-\oL\|_\infty \cr
&\quad~ + \sum_{a+b\geq1} \|(\oZ)^a\oL\aug{\bm{U}}_0\bm{\Lambda}^{-(a+b+1)}\aug{\bm{U}}_0^*\oL(\oZ)^b \|_\infty \cr
&:= \|\bm{Y}_0\|_\infty + \sum_{a+b\geq1} \|\bm{Y}_{ab}\|_\infty,
\end{align*}
For $\bm{Y}_0$,
\begin{equation}  \label{init:eq:Y0 bound}
\|\bm{Y}_0\|_\infty\leq \frac{5\mu c_s r}{n} \|\oZ\|_2\leq 5\alpha\mu c_sr \|\oZ\|_\infty,
\end{equation}
where the last inequality is due to Lemma~\ref{lemma:bound of sparse matrix}, and $\|\oZ\|_\infty=\|\bs-\bs_0\|_\infty$. For $\bm{Y}_{ab}$,
\begin{align*}
\|\bm{Y}_{ab}\|_\infty
      &\leq \|\oZ^a\aug{\bm{U}}\|_{2,\infty}\|\oL\aug{\bm{U}}_0\bm{\Lambda} ^{-(a+b+1)}\aug{\bm{U}}_0^*\oL\|_2\|\oZ^b\aug{\bm{U}}\|_{2,\infty}.
\end{align*}
By Lemma~\ref{init:lemma:bound_power_vector_norm_with_incoherence}, we have
\begin{align*}
\|\bm{Y}_{ab}\|_\infty&\leq \frac{\mu c_s r}{n}(\alpha n\|\oZ\|_\infty)^{a+b}\|\oL\aug{\bm{U}}_0\bm{\Lambda} ^{-(a+b+1)}\aug{\bm{U}}_0^*\oL\|_2.
\end{align*}
Similar to Lemma~\ref{lemma:Bound_of_L-L_k_infty_norm}, we can show 
\begin{align*}
    \|\oL\aug{\bm{U}}_0\bm{\Lambda} ^{-(a+b+1)}\aug{\bm{U}}_0^*\oL\|_2&\leq \frac65\sigma_r^{-(a+b-1)} \cr
    &\leq \frac65\left((1-\tau)\sigma_r^x\right)^{-(a+b-1)}.
\end{align*}
Hence, we have
\begin{align*}
\sum_{a+b\geq1}\|\bm{Y}_{ab}\|_\infty\
&\leq  \frac65\alpha\mu c_s r\|\oZ\|_\infty\sum_{a+b\geq1}\left(\frac{\alpha n\|\oZ\|_\infty}{(1-\tau)\sigma_r^x}\right)^{a+b-1} \cr
&\leq  \frac65\alpha\mu c_s r\|\oZ\|_\infty\sum_{a+b\geq1}\left(\frac{\tau}{1-\tau}\right)^{a+b-1} \cr
&\leq  3\alpha\mu c_s r\|\oZ\|_\infty.
\end{align*}
Finally, combining \eqref{init:eq:Y0 bound} and above yields
\begin{align*}
\|\bL_0-\bL\|_\infty &=\|\hatL_0-\hatL\|_\infty\leq \|\bm{Y}_0\|_\infty + \sum_{a+b\geq1} \|\bm{Y}_{ab}\|_\infty  \cr
                     &\leq 5\alpha \mu c_sr \|\oZ\|_\infty + 3\alpha \mu c_sr \|\oZ\|_\infty \cr
                     &=    8\alpha \mu c_sr \|\bs-\bs_0\|_\infty 
              \leq \frac{\tau-2\tau^2}{8\alpha  \kappa n} \sigma_r^x,
\end{align*}
where the last step uses \eqref{eq:SminusS}, 
and the bound of $\alpha$ in Assumption \nameref{assume:Sparse}. 
The second claim of Theorem~\ref{thm:initialization bound} is then proved.

\noindent\textbf{Part 4:} 
Following the proof and notation in part 3, and similar to Lemma~\ref{lemma:keep incoherence of L}, we can get
\begin{align*}
\|\bL_0-\bL\|_{2,\infty} &\leq \|\bm{Y}_0\|_{2,\infty} + \sum_{a+b\geq1} \|\bm{Y}_{ab}\|_{2,\infty}  \cr
                     &\leq 5\alpha \sqrt{\mu c_s r n}\|\oZ\|_\infty + 3\alpha \sqrt{\mu c_s r n}\|\oZ\|_\infty \cr
                     &=  8\alpha \sqrt{\mu c_sr n}\|\bs-\bs_0\|_\infty
             \leq \frac{1}{2}\sqrt{\frac{\mu c_sr}{n}}\sigma_r^x.
\end{align*}
This implies
\begin{align*}
    \|\bL_0\|_{2,\infty}\leq \|\bL\|_{2,\infty} + \|\bL_0-\bL\|_{2,\infty}
    \leq  \frac{3}{2}\sqrt{\frac{\mu c_sr}{n}}\sigma_1^x.
\end{align*}
Let $\bU_0\bm{\Sigma}_0\bV_0^*$ be the SVD of $\bL_0$. Hence, by \eqref{init:eq:eigenvalues bound 0}, we have
\begin{align*}
    \|\bU_0\|_{2,\infty}&=\|\bL_0\bV_0\bm{\Sigma}_0^{-1}\|_{2,\infty} \leq \frac{3}{2}\sqrt{\frac{\mu c_sr}{n}} \frac{\sigma_1^x}{\sigma_r}
    \leq 2\kappa\sqrt{\frac{\mu c_sr}{n}} .
\end{align*}
Similarly, we can also show
$
    \|\bV_0\|_{2,\infty}
    \leq 2\kappa\sqrt{\frac{\mu c_sr}{n}} 
$. 
Hence, we conclude $\bL_0$ is $4\mu\kappa^2$-incoherent.
\end{proof}
\section{Conclusion} \label{sec:conclusion}
In this paper, we propose a highly efficient non-convex algorithm, dubbed ASAP, to achieve the robust recovery of low-rank Hankel matrices, with application to corrupted spectrally sparse signals. Guaranteed exact recovery with a linear convergence rate has been established for ASAP. Numerical experiments, compared with convex and non-convex methods in the literature, confirm its computational efficiency and robustness to corruptions. The experiments also suggest the derived tolerance of corruptions is highly pessimistic, and one possible further direction is to improve the theoretical analysis. It would also be interesting to theoretically justify the algorithm's exceptional robustness to noise in the presence of outliers, as observed in the experiment section. Another further research direction is to extend the algorithm and analysis to the missing data case. 


\bibliographystyle{IEEEtran}
\bibliography{IEEEabrv,ref}

\end{document}